\newtheorem{definition}{Definition}
\newtheorem{proposition}[definition]{Proposition}
\newtheorem{lemma}[definition]{Lemma}
\newtheorem{theorem}[definition]{Theorem}
\newtheorem{corollary}[definition]{Corollary}
\newtheorem{conjecture}[definition]{Conjecture}
\newtheorem{remark}[definition]{Remark}
\newtheorem{example}[definition]{Example}
\newtheorem{question}[definition]{Question}
\def\bcj{\begin{conjecture}}
\def\ecj{\end{conjecture}}
\def\bcr{\begin{corollary}}
\def\ecr{\end{corollary}}
\def\bd{\begin{definition}}
\def\ed{\end{definition}}
\def\bea{\begin{eqnarray}}
\def\eea{\end{eqnarray}}
\def\bem{\begin{enumerate}}
\def\eem{\end{enumerate}}
\def\bex{\begin{example}}
\def\eex{\end{example}}
\def\bim{\begin{itemize}}
\def\eim{\end{itemize}}
\def\bl{\begin{lemma}}
\def\el{\end{lemma}}
\def\bma{\begin{bmatrix}}
\def\ema{\end{bmatrix}}
\def\bpf{\begin{proof}}
\def\epf{\end{proof}}
\def\bpp{\begin{proposition}}
\def\epp{\end{proposition}}
\def\bqu{\begin{question}}
\def\equ{\end{question}}
\def\br{\begin{remark}}
\def\er{\end{remark}}
\def\bt{\begin{theorem}}
\def\et{\end{theorem}}
\def\squareforqed{\hbox{\rlap{$\sqcap$}$\sqcup$}}
\def\qed{\ifmmode\squareforqed\else{\unskip\nobreak\hfil
\penalty50\hskip1em\null\nobreak\hfil\squareforqed
\parfillskip=0pt\finalhyphendemerits=0\endgraf}\fi}
\def\endenv{\ifmmode\;\else{\unskip\nobreak\hfil
\penalty50\hskip1em\null\nobreak\hfil\;
\parfillskip=0pt\finalhyphendemerits=0\endgraf}\fi}
\newenvironment{proof}{\noindent \textbf{{Proof.~} }}{\qed}
\def\Dbar{\leavevmode\lower.6ex\hbox to 0pt
{\hskip-.23ex\accent"16\hss}D}
\def\url@leostyle{%
  \@ifundefined{selectfont}{\def\UrlFont{\sf}}{\def\UrlFont{\small\ttfamily}}}
\def\bcj{\begin{conjecture}}
\def\ecj{\end{conjecture}}
\def\bcr{\begin{corollary}}
\def\ecr{\end{corollary}}
\def\bd{\begin{definition}}
\def\ed{\end{definition}}
\def\bea{\begin{eqnarray}}
\def\eea{\end{eqnarray}}
\def\bem{\begin{enumerate}}
\def\eem{\end{enumerate}}
\def\bex{\begin{example}}
\def\eex{\end{example}}
\def\bim{\begin{itemize}}
\def\eim{\end{itemize}}
\def\bl{\begin{lemma}}
\def\el{\end{lemma}}
\def\bpf{\begin{proof}}
\def\epf{\end{proof}}
\def\bpp{\begin{proposition}}
\def\epp{\end{proposition}}
\def\bqu{\begin{question}}
\def\equ{\end{question}}
\def\br{\begin{remark}}
\def\er{\end{remark}}
\def\bt{\begin{theorem}}
\def\et{\end{theorem}}
\def\btb{\begin{tabular}}
\def\etb{\end{tabular}}
\newcommand{\etal}{{\sl et~al.}}
\newcommand{\nc}{\newcommand}
\def\a{\alpha}
\def\b{\beta}
\def\g{\gamma}
\def\d{\delta}
\def\t{\theta}
\def\p{\pi}
\def\s{\sigma}
\def\ph{\varphi}
 \nc{\bbA}{\mathbb{A}} \nc{\bbB}{\mathbb{B}} \nc{\bbC}{\mathbb{C}}
 \nc{\bbD}{\mathbb{D}} \nc{\bbE}{\mathbb{E}} \nc{\bbF}{\mathbb{F}}
 \nc{\bbG}{\mathbb{G}} \nc{\bbH}{\mathbb{H}} \nc{\bbI}{\mathbb{I}}
 \nc{\bbJ}{\mathbb{J}} \nc{\bbK}{\mathbb{K}} \nc{\bbL}{\mathbb{L}}
 \nc{\bbM}{\mathbb{M}} \nc{\bbN}{\mathbb{N}} \nc{\bbO}{\mathbb{O}}
 \nc{\bbP}{\mathbb{P}} \nc{\bbQ}{\mathbb{Q}} \nc{\bbR}{\mathbb{R}}
 \nc{\bbS}{\mathbb{S}} \nc{\bbT}{\mathbb{T}} \nc{\bbU}{\mathbb{U}}
 \nc{\bbV}{\mathbb{V}} \nc{\bbW}{\mathbb{W}} \nc{\bbX}{\mathbb{X}}
 \nc{\bbZ}{\mathbb{Z}}
 \nc{\bA}{{\bf A}} \nc{\bB}{{\bf B}} \nc{\bC}{{\bf C}}
 \nc{\bD}{{\bf D}} \nc{\bE}{{\bf E}} \nc{\bF}{{\bf F}}
 \nc{\bG}{{\bf G}} \nc{\bH}{{\bf H}} \nc{\bI}{{\bf I}}
 \nc{\bJ}{{\bf J}} \nc{\bK}{{\bf K}} \nc{\bL}{{\bf L}}
 \nc{\bM}{{\bf M}} \nc{\bN}{{\bf N}} \nc{\bO}{{\bf O}}
 \nc{\bP}{{\bf P}} \nc{\bQ}{{\bf Q}} \nc{\bR}{{\bf R}}
 \nc{\bS}{{\bf S}} \nc{\bT}{{\bf T}} \nc{\bU}{{\bf U}}
 \nc{\bV}{{\bf V}} \nc{\bW}{{\bf W}} \nc{\bX}{{\bf X}}
 \nc{\bZ}{{\bf Z}}
\nc{\hA}{{\hat{A}}} \nc{\hB}{{\hat{B}}} \nc{\hC}{{\hat{C}}}
\nc{\hD}{{\hat{D}}} \nc{\hE}{{\hat{E}}} \nc{\hF}{{\hat{F}}}
\nc{\hG}{{\hat{G}}} \nc{\hH}{{\hat{H}}} \nc{\hI}{{\hat{I}}}
\nc{\hJ}{{\hat{J}}} \nc{\hK}{{\hat{K}}} \nc{\hL}{{\hat{L}}}
\nc{\hM}{{\hat{M}}} \nc{\hN}{{\hat{N}}} \nc{\hO}{{\hat{O}}}
\nc{\hP}{{\hat{P}}} \nc{\hR}{{\hat{R}}} \nc{\hS}{{\hat{S}}}
\nc{\hT}{{\hat{T}}} \nc{\hU}{{\hat{U}}} \nc{\hV}{{\hat{V}}}
\nc{\hW}{{\hat{W}}} \nc{\hX}{{\hat{X}}} \nc{\hZ}{{\hat{Z}}}
\nc{\hn}{{\hat{n}}}
\def\diag{\mathop{\rm diag}}
\def\ghz{\mathop{\rm GHZ}}
\def\lin{\mathop{\rm span}}
\def\w{\mathop{\rm W}}
\def\bigox{\bigotimes}
\def\lra{\leftrightarrow}
\def\ox{\otimes}
\def\beq{\begin{equation}}
\def\eeq{\end{equation}}
\def\bal{\begin{aligned}}
\def\eal{\end{aligned}}
\newcommand{\ket}[1]{|#1\rangle}
\newcommand{\proj}[1]{| #1\rangle\!\langle #1 |}
\newcommand{\ketbra}[2]{|#1\rangle\!\langle#2|}
\newcommand{\abs}[1]{|#1|}
\def\Dbar{\leavevmode\lower.6ex\hbox to 0pt
{\hskip-.23ex\accent"16\hss}D}
\begin{document}

\title{Classification of Schmidt-rank-two multipartite unitary gates by singular number}

\date{\today}


\author{Yi Shen}
\email[]{yishen@jiangnan.edu.cn}
\affiliation{School of Science, Jiangnan University, Wuxi, Jiangsu 214122, China}

\author{Lin Chen}
\email[]{linchen@buaa.edu.cn (corresponding author)}
\affiliation{School of Mathematical Sciences, Beihang University, Beijing 100191, China}
\affiliation{International Research Institute for Multidisciplinary Science, Beihang University, Beijing 100191, China}

\author{Li Yu}
\email[]{yuli@hznu.edu.cn}
\affiliation{School of Physics, Hangzhou Normal University, Hangzhou, Zhejiang 311121, China}

\begin{abstract}
The multipartite unitary gates are called genuine if they are not product unitary operators across any bipartition. We mainly investigate the classification of genuine multipartite unitary gates of Schmidt rank two, by focusing on the multiqubit scenario.
For genuine multipartite (excluding bipartite) unitary gates of Schmidt rank two, there is an essential fact that their Schmidt decompositions are unique. Based on this fact, we propose a key notion named as \emph{singular number} to classify the unitary gates concerned. The singular number is defined as the number of local singular operators in the Schmidt decomposition. We then determine the accurate range of singular number. For each singular number, we formulate the parametric Schmidt decompositions of genuine multiqubit unitary gates under local equivalence. 
Finally, we extend the study to three-qubit diagonal unitary gates due to the close relation between diagonal unitary gates and Schmidt-rank-two unitaries. We start with discussing two typical examples of Schmidt rank two, one of which is a fundamental three-qubit unitary gate, i.e., the CCZ gate. Then we characterize the diagonal unitary gates of Schmidt rank greater than two. We show that a three-qubit diagonal unitary gate has Schmidt rank at most three, and present a necessary and sufficient condition for such a unitary gate of Schmidt rank three. This completes the characterization of all genuine three-qubit diagonal unitary gates.
\end{abstract}

\maketitle

\tableofcontents

\section{Introduction}
\label{sec:int}

The unitary evolution realized by unitary operations is a fundamental dynamical process for a quantum system, and has been regarded as valuable physical resource \cite{Nielsen03}. Hence implementing multipartite unitary operations is a key task in quantum information processing. The unitary operation is also known as the unitary gate in a quantum circuit. In particular, multiqubit unitary gates play an essential role in both theory \cite{cy13,cy14,gaterecover2018} and experiment \cite{mcug2003,MGbenchmark2012,MGatoms2019}. Multipartite unitary gates are basically divided into local and nonlocal ones. Specifically, a multipartite unitary gate is called local when it is the tensor product of unitary operators locally acting on subsystems, i.e., it has Schmidt rank one by the operator Schmidt decomposition. Otherwise, it is called nonlocal \cite{cy14}. It is known that local unitary gates can be implemented only by local operations and classical communication (LOCC) with probability one, while nonlocal unitary gates cannot be realized in this way,  even if the probability is allowed to be close to zero \cite{pv98}. In this paper, we mainly investigate a kind of nonlocal unitary gates whose Schmidt rank (SR) is two.

Nonlocal unitary gates play a more powerful role than local unitaries in quantum computing \cite{dqc2007}, cryptography \cite{nlocalcrypt2011}, and so on, in virtue of the vital property that they can create quantum entanglement between distributed parties \cite{ejp00}. The entangling power \cite{cy16,cy16b} which quantifies the maximum output entanglement of a nonlocal unitary gate is an effective measure to evaluate how useful it is for quantum information processing. The simplest type of nonlocal unitary is the \emph{controlled unitary} gates. A bipartite unitary gate $U_{AB}$ is said controlled from system $A$ if it is in the form as $U_{AB}=\sum_{j}^m P_j\ox V_j$, where $P_j$'s are orthogonal projectors on system $A$ and $V_j$'s are unitaries on system $B$. The controlled $U_{AB}$ can be implemented by a simple nonlocal protocol \cite{ygc10} using a maximally entangled state of Schmidt rank $m$. In this sense the implementation of controlled unitaries is operational. Thus, decomposing the complicated unitary gates into the product of controlled unitary gates \cite{cy15} is an instructive way to implement general nonlocal unitaries. Moreover, controlled unitary gates are indispensable for quantum circuits of various uses. For example, the controlled NOT (CNOT) gate is essential to construct the universal two-qubit gate used in quantum computing \cite{Barenco95}. It has also been shown that controlled unitary gates are instrumental to generate multiqubit graph states for one-way quantum computing \cite{br01}, and to construct the mutually unbiased bases (MUBs) \cite{wpz11}.

The research on unitary gates of Schmidt rank two (SR-2) is the first step to study nonlocal unitaries, and becomes the foundation to further study nonlocal unitaries of SR greater than two \cite{cy14,cy14ap}. In addition, there are several widely used nonlocal unitary gates whose SR is two, e.g., the two-qubit CNOT gate and the three-qubit Toffoli gate \cite{Toffoli12}. Hence it is necessary to deeply understand multipartite unitary gates of SR-2 both theoretically and experimentally. There is a fundamental result from \cite{cy13} on the unitaries of SR-2, which builds a close connection to the above-mentioned controlled unitary gates. It states that every nonlocal unitary gate of SR-2 is locally equivalent to i) a fully controlled unitary (controlled from every party of the system), and ii) to a diagonal unitary \cite[Theorem 1]{cy13}. The former implies that one can implement any multipartite unitary gate of SR-2 by implementing some fully controlled unitary gate assisted with a sequence of local unitary gates. The latter reveals an essential structure for multipartite unitary gates of SR-2 under local equivalence.

Furthermore, it should be more useful to obtain specific Schmidt decompositions of SR-2 unitaries which are essentially diagonal unitaries under local equivalence. 
We note that some approximation algorithms to find the Schmidt decomposition under local unitary (LU) equivalence is in fact given in the approach of searching the local invertible operators to transform a three-qubit pure state in the GHZ class into the GHZ state \cite{aacj00}. The process of finding the Schmidt decomposition is equivalent to finding the local invertible operators to obtain a ``standard'' unitary in Schmidt form. These local invertible operators can actually be assumed to be local unitary operators, according to \cite[Theorem 7]{cy14}.
By comparing the diagonal unitary forms from two unitary gates of SR-2, we can easily determine whether the two gates are locally equivalent. Thus, to figure out the parametric Schmidt decompositions under local equivalence is also related to the classification of unitary operators.  
For two-qubit unitary gates, there is an essential characterization under LU equivalence. That is, any two-qubit unitary gate $W_{AB}$ is LU equivalent to the canonic form $\tilde{U}_{AB}$ \cite{kc01}:
\begin{eqnarray}
\label{eq:twoqubit} 
\tilde{U}_{AB}:=\sum^3_{j=0} c_j\s_j \ox \s_j,
\end{eqnarray}
where $\s_j$'s are the Pauli matrices. According to the discussion above, Eq. \eqref{eq:twoqubit} provides a complete classification of all two-qubit unitary gates under LU equivalence. The classification of nonlocal unitary gates has important practical significance \cite{dvc2002}. For example, since local unitary transformations do not change the entanglement, the nonlocal unitary gates in the same equivalence class share the same entangling power. Thus, in order to obtain the entangling power of two-qubit unitary gates it suffices to investigate that of the canonic $\tilde{U}_{AB}$ given by Eq. \eqref{eq:twoqubit} \cite{ylep2018}.
Nevertheless, the classification of multipartite unitary gates is much more complicated, especially for the \emph{genuine} multipartite unitary gates, i.e. those multipartite unitaries which are not product operators across any bipartition. 
Here, we focus on the classification of genuine multipartite unitary gates of SR-2 under local equivalence by formulating their Schmidt decompositions which can be assumed in a diagonal form.


In this paper, we introduce a key notion named as \emph{singular number} (SN) to classify genuine multipartite unitary gates of SR-2 under local equivalence. The concept of SN relies on an essential observation from Lemma \ref{le:linindep} that genuine multipartite unitary gates of SR-2 have unique Schmidt decomposition. In virtue of this uniqueness, SN is defined as the number of local singular operators in the Schmidt decomposition, see Definition \ref{def:SN}. As we know, the singularity of an operator is invariant after multiplying an invertible operator. It implies that we can use the SN as an indicator of classification under local equivalence. 
The classification method is specifically described in Theorem \ref{thm:clssification}. Then we determine the accurate range of SN in Lemma \ref{le:ubofk}, which also implies the number of inequivalent classes. We specifically depict the inequivalent classes in Fig. \ref{fig:classes}. Based on this, we consider the Schmidt decompositions of genuine multiqubit unitary gates possessing SN $k$ for each $k$. We begin with a subset of genuine three-qubit unitary gates of SR-2 in Lemma \ref{le:sr2b}. Then we discuss all genuine three-qubit unitary gates of SR-2, and formulate their parametric Schmidt decompositions under local equivalence for each SN $k$ in Theorem \ref{le:sr2=singular} (i) - (iv) respectively. One can readily understand the classification given by Theorem \ref{le:sr2=singular} from Table \ref{tab:3qubit}. Furthermore, we similarly discuss the $n$-qubit scenario for $n\geq 4$, and also formulate the parametric Schmidt decompositions under local equivalence for each SN $k$ in Theorem \ref{le:multi=sr2} (i) - (v) respectively. Analogously, we also illustrate the classification given by Theorem \ref{le:multi=sr2} in Table \ref{tab:nqubit}. Comparing Theorem \ref{le:sr2=singular} and Theorem \ref{le:multi=sr2} we discover the parametric Schmidt decompositions of $n$-qubit unitary gates for $n\geq 4$ are more constrained than that of three-qubit unitary gates.  
Finally, we extend the study to three-qubit diagonal unitary gates due to the close relation between diagonal unitary gates and SR-2 unitary gates. We start with discussing two typical examples of SR-2, which helps us better understand the essential difference between the bipartite scenario and multipartite scenarios, and the core role of SN in the classification. Next, we characterize the three-qubit diagonal unitary gates of SR greater than two in Lemma \ref{le:3qubitsr3}. By further indicating that a three-qubit diagonal unitary gate has SR at most three, and presenting a necessary and sufficient condition for such a unitary gate of SR-3 in Theorem \ref{le:3qubit}, we provide a complete characterization of genuine three-qubit diagonal unitary gates.

The remainder of this paper is organized as follows. In Sec. \ref{sec:pre} we introduce the preliminaries by clarifying the notations and presenting necessary definitions and useful results. In Sec. \ref{sec:sr2class} we introduce the key notion called the singular number, and propose a method by exploiting the SN to classify genuine multipartite unitary gates of SR-2 under local equivalence. In Sec. \ref{sec:mqubitclass}, we focus on the classification of multiqubit unitary gates using the SN. For each SN $k$, we formulate the parametric Schmidt decompositions of the unitary gates possesing SN $k$ under local equivalence. In Sec. \ref{sec:3qubidiagU}, we extend the investigation to three-qubit diagonal unitary gates by characterizing the unitary gates of SR greater than two. The concluding remarks are given in Sec. \ref{sec:con}. Finally, we provide the detailed proofs of several crucial lemmas in the appendices.

\section{Preliminaries}
\label{sec:pre}

First, we clarify some notations. Denote by ${\cal H}_1\otimes...\otimes{\cal H}_n\cong\bbC^{d_1}\otimes...\otimes\bbC^{d_n}$ the $n$-partite Hilbert space whose local dimensions are $d_i$'s. A unitary matrix $U$ acting on the $n$-partite Hilbert space represents an $n$-partite unitary gate. If every local dimension $d_j=2$, $U$ is specifically called an $n$-qubit unitary gate. 
A commonly used technique to characterize multipartite unitary gates is that any $n$-partite unitary gate $U$ can be regarded as a bipartite one acting on the bipartite space ${\cal H}_S\otimes{\cal H}_{\bar{S}}$ where $S\subset\{1,2,...,n\}$ and $\bar{S}=\{1,2,...,n\}\backslash S$. In this way, the Schmidt decomposition of bipartite $U$ can be written as $U=\sum_j (B_j)_S \otimes (C_j)_{\bar{S}}$, where $(B_j)_S$ are linearly independent, and $(C_j)_{\bar{S}}$ are also linearly independent. We say that the $S$ space of $U$ is the space spanned by $(B_j)_S$, and similarly the $\bar{S}$ space of $U$ is the space spanned by $(C_j)_S$. 
Obviously, the tensor product of two unitary gates is still a unitary gate, and the properties of the latter can be characterized by that of former two gates. Therefore, to be specific, we shall focus on those unitary gates which are not the tensor product of two unitary gates with respect to any bipartition of subsystems. In this paper we call such non-product unitary gates the genuine multipartite unitary gates for simplicity.



Second, we extend the notion of \emph{Schmidt rank} (SR) to multipartite operators. We refer to the \emph{operator Schmidt rank} of an $n$-partite operator $U$ as the minimum integer $r$ such that
\begin{eqnarray}
\label{eq:u=sumr}
U=\sum^r_{j=1}
A_{j,1}\otimes...\otimes A_{j,n-1}\otimes A_{j,n},
\end{eqnarray}
where for each $j$ the operators $A_{j,i}$ act on the $i$th subsystem respectively and all have the same size. We usually call the minimum integer $r$ the \emph{Schmidt rank} of $U$ in abbreviation when there is no ambiguity.
When all matrices in Eq. \eqref{eq:u=sumr} degenerate to vectors, the definition above reduces to the Schmidt rank of multipartite pure states. When $r$ in Eq. \eqref{eq:u=sumr} reaches the minimum, we call the form on the right-hand-side of Eq.~\eqref{eq:u=sumr} as a \emph{Schmidt decomposition} of the $n$-partite operator $U$. As we shall see, the Schmidt decomposition of a multipartite operator may not be unique. Therefore, as the foundation of our classification method, we discuss a sufficient and necessary condition on the unique Schmidt decomposition for multipartite unitary gates of SR-2 in Lemma~\ref{le:linindep}. For bipartite spaces, the Schmidt decomposition in some literature additionally requires the local operators on every subsystem are orthogonal (under the Hilbert-Schmidt inner product) to each other. Specifically, in Eq. \eqref{eq:u=sumr} with $n=2$, $A_{j,1}$ for $j=1,\cdots,r$ are mutually orthogonal, and $A_{j,2}$ for $j=1,\cdots,r$ are also mutually orthogonal. However, the notion of Schmidt decomposition in this paper does not have this additional requirement.

Third, we introduce two equivalence relations between two multipartite operators. One is the local unitary (LU) equivalence. We say that two $n$-partite operators $X$ and $Y$ are LU equivalent if there are two $n$-partite product unitary matrices $V=\bigox_{i=1}^n V_i$ and $W=\bigox_{i=1}^n W_i$ such that $Y=VXW$. Such an equivalence provides convenience to prepare quantum circuits in experiments, as the unitary gate $X$ can be implemented using the prepared unitary gate $Y$ assisted with local unitary gates $V$ and $W$. Another relation is more general, which is called the equivalence under stochastic local operations and classical communications (SLOCC), or SLOCC equivalence in short. We say that two $n$-partite operators $X$ and $Y$ are SLOCC equivalent if there are two $n$-partite product invertible matrices $V=\bigox_{i=1}^n V_i$ and $W=\bigox_{i=1}^n W_i$ such that $Y=VXW$. In fact, for unitary gates the SLOCC equivalence is the same as LU equivalence, see \cite[Theorem 7]{cy14}. Thus, the two terms are used interchangeably when studing multipartite unitary gates. Such two equivalence relations also apply to quantum states. Based on the SLOCC equivalence for states, we further clarify the notion of SLOCCa equivalence for multipartite operators in Definition~\ref{def:slocca} below.
 
\begin{definition}
\label{def:slocca}
(i) {\rm \textbf{(The state corresponding to a multipartite operator)}} Suppose an $n$-partite operator $U$ acts on the $d$-dimensional system that is composed of $n$ subsystems with dimensions $d_1,d_2,\dots,d_m$, respectively, where $\prod_{j=1}^n d_j=d$. Define the state corresponding to $U$ as 
\begin{equation}
\label{eq:U-state}
(U\ox I_d)\sum_{j=1}^d \ket{j}\ket{j}_{anc}, 
\end{equation}
where the subscript ``anc'' refers to the ancilla system, and $I_d$ is the identity operator on the $d$-dimensional system. Correspondingly, the ancilla system is also composed of $n$ subsystems with dimensions $d_1,d_2,\dots,d_n$, respectively, and the $j$-th ancilla subsystem is associated with the $j$-th original subsystem.

(ii) {\rm \textbf{(SLOCCa equivalence of two multipartite operators)}} Two multipartite operators are called SLOCCa equivalent, if and only if the two states corresponding to such two operators are SLOCC equivalent. One original subsystem and its associated local ancilla are counted as one subsystem for considering the SLOCC equivalence of states here.
\end{definition}

To better understand Definition \ref{def:slocca}, we give an example of the state corresponding to the two-qubit CNOT gate. The two-qubit CNOT gate reads as $\ketbra{0,0}{0,0}+\ketbra{0,1}{0,1}+\ketbra{1,0}{1,1}+\ketbra{1,1}{1,0}$
and its corresponding state reads as $\ket{0,0,0,0}+\ket{0,0,1,1}+\ket{1,1,0,1}+\ket{1,1,1,0}$ by Eq. \eqref{eq:U-state}, where the first and third qubits are the original qubits, and the second and fourth qubits are the ancilla qubits associated with the first and third qubits respectively. Then we use the SLOCCa equivalence to discuss the SR of the CNOT gate and that of the corresponding state. One can verify that this corresponding state is of SR-2 across the bipartition of the first two qubits versus the last two qubits, and thus the CNOT gate is of operator Schmidt rank two. In general, the Schmidt rank of an operator is equal to that of its corresponding state. This directly follows from Definition \ref{def:slocca}. This equality can also be trivially extended to multipartite operators and their corresponding states. Note that the SR for multipartite states is often called the tensor rank in some literature \cite{dvc2000}. The latter usually works for pure multipartite states.



Finally, we present two important lemmas for the purpose of characterizing multipartite unitary gates of SR-2. The following lemma physically reveals that each multipartite unitary gate of SR-2 is a controlled unitary gate controlled from every subsystem, also known as a fully controlled unitary gate. 

\begin{lemma} \cite[Theorem 1]{cy13}
\label{le:cohenli2013}  
Suppose $U\in{\cal B}(\bbC^{d_1}\otimes...\otimes\bbC^{d_n})$ is an $n$-partite unitary matrix of Schmidt rank two on the $n$ subsystems $A_1,...,A_n$. Then up to the switch of subsystems, we have $U=\sum^{d_1}_{i_1=1} ... \sum^{d_{n-1}}_{i_{n-1}=1}
\ketbra{a_{1,i_1}}{b_{1,i_1}}
\otimes ... \otimes \ketbra{a_{n-1,i_{n-1}}}{b_{n-1,i_{n-1}}}
\otimes U_{i_1,...,i_{n-1}}$, where $\{\ket{a_{j,i_j}},i_j=1,...,d_j\}$ and $\{\ket{b_{j,i_j}},i_j=1,...,d_j\}$ are two orthonormal bases in $\bbC^{d_j}$ for $j=1,...,n$.
\end{lemma}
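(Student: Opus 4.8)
The plan is to reduce the statement, via one‑party cuts together with the unitarity of $U$, to a bipartite fact about Schmidt‑rank‑two unitaries, and then to reassemble the local pieces. Concretely, I would first fix a global Schmidt decomposition $U=\bigotimes_{j=1}^{n}A_{j}^{(1)}+\bigotimes_{j=1}^{n}A_{j}^{(2)}$. If $U$ factors as a product unitary across some bipartition, one factor has Schmidt rank one (a product of local unitaries, already of the asserted form after the singular‑value decompositions of those factors) and the other has Schmidt rank two, so it suffices to treat genuine $U$. For genuine $U$ the pair $A_{j}^{(1)},A_{j}^{(2)}$ is linearly independent for every $j$, since a proportionality would make $U$ factor across $A_{j}\,|\,\overline{A_{j}}$; hence the $A_{j}$-space $\mathcal S_{j}:=\operatorname{span}\{A_{j}^{(1)},A_{j}^{(2)}\}$ is two‑dimensional, and it is precisely the $A_{j}$-factor space of $U$ viewed as a bipartite operator on the cut $A_{j}\,|\,\overline{A_{j}}$ — which is therefore unitary of Schmidt rank two.

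The heart of the matter is the bipartite claim: if $V=C_{1}\otimes D_{1}+C_{2}\otimes D_{2}$ is bipartite, unitary, of Schmidt rank two, then its $C$-space $\operatorname{span}\{C_{1},C_{2}\}$ is ``controlled‑diagonalizable'', meaning there are orthonormal bases $\{\ket{a_{l}}\},\{\ket{b_{l}}\}$ of the $C$-system with $C_{1}=\sum_{l}\lambda_{l}\ketbra{a_{l}}{b_{l}}$ and $C_{2}=\sum_{l}\mu_{l}\ketbra{a_{l}}{b_{l}}$ simultaneously (and likewise for $\operatorname{span}\{D_{1},D_{2}\}$). I would prove this in two stages. \emph{(a) Dichotomy:} up to a local unitary on each side, $V$ is a controlled unitary from the $C$-side or from the $D$-side. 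Expanding $V^{\dagger}V=I$ and $VV^{\dagger}=I$ in the Schmidt basis, one shows that $\{D_{i}^{\dagger}D_{j}\}_{i,j}$ cannot be linearly independent — independence, together with $I_{D}\in\operatorname{span}\{D_{i}^{\dagger}D_{j}\}$ (forced by matching coefficients in $V^{\dagger}V=I$), would give $C_{1}^{\dagger}C_{1}\propto I_{C}$ and $C_{1}^{\dagger}C_{2}\propto I_{C}$, hence $C_{2}\propto C_{1}$, contradicting Schmidt rank two — and a finite case analysis on $\dim\operatorname{span}\{D_{i}^{\dagger}D_{j}\}\in\{2,3\}$, using also $\ker D_{1}\cap\ker D_{2}=\{0\}$ and $\ker D_{1}^{\dagger}\cap\ker D_{2}^{\dagger}=\{0\}$ (both from unitarity), pins down the controlled structure on one of the two sides. \emph{(b) From the dichotomy to the claim:} if $V$ is controlled from $C$ up to a local unitary, then $\operatorname{span}\{C_{1},C_{2}\}=W\cdot\operatorname{span}\{P_{1},P_{2}\}$ for some unitary $W$ and orthogonal projectors $P_{1}+P_{2}=I$, and diagonalizing $P_{1},P_{2}$ in a common orthonormal basis produces the $\ketbra{a_{l}}{b_{l}}$-form; if $V$ is controlled from $D$ instead, write $V=\sum_{l}M_{l}\otimes\ketbra{\alpha_{l}}{\beta_{l}}$ with orthonormal $\{\alpha_{l}\},\{\beta_{l}\}$ and unitary branches $M_{l}\in\operatorname{span}\{C_{1},C_{2}\}$, observe that two of the $M_{l}$ must be linearly independent (the span is two‑dimensional) hence span it, and diagonalize the unitary $M_{l}^{\dagger}M_{l'}$ to obtain the common bases. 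Either way the claim follows — and in fact $V$ turns out to be controlled from both sides, which is why it is immaterial which subsystem is singled out below.

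To assemble: applying the bipartite claim to the cut $A_{j}\,|\,\overline{A_{j}}$ for $j=1,\dots,n-1$ gives orthonormal bases $\{\ket{a_{j,i_{j}}}\}_{i_{j}},\{\ket{b_{j,i_{j}}}\}_{i_{j}}$ of $\mathbb{C}^{d_{j}}$ with $A_{j}^{(1)}=\sum_{i_{j}}\lambda_{j,i_{j}}\ketbra{a_{j,i_{j}}}{b_{j,i_{j}}}$ and $A_{j}^{(2)}=\sum_{i_{j}}\mu_{j,i_{j}}\ketbra{a_{j,i_{j}}}{b_{j,i_{j}}}$. Substituting these into the global Schmidt decomposition and expanding the tensor products over $(i_{1},\dots,i_{n-1})$ yields
\begin{equation}
U=\sum_{i_{1},\dots,i_{n-1}}\ketbra{a_{1,i_{1}}}{b_{1,i_{1}}}\otimes\cdots\otimes\ketbra{a_{n-1,i_{n-1}}}{b_{n-1,i_{n-1}}}\otimes U_{i_{1},\dots,i_{n-1}},
\end{equation}
where $U_{i_{1},\dots,i_{n-1}}:=\big(\textstyle\prod_{j=1}^{n-1}\lambda_{j,i_{j}}\big)A_{n}^{(1)}+\big(\prod_{j=1}^{n-1}\mu_{j,i_{j}}\big)A_{n}^{(2)}$. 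Since the $\{\ket{b_{j,i_{j}}}\}_{i_{j}}$ and $\{\ket{a_{j,i_{j}}}\}_{i_{j}}$ are orthonormal bases, $U$ maps $\ket{b_{1,i_{1}},\dots,b_{n-1,i_{n-1}}}\otimes\ket{\psi}$ to $\ket{a_{1,i_{1}},\dots,a_{n-1,i_{n-1}}}\otimes U_{i_{1},\dots,i_{n-1}}\ket{\psi}$ on mutually orthogonal subspaces that together span the whole space; as $U$ is unitary, each $U_{i_{1},\dots,i_{n-1}}$ is an isometry on $\mathbb{C}^{d_{n}}$, hence unitary. Applying the bipartite claim once more to the cut $A_{n}\,|\,\overline{A_{n}}$ shows these $U_{i_{1},\dots,i_{n-1}}\in\mathcal S_{n}$ can moreover be written simultaneously as $\sum_{i_{n}}(\cdot)\ketbra{a_{n,i_{n}}}{b_{n,i_{n}}}$, which is where the bases indexed by $j=n$ enter; any subsystem may be designated the target, accounting for the ``switch of subsystems''.

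The step I expect to be the main obstacle is part (a) of the bipartite claim — extracting the controlled structure of a Schmidt‑rank‑two bipartite unitary from the quadratic constraints $V^{\dagger}V=VV^{\dagger}=I$. The delicacy is exactly that the Schmidt operators $C_{i},D_{i}$ may be singular — the phenomenon the rest of the paper organizes through the singular number — so one cannot simply diagonalize $C_{1}^{-1}C_{2}$; the pencils $zC_{1}+wC_{2}$ and $zD_{1}+wD_{2}$ may degenerate and have to be handled through a Kronecker‑type normal form or through a careful rank analysis of $\operatorname{span}\{D_{i}^{\dagger}D_{j}\}$. Everything else is linear algebra and bookkeeping. (Alternatively, this bipartite claim is the $n=2$ instance of \cite[Theorem~1]{cy13}, and one may simply invoke that result.)
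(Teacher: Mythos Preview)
The paper does not prove this lemma; it is quoted verbatim from \cite[Theorem~1]{cy13} and used as a black box. So there is no in-paper proof to compare your proposal against.

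That said, a few remarks on your attempt. Your overall architecture --- reduce to genuine $U$, establish the bipartite controlled structure on each cut $A_j\,|\,\overline{A_j}$, then reassemble --- is correct and is essentially how one bootstraps the $n$-partite statement from the bipartite one. The assembly step and the unitarity of each branch $U_{i_1,\dots,i_{n-1}}$ are fine. Your closing parenthetical is the honest summary: everything hinges on the $n=2$ case, which \emph{is} \cite[Theorem~1]{cy13}, so the proposal as written is largely a reduction to the very result being cited.

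Your sketch of part~(a) correctly rules out the case where $\{D_i^\dagger D_j\}_{i,j}$ are linearly independent (the argument forcing $C_2\propto C_1$ is sound once one notes, via a partial trace, that $I_D\in\operatorname{span}\{D_i^\dagger D_j\}$). But the promised ``finite case analysis on $\dim\operatorname{span}\{D_i^\dagger D_j\}\in\{2,3\}$'' is the entire content of the bipartite theorem and is not a routine case split: when the span has dimension~$3$, one must still extract a unitary from $\operatorname{span}\{C_1,C_2\}$ (or $\operatorname{span}\{D_1,D_2\}$) without assuming either $C_i$ invertible, and this is precisely where the original proof in \cite{cy13} does real work. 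Your instinct that singular $C_i,D_j$ are the obstruction --- the very phenomenon the present paper later organizes through the singular number --- is right, but a pencil/Kronecker normal-form argument would need to be carried out in full before the proposal counts as a proof.
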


Lemma \ref{le:cohenli2013} also implies that each SR-2 unitary gate is LU equivalent to a diagonal one \cite{cy13}. Thus, we are only concerned with \emph{diagonal} unitary matrices of SR-2 in this paper. Then we review a fact in the matrix analysis, which characterizes the number of unitary matrices in the span of two diagonal unitaries.

\begin{lemma} \cite[Lemma 1]{cy16}
\label{le:spanOFdiagonal}
Suppose $U=\diag(1,d_1,...,d_{n-1})$ is a diagonal unitary matrix and is not proportional to $I_n$. Then the following two conditions are equivalent. 

(i) Up to global phases, any diagonal unitary matrix in the span of $U$ and $I_n$ must be proportional to one of $U$ and $I_n$.

(ii) There are two different numbers $d_i\neq d_j$ in the set $\{d_1,...,d_{n-1}\}$ which are both not equal to $1$.
\end{lemma}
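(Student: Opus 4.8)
The plan is to prove the two implications $(ii)\Rightarrow(i)$ and $\neg(ii)\Rightarrow\neg(i)$, working with the generic element $\alpha U+\beta I_n$ of the span (for $\alpha,\beta\in\bbC$) and asking when it can be unitary. Writing $U=\diag(1,d_1,\dots,d_{n-1})$ with each $|d_i|=1$, the diagonal entries of $\alpha U+\beta I_n$ are $\beta+\alpha$ in the first slot and $\beta+\alpha d_i$ in the $i$-th slot; unitarity means $|\beta+\alpha d_i|=|\beta+\alpha|$ for all $i$, i.e. every point $\alpha d_i$ lies on the circle centered at $-\beta$ through $\alpha$. The whole problem therefore reduces to a planar geometry question about when finitely many points on the unit circle (the values $1,d_1,\dots,d_{n-1}$, scaled by $\alpha$) can be concyclic in a nontrivial way.

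First I would do the easy direction, $\neg(ii)\Rightarrow\neg(i)$. If $(ii)$ fails, then the numbers in $\{d_1,\dots,d_{n-1}\}$ that differ from $1$ are all equal to some common value $d\neq 1$; so $U$ has only the two distinct eigenvalues $1$ and $d$, with $1$ appearing (say) on an index set $T\ni 1$ and $d$ on its complement. I want to exhibit a unitary in the span proportional to neither $U$ nor $I_n$. Choose $\alpha,\beta$ not both zero with $\beta+\alpha$ and $\beta+\alpha d$ both of modulus $1$ but with $\beta+\alpha\neq 0\neq\beta+\alpha d$ and $(\beta+\alpha)\neq \lambda(\beta+\alpha d)\cdot(\text{phase})$ forced — concretely, pick the perpendicular bisector of the segment from $-1$ to $-d$ in the $\beta$-plane (with $\alpha=1$): any $\beta$ on it with $|\beta+1|=1$ works, and there are two such $\beta$ (the two intersection points of that line with the unit circle centered at $-1$), at least one of which gives a matrix that is a nontrivial diagonal unitary combination, contradicting $(i)$. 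This is a short explicit computation.

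The substantive direction is $(ii)\Rightarrow(i)$. Assume there exist $d_i\neq d_j$ in $\{d_1,\dots,d_{n-1}\}$ with $d_i\neq 1$, $d_j\neq 1$. Suppose $V=\alpha U+\beta I_n$ is unitary. If $\alpha=0$ then $V\propto I_n$ and we are done; if $\beta=0$ then $V\propto U$; so assume $\alpha\beta\neq 0$ and derive a contradiction. From $|\beta+\alpha|=|\beta+\alpha d_i|=|\beta+\alpha d_j|=|\beta+\alpha|$ we get that the three distinct points $\alpha$, $\alpha d_i$, $\alpha d_j$ all lie on the circle $|z+\beta|=|\alpha+\beta|$; but $\alpha$, $\alpha d_i$, $\alpha d_j$ are three distinct points (since $1,d_i,d_j$ are pairwise distinct and $\alpha\neq0$) on the circle $|z|=|\alpha|$ as well. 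Two distinct circles meet in at most two points, so the two circles must coincide, forcing $\beta=0$ — contradiction. The only gap to fill carefully is the case analysis ensuring $1,d_i,d_j$ are genuinely pairwise distinct: we are given $d_i\neq d_j$ and $d_i\neq 1$, but a priori $d_j$ could equal $1$; however $(ii)$ explicitly stipulates both $d_i\neq 1$ and $d_j\neq 1$, so all three of $1,d_i,d_j$ are distinct and the two-circles argument applies verbatim.

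The main obstacle is essentially bookkeeping rather than depth: one must be careful that "up to global phases" in $(i)$ is handled correctly — i.e., that "$V$ proportional to $U$ or to $I_n$" is exactly the negation of "there is a genuinely new unitary in the span" — and that the degenerate subcases ($\alpha=0$, $\beta=0$, or $U\propto I_n$, which is excluded by hypothesis) are disposed of before invoking the geometric argument. I expect the cleanest writeup to isolate the elementary fact "three distinct concyclic points determine their circle" and then apply it twice, once for each direction.
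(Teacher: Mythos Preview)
The paper does not prove this lemma at all: it is quoted verbatim from \cite[Lemma~1]{cy16} and used as a black box, so there is no ``paper's proof'' to compare against. Your geometric approach is the natural one and is essentially correct. The direction $(ii)\Rightarrow(i)$ is clean: with $\alpha\beta\neq0$ the three pairwise distinct points $\alpha,\alpha d_i,\alpha d_j$ lie simultaneously on the circles $|z|=|\alpha|$ and $|z+\beta|=|\alpha+\beta|$, and since three distinct points determine a unique circle these must coincide, forcing $\beta=0$.

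There is a small gap in your $\neg(ii)\Rightarrow\neg(i)$ construction. With $\alpha=1$ you intersect the perpendicular bisector of $[-1,-d]$ with the circle $|\beta+1|=1$. The distance from $-1$ to that bisector is $|1-d|/2$, which equals $1$ precisely when $d=-1$; in that case the line is tangent to the circle and the unique intersection point is $\beta=0$, so your ``two such $\beta$, at least one nontrivial'' claim fails exactly there. The fix is trivial: drop the normalization constraint $|\beta+1|=1$, pick any $\beta\neq0$ on the perpendicular bisector (there are infinitely many, and $\beta=-1$ is not on the line since $d\neq1$), and then rescale $\alpha U+\beta I_n$ by $|\alpha+\beta|^{-1}$ to get a genuine diagonal unitary in the span that is proportional to neither $U$ nor $I_n$.
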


\section{Singular number (SN) for multipartite unitary gates of SR-2}
\label{sec:sr2class}

Although the classification of two-qubit unitary gates was completed by B. Kraus \etal~\cite{kc01}, i.e., the LU equivalent expression given by Eq. \eqref{eq:twoqubit}, the classification of multipartite unitary gates is still complicated with few results. In this section we propose a method to classify multipartite unitary gates of SR-2 by the number of local singular operators in the Schmidt decomposition. This classification method is based on an essential observation that the Schmidt decomposition for genuine multipartite unitary gates of SR-2 is unique up to the switching of parties. We verify such an observation in Lemma \ref{le:linindep}. 

\begin{lemma}
\label{le:linindep}
Suppose $U$ is an $n$-partite unitary gate of Schmidt rank two.

(i) When $n=2$, the Schmidt decomposition is always not unique up to the switching of parties.

(ii) When $n\geq 3$, the Schmidt decomposition is unique up to the switching of parties if and only if $U$ up to the switching of parties cannot be decomposed as 
\begin{equation}
\label{eq:sr3unique-0}
U=(A_1\ox A_2 + B_1\ox B_2)\ox A_3\ox \cdots \ox A_n,
\end{equation}
where $A_1,~B_1$ are linearly independent, and $A_2,~B_2$ are linearly independent because $U$ is of Schmidt rank two.
\end{lemma}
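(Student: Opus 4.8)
The plan is to treat (i) and the ``easy'' direction of (ii) by an explicit construction, and to reduce the ``hard'' direction of (ii) to a rigidity statement about the GHZ-type tensor. For part (i), given a bipartite Schmidt-rank-two unitary with Schmidt decomposition $U=A_1\ox B_1+A_2\ox B_2$, one simply writes down the alternative $U=(A_1+A_2)\ox B_1+A_2\ox(B_2-B_1)$; since $A_1,A_2$ are linearly independent so are $A_1+A_2,A_2$, and since $B_1,B_2$ are linearly independent so are $B_1,B_2-B_1$, so this is again a Schmidt decomposition (minimality forces linear independence on each side), and it is not obtained from the first one by permuting the two summands and rescaling because $(A_1+A_2)\ox B_1$ is proportional to neither $A_1\ox B_1$ nor $A_2\ox B_2$. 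Hence for $n=2$ uniqueness always fails, even after switching the two parties. One direction of (ii) now follows at once: if, after a permutation of parties, $U=(A_1\ox A_2+B_1\ox B_2)\ox A_3\ox\cdots\ox A_n$, then the Schmidt-rank-two bipartite factor $A_1\ox A_2+B_1\ox B_2$ has two essentially different bipartite Schmidt decompositions by (i), and tensoring each with $A_3\ox\cdots\ox A_n$ produces two essentially different Schmidt decompositions of $U$.

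For the converse, fix a Schmidt decomposition $U=\bigox_{i=1}^n A_{1,i}+\bigox_{i=1}^n A_{2,i}$ and set $r_i:=\dim\lin\{A_{1,i},A_{2,i}\}\in\{1,2\}$. The first observation is that $r_i$ is intrinsic to $U$: it equals the Schmidt rank of $U$ across the cut $\{i\}\,|\,\{1,\dots,n\}\setminus\{i\}$, because if $A_{1,i}\propto A_{2,i}$ one factors that single operator out of the $i$-th slot (rank one there), whereas if $A_{1,i},A_{2,i}$ are independent then $\bigox_{j\neq i}A_{1,j}$ and $\bigox_{j\neq i}A_{2,j}$ must be independent too (otherwise $U$ would be a product operator, of Schmidt rank one). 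In particular, in any Schmidt decomposition of $U$ the $i$-th local operators span a space of the same dimension $r_i$. Put $S=\{i:r_i=2\}$. If $r_i=1$ for every $i$ outside a single index, then collecting the proportionality constants collapses $U$ to a product operator, contradicting Schmidt rank two; hence $|S|\geq2$. Moreover, if $|S|=2$, say $S=\{1,2\}$ after switching parties, then writing $A_{2,i}=c_i A_{1,i}$ for $i\geq3$ and absorbing $\prod_{i\geq3}c_i$ yields $U=(A_{1,1}\ox A_{1,2}+(\prod_{i\geq3}c_i)A_{2,1}\ox A_{2,2})\ox A_{1,3}\ox\cdots\ox A_{1,n}$, which is exactly the forbidden form \eqref{eq:sr3unique-0} (the bracketed factor being honestly Schmidt-rank-two because $1,2\in S$). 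Therefore, if $U$ cannot be brought to the form \eqref{eq:sr3unique-0}, we must have $|S|\geq3$, and it remains to show that $|S|\geq3$ forces uniqueness.

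This last step is the crux. After switching parties assume $1,2,3\in S$, and regard $U$ as a tripartite operator on the grouped subsystems $\{1\}$, $\{2\}$, $\{3,\dots,n\}$; by the argument above all three grouped parties are ``active'', i.e.\ the relevant local operators are linearly independent within each group. It thus suffices to prove that a tripartite operator $T=x_1\ox y_1\ox z_1+x_2\ox y_2\ox z_2$ with $\{x_1,x_2\}$, $\{y_1,y_2\}$, $\{z_1,z_2\}$ each linearly independent has an essentially unique rank-two decomposition (unique up to permuting the two summands and rescaling). Flattening $T$ along each factor shows that the two-dimensional spaces $\lin\{x_1,x_2\}$, $\lin\{y_1,y_2\}$, $\lin\{z_1,z_2\}$ are intrinsic to $T$, hence coincide with the corresponding spaces of any other rank-two decomposition; applying an invertible local map on each of these three spaces we may assume $x_k=y_k=z_k=e_k$ for $k=1,2$, i.e.\ $T=e_1^{\ox3}+e_2^{\ox3}$. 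For this GHZ-type tensor the two slices along the third factor are $e_1\ox e_1$ and $e_2\ox e_2$, so in any rank-two decomposition both rank-one operators $x_k\ox y_k^\top$ lie in the two-dimensional space of diagonal matrices, whose only rank-one elements up to scalars are $e_1\ox e_1$ and $e_2\ox e_2$; this forces each $x_k,y_k$ and then each $z_k$ to be proportional to a standard basis vector, giving uniqueness. Undoing the local maps and using uniqueness of the rank-one (product) decomposition on the grouped subsystem $\{3,\dots,n\}$ then shows that any Schmidt decomposition of $U$ agrees with the fixed one up to permuting the two summands and rescaling local operators. The main obstacle is precisely this rigidity of $e_1^{\ox3}+e_2^{\ox3}$ --- equivalently, the essential uniqueness of the rank-two decomposition of a tensor with at least three active parties --- which is classical but must be proved by such a slice/pencil argument; the rest is bookkeeping about tensor products and Schmidt ranks across bipartitions.
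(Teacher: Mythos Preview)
Your proof is correct and follows essentially the same strategy as the paper's: both reduce the uniqueness question to the fact that when at least three parties are ``active'' (have linearly independent local operators), the only simple tensors in the two-dimensional span of the two summands are scalar multiples of the summands themselves. The paper states this directly---``any linear combination of such two corresponding operators cannot be a product one''---and then handles the partially-active situation as a separate case (ii.b), whereas you make the argument more explicit by introducing the intrinsic set $S=\{i:r_i=2\}$, reducing to a tripartite GHZ tensor, and proving the rigidity via a slice/pencil argument. Your route buys extra rigor (you justify why the local operator spaces and the set $S$ are intrinsic, points the paper leaves implicit), at the cost of slightly more machinery; the paper's version is shorter but relies on the reader filling in that the relevant local spans are determined by the bipartite flattenings.
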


\begin{proof} (i) When $n=2$, suppose $U=E_1\ox F_1+E_2\ox F_2$, where $E_1,~E_2$ are linearly independent, and $F_1,~F_2$ are linearly independent. We may always expand $U$ using linearly independent $E'_1$ and $E'_2$ which are both linear combinations of $E_1$ and $E_2$. Thus we obtain $U=E'_1\ox F'_1+E'_2\ox F'_2$. There is no other requirement for $E'_1$ and $E'_2$. So the Schmidt decomposition is always not unique, and in fact there are infinitely many forms of Schmidt decomposition.

(ii) First, we show the ``If'' part. We discuss it in two cases as follows. 

Case (ii.a) Suppose $U=A_1\otimes...\otimes A_n+B_1\otimes...\otimes B_n$, where $A_j$ and $B_j$ are linearly independent for any $j$. In order to figure out another form of Schmidt decomposition, we expand $U$ using the linear combinations of $A_j$ and $B_j$ for some $j$. Without loss of generality, we may assume $j=1$ here, and the corresponding operators on the remaining $n-1$ parties are linear combinations of $A_2\otimes...\otimes A_n$ and $B_2\otimes...\otimes B_n$. Since $A_j$ and $B_j$ are linearly independent for any $j$, it follows that any linear combination of such two corresponding operators cannot be a product one. It implies that the decomposition of $U$ into the sum of two product operators is unique up to the switching of parties. 

Case (ii.b) The remaining case is when $U$ up to the switching of parties can be written as 
\begin{equation}
\label{eq:sr3unique-1}
U=(A_1\ox A_2\ox \cdots \ox A_k + B_1\ox B_2\ox\cdots\ox B_k)\ox A_{k+1}\ox \cdots \ox A_n,
\end{equation}
where $k\geq 3$, and $A_j,~B_j$ are linearly independent for any $1\leq j\leq k$. According to the result in Case (ii.a), we obtain that $A_1\ox A_2\ox \cdots \ox A_k + B_1\ox B_2\ox\cdots\ox B_k$ in Eq. \eqref{eq:sr3unique-1} is unique, and thus the decomposition given by Eq. \eqref{eq:sr3unique-1} is unique.

Second, we show the ``Only if'' part. We prove it by contradiction. Up to the switching of parties, we may assume $U$ is written as Eq. \eqref{eq:sr3unique-0}. It follows from the assertion (i) that $A_1\ox A_2 + B_1\ox B_2$ has infinitely many forms. Thus, the Schmidt decomposition of $U$ is not unique. So we derive a contradiction, and the ``Only if'' part holds.

This completes the proof.
\end{proof}

By definition, every unitary $U$ of SR-2 can be written as $A_1\otimes...\otimes A_n+B_1\otimes...\otimes B_n$. Recall that we only consider the genuine multipartite unitary gates of SR-2. It means that $A_j$ and $B_j$ are linearly independent for any $j$. In this scenario, we conclude that the Schmidt decomposition is unique for $n\geq 3$ according to Lemma \ref{le:linindep}. Such uniqueness ensures that the definition below is well-defined.
\begin{definition}
\label{def:SN}
Suppose that $U$ is a genuine $n$-partite unitary gate of Schmidt rank two where $n\geq 3$. The singular number (SN) of $U$ is defined as the number of local singular operators in the Schmidt decomposition of $U$.
\end{definition}
Definition \ref{def:SN} also reveals that the notion of SN is a key factor to classify multipartite unitary gates of SR-2 under local equivalence, since the singularity of a matrix is invariant when multiplying it with invertible matrices. We explicitly clarify such a classification method in Theorem \ref{thm:clssification} below.

\begin{theorem}[Classification of multipartite unitary gates of Schmidt rank two]
\label{thm:clssification}
Under SLOCC equivalence, the singular number defined in Definition \ref{def:SN} is invariant when the number of parties is greater than two. For $n\geq 3$, denote by ${\cal C}_{d_1,\cdots,d_n}(k)$ the set of genuine $n$-partite unitary gates of Schmidt rank two supported on $\bbC^{d_1}\otimes...\otimes\bbC^{d_n}$ whose singular number is exactly $k$. As a result, if $U$ and $V$ respectively belong to ${\cal C}_{d_1,\cdots,d_n}(k_1)$ and ${\cal C}_{d_1,\cdots,d_n}(k_2)$ with $k_1\neq k_2$, then such two unitary gates are SLOCC inequivalent. If the local dimensions are all equal, i.e., $d_1=d_2=\cdots=d_n=d$, then we shall denote the above set as ${\cal C}(n,d,k)$ in short.
\end{theorem}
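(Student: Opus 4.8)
The plan is to prove the stronger statement that the singular number is constant on every SLOCC-equivalence class of genuine $n$-partite Schmidt-rank-two unitary gates with $n\geq 3$; the claimed SLOCC inequivalence of ${\cal C}_{d_1,\cdots,d_n}(k_1)$ and ${\cal C}_{d_1,\cdots,d_n}(k_2)$ for $k_1\neq k_2$ then follows immediately. So let $U$ be such a gate. By genuineness we may write $U=A_1\ox\cdots\ox A_n+B_1\ox\cdots\ox B_n$ with $A_i$ and $B_i$ linearly independent for every $i$, and by Lemma \ref{le:linindep} this Schmidt decomposition is unique up to the switching of parties. In particular the unordered pair of rank-one terms $\{A_1\ox\cdots\ox A_n,\,B_1\ox\cdots\ox B_n\}$ is determined by $U$, hence so is the list of $2n$ local factors up to rescaling each factor, up to swapping the two terms, and up to relabeling the parties; since none of these operations affects whether a matrix is singular, the number of singular local factors, that is, $\mathrm{SN}(U)$, depends only on $U$. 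This is precisely what makes Definition \ref{def:SN} well posed, and it is the quantity to be shown SLOCC-invariant.

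Now suppose $V$ is SLOCC equivalent to $U$. Since $U$ and $V$ are unitary, by \cite[Theorem 7]{cy14} they are in fact LU equivalent, so $V=\big(\bigox_{i=1}^n V_i\big)\,U\,\big(\bigox_{i=1}^n W_i\big)$ for invertible local operators $V_i$ and $W_i$. Distributing them over the Schmidt decomposition of $U$ gives
\begin{equation}
V=(V_1 A_1 W_1)\ox\cdots\ox(V_n A_n W_n)+(V_1 B_1 W_1)\ox\cdots\ox(V_n B_n W_n).
\end{equation}
For each $i$ the operators $V_i A_i W_i$ and $V_i B_i W_i$ are again linearly independent, since $V_i$ and $W_i$ are invertible, so this exhibits $V$ as a sum of two product operators with all local factor pairs linearly independent; thus $V$ has Schmidt rank at most two, and it cannot be one, for otherwise $U=\big(\bigox_i V_i\big)^{-1}V\big(\bigox_i W_i\big)^{-1}$ would be a product operator, contradicting that $U$ has Schmidt rank two. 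Hence $V$ is again a genuine $n$-partite Schmidt-rank-two unitary, Lemma \ref{le:linindep} applies to it, and the displayed expression is (up to the same freedoms as before) its Schmidt decomposition, from which $\mathrm{SN}(V)$ may be read off.

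It remains only to count. Left- or right-multiplication of a square matrix by an invertible matrix preserves both singularity and nonsingularity, so for every $i$ the factor $V_i A_i W_i$ is singular exactly when $A_i$ is, and $V_i B_i W_i$ is singular exactly when $B_i$ is. Therefore the Schmidt decomposition of $V$ has exactly as many singular local factors as that of $U$, that is, $\mathrm{SN}(V)=\mathrm{SN}(U)$. Consequently, if $U\in{\cal C}_{d_1,\cdots,d_n}(k_1)$ and $V\in{\cal C}_{d_1,\cdots,d_n}(k_2)$ with $k_1\neq k_2$, then $U$ and $V$ are SLOCC inequivalent; equivalently, each ${\cal C}_{d_1,\cdots,d_n}(k)$ (denoted ${\cal C}(n,d,k)$ when all $d_i=d$) is a union of SLOCC classes, labeled by the value $k$ of the singular number.

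The argument is short because Lemma \ref{le:linindep} does the real work; the only point needing care is the one flagged in the first paragraph, namely checking that the residual freedom in a Schmidt decomposition (party switches, swapping the two rank-one terms, and rescaling the tensor factors) is harmless, so that the singular count is an attribute of the unitary itself and not of a particular decomposition. I anticipate no difficulty beyond this bookkeeping.
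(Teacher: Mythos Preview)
Your argument is correct and is precisely the one the paper has in mind: the paper does not give a separate proof of this theorem but justifies it in the surrounding text by pointing to the uniqueness of the Schmidt decomposition (Lemma~\ref{le:linindep}) together with the observation that ``the singularity of a matrix is invariant when multiplying it with invertible matrices.'' Your write-up simply spells this out carefully; the only superfluous step is the appeal to \cite[Theorem 7]{cy14}, since your subsequent computation uses nothing beyond invertibility of the $V_i,W_i$ and hence already covers SLOCC directly.
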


From the perspective of Theorem \ref{thm:clssification}, it is natural to ask what is the range of SN $k$. In the following we derive the accurate range of SN $k$ for the set ${\cal C}_{d_1,\cdots,d_n}(k)$. Specifically, as we shall see from Theorem \ref{le:sr2=singular} and Theorem \ref{le:multi=sr2}, for every $k$ in this range there exist unitary gates of SR-2 whose SN is exactly $k$.

\begin{lemma}
\label{le:ubofk}
For any genuine $n$-partite unitary gate of Schmidt rank two where $n\geq 3$, the singular number is at most $n$. That is $k\leq n$ for any set ${\cal C}_{d_1,\cdots,d_n}(k)$. Further, if $k\in[3,n]$, then $k\in[n-1,n]$. 
\end{lemma}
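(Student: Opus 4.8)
The statement has two parts: first, the singular number $k$ of a genuine $n$-partite SR-2 unitary gate satisfies $k \le n$; second, $k$ cannot take the values $3, 4, \dots, n-2$, i.e. if $k \ge 3$ then in fact $k \in \{n-1, n\}$. I would attack both parts by working directly with the (unique, by Lemma~\ref{le:linindep}) Schmidt decomposition $U = A_1\ox\cdots\ox A_n + B_1\ox\cdots\ox B_n$, where each pair $A_j, B_j$ is linearly independent. The upper bound $k \le n$ is trivial: there are exactly $n$ local ``slots,'' and in slot $j$ the candidate singular operators are $A_j$ and $B_j$; but $A_j, B_j$ cannot \emph{both} be singular while $A_j \ox (\text{rest}) + B_j \ox (\text{rest})$ is unitary — I would argue that a genuine factor on party $j$ must be such that $\{A_j, B_j\}$ contains at most one singular operator. (Concretely: if both $A_j$ and $B_j$ are singular, pick $\ket{\psi}$ in $\ker A_j^\dagger \cap \ker B_j^\dagger$... actually the cleaner route: by Lemma~\ref{le:cohenli2013}, up to LU each SR-2 gate is fully controlled, so on each party $j$ there is a basis in which the two ``branch operators'' act as invertible unitaries on the controlled side; singularity of a local operator is then a statement about the control-side eigenprojections having rank-deficiency, and at most one branch per party can be singular.) Counting one possible singular operator per party gives $k \le n$.

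For the gap $\{3, \dots, n-2\}$, the key structural input is again the fully-controlled form. I would use the fact that for a genuine SR-2 gate, after bringing it to diagonal form via Lemma~\ref{le:cohenli2013}, the decomposition looks like $U = P_1\ox\cdots\ox P_n + Q_1\ox\cdots\ox Q_n$ where on each party the pair behaves like a controlled structure; the local operator $A_j$ (or $B_j$) being singular forces a specific rank pattern. The heart of the argument is a combinatorial constraint: I expect that singular local operators can only occur in a very rigid way — essentially, one shows that if there is even a single singular local operator, then \emph{almost all} parties must have a singular operator, because unitarity of $U$ globally couples the ranks. I would formalize this by looking at $U^\dagger U = I$: expanding, $\sum A_i^\dagger A_i \ox \cdots + (\text{cross terms}) = I^{\ox n}$, and tracking which terms can vanish. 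If $A_j$ is singular, its rank deficiency must be ``compensated'' by the $B$-branch on party $j$, and propagating this through the four-term expansion of $U^\dagger U$ forces the structure on the other parties. I believe the precise claim is: the set of parties carrying a singular operator, if nonempty and of size $\ge 3$, must be all of $\{1,\dots,n\}$ or all but one — hence $k \in \{n-1, n\}$.

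The main obstacle I anticipate is pinning down exactly why the ``middle'' values $3, 4, \dots, n-2$ are excluded while $1$ and $2$ are allowed — this asymmetry suggests the proof is not a clean counting argument but uses the global unitarity constraint in an essential, somewhat delicate way. I would expect to need a case analysis: suppose exactly $k$ parties, say $1,\dots,k$, carry a singular local operator (WLOG $A_1,\dots,A_k$ singular, all others invertible on their slot), and derive from $U U^\dagger = I$ restricted to the support/kernel structure that either $k \le 2$ or $k \ge n-1$. The technical crux will be showing that a ``partial'' singularity pattern (say $3 \le k \le n-2$) makes it impossible for the two product terms to combine into a unitary: intuitively, the singular branch $A_1\ox\cdots\ox A_k\ox A_{k+1}\ox\cdots\ox A_n$ then has a large kernel (product of $k$ nontrivial kernels) that the second branch, invertible on the last $n-k$ slots, cannot ``reach'' to make the sum full-rank, unless the second branch is \emph{also} degenerate on enough of the first $k$ slots — but that would increase $k$. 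Making this rank-counting rigorous (tracking $\rank(A_1\ox\cdots\ox A_n + B_1\ox\cdots\ox B_n)$ in terms of the individual ranks, using that the two branches are ``aligned'' by the controlled structure) is where the real work lies, and I would likely defer the detailed bookkeeping to an appendix as the excerpt suggests the authors do.
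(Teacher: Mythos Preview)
Your approach has gaps in both parts.

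For the bound $k \le n$: your claim that each party carries at most one singular operator is false. The $k=2$ gates of Theorem~\ref{le:multi=sr2}(iii), namely $U = \proj{0}\ox I_2^{\ox(n-1)} + \proj{1}\ox\diag(1,e^{i\beta_2})\ox\cdots\ox\diag(1,e^{i\beta_n})$, have $A_1 = \proj{0}$ and $B_1 = \proj{1}$ \emph{both} singular on the first party. The paper's argument is instead: if $k > n$, pigeonhole forces some party $j$ with both $A_j, B_j$ singular; applying $U$ to $\ket{v}\ox\ket{\psi}$ with $A_j\ket{v} = 0$ and using that $U$ is an isometry shows every $B_l$ with $l \neq j$ is proportional to a unitary, and symmetrically every $A_l$ with $l \neq j$ is unitary. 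Hence only $A_j, B_j$ can be singular, so $k = 2 < n+1$, a contradiction.

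For the exclusion of $k \in [3, n-2]$: your kernel/rank mechanism cannot work. In your own setup ($A_1,\dots,A_k$ singular, all $B_j$ and all $A_{k+1},\dots,A_n$ invertible) the second branch $B_1\ox\cdots\ox B_n$ is invertible on \emph{every} slot, not just the last $n-k$, hence full-rank by itself; the sum $U$ is then invertible whenever $-1$ is not an eigenvalue of $(B_1^{-1}A_1)\ox\cdots\ox(B_n^{-1}A_n)$, a generic condition. So there is no rank obstruction --- the obstruction to \emph{unitarity} is purely algebraic. The paper first uses the same ``$A_i$ singular $\Rightarrow$ $B_l$ unitary for $l\neq i$'' observation to show that for $k \geq 3$ all singular operators must lie in one branch (else $k \leq 2$), and then normalizes via Lemma~\ref{le:cohenli2013} to the diagonal form $U = I_2^{\ox n} + x\,\proj{0}^{\ox k}\ox\diag(1,t_1)\ox\cdots\ox\diag(1,t_{n-k})$ with $t_j \neq 0,1$. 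Unitarity becomes $|1 + x\prod_j t_j^{i_j}| = 1$ for all $(i_j)\in\{0,1\}^{n-k}$; writing $1+x = e^{i\alpha}$, $1+xt_1 = e^{i\gamma}$, $1+xt_2 = e^{i\beta}$, $1+xt_1t_2 = e^{i\delta}$ (this needs $n-k\ge 2$), the identity $(xt_1)(xt_2) = x\cdot(xt_1t_2)$ reads $(e^{i\alpha}-1)(e^{i\delta}-1) = (e^{i\beta}-1)(e^{i\gamma}-1)$. A dedicated trigonometric lemma (Lemma~\ref{le:n=3k=1xneq1}) shows this forces $\{\alpha,\delta\} = \{\beta,\gamma\}$, i.e.\ $t_1 = 1$ or $t_2 = 1$, contradicting genuineness. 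This phase identity --- not any rank count --- is what singles out $3 \le k \le n-2$ as impossible.
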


We put the proof of Lemma \ref{le:ubofk} in Appendix \ref{sec:supp}. It follows from Lemma \ref{le:ubofk} that there are four and five SLOCC inequivalent classes of genuine $n$-partite unitary gates of SR-2 for $n=3$ and $n\geq 4$ respectively according to the factor SN proposed in Theorem \ref{thm:clssification}. This result is also visualized in Fig. \ref{fig:classes}.

\begin{figure}[ht]
\centering
\includegraphics[width=5in]{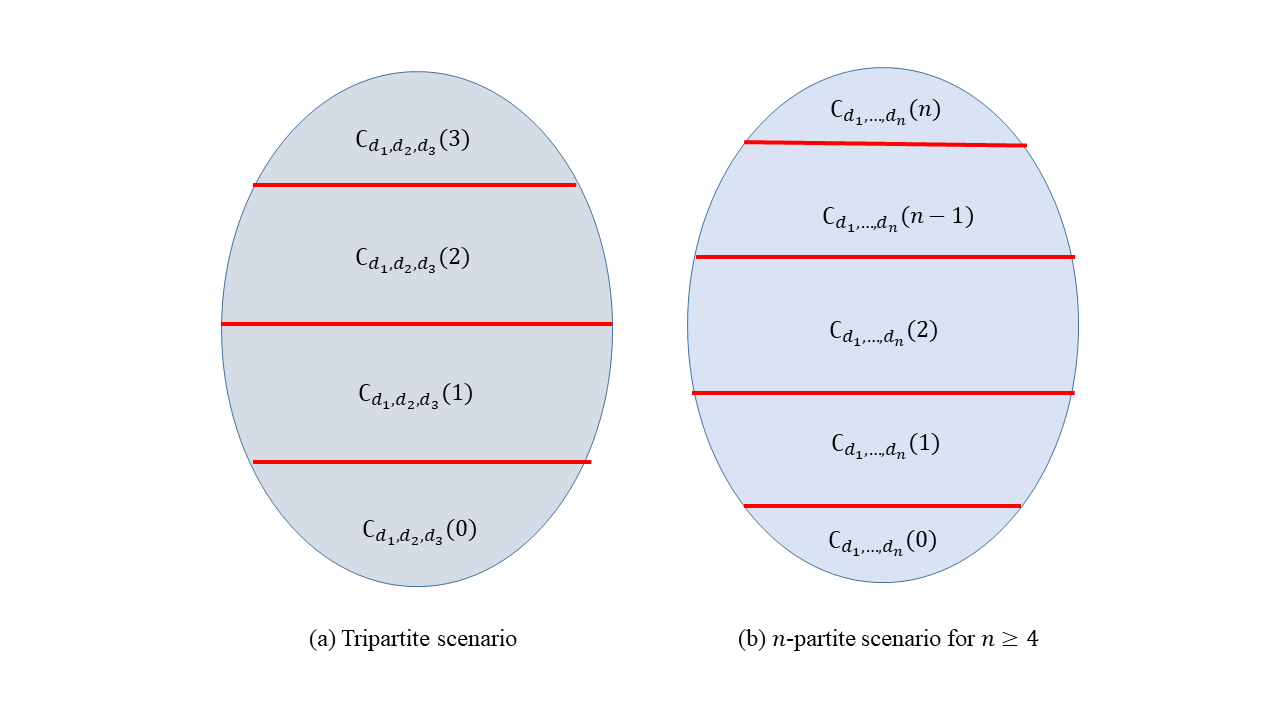}
\caption{Inequivalent classes of genuine multipartite unitary gates of SR-2 according to the singular number.}
\label{fig:classes}
\end{figure}

In the final part of this section, recall that one can regard a multipartite unitary gate as a bipartite one with respect to a bipartition of systems. In advantage of this technique, the following lemma provides a useful decomposition (not the Schmidt decomposition) to characterize genuine multipartite unitary gates of SR-2. In this way, the discussion on the $n$-partite system may be reduced to the $(n-1)$-partite system.

\begin{lemma}
\label{le:sr2}
(i) Suppose $U=P\otimes V+(I-P)\otimes W$ is an $n$-qubit unitary gate of Schmidt rank two, and $P$ is a projector on a single-qubit system. Then $V$ and $W$ both are $(n-1)$-qubit unitary gates of Schmidt rank one or two. For each case the unitary gate $U$ exists.

(ii) Suppose $U=\sum_j \proj{j}\otimes U_j$ is an $n$-partite unitary gate of Schmidt rank two, and the projector $\proj{j}$ acts on a single system. Then every $U_j$ is an $(n-1)$-partite unitary gate of Schmidt rank one or two. For each case the unitary gate $U$ exists.
\end{lemma}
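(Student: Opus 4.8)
Parts (i) and (ii) carry the same content: (i) is the case of (ii) in which the controlling system is a qubit and the two rank-one projectors are $P$ and $I-P$, since after a local unitary on the first qubit we may take $P=\proj{0}$, so that $U=\proj{0}\ox V+\proj{1}\ox W=\sum_{j}\proj{j}\ox U_j$ with $U_0=V$ and $U_1=W$. I will argue (ii) and read off (i) at the end. Observe first that the controlling projectors $\proj{j}$ must run over a full orthonormal basis of the first system --- otherwise $U$ would vanish on a nonzero subspace and could not be unitary --- so they are mutually orthogonal and $\sum_j\proj{j}=I$.

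For the main assertion there are three short points. \emph{Unitarity:} from $U^\dg U=\sum_j\proj{j}\ox U_j^\dg U_j$ together with $U^\dg U=I=\sum_j\proj{j}\ox I$, sandwiching both sides by $\bra{j}\ox I$ on the left and $\ket{j}\ox I$ on the right gives $U_j^\dg U_j=I$; hence each $U_j$ is unitary, in particular nonzero, so its Schmidt rank is at least one. \emph{Upper bound on the Schmidt rank:} since $U$ has operator Schmidt rank two it admits a Schmidt decomposition $U=A_1\ox R_1+A_2\ox R_2$, where $A_1,A_2$ act on the first subsystem and $R_1,R_2$ are product operators on the remaining $n-1$ subsystems; then $U_j=(\bra{j}\ox I)\,U\,(\ket{j}\ox I)=\bra{j}A_1\ket{j}\,R_1+\bra{j}A_2\ket{j}\,R_2$ is a linear combination of two product operators on $n-1$ parties, so its Schmidt rank is at most two. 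Combining the two, every $U_j$ is an $(n-1)$-partite unitary of Schmidt rank one or two, and specializing the controlling system to a qubit gives the statement for $V$ and $W$ in (i).

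It remains to prove the existence claims, and this is the only step requiring genuine care: one must pick the branches with the prescribed Schmidt ranks while also ensuring the assembled $U$ has Schmidt rank exactly two --- not one, and not more than two (the form $\sum_j\proj{j}\ox U_j$ by itself only bounds $\mathrm{SR}(U)$ by $\sum_j\mathrm{SR}(U_j)$, so one must arrange the $U_j$ to lie in a common two-dimensional span of product operators). Three-qubit witnesses, extended to larger $n$ by tensoring with identities, suffice: with the controlling projectors on the first qubit, the gate $\proj{0}\ox I\ox I+\proj{1}\ox\s_1\ox\s_1$ has both branches of Schmidt rank one; the Toffoli gate $\proj{0}\ox I\ox I+\proj{1}\ox C$, where $C$ is the two-qubit CNOT on the last two qubits, has one branch of Schmidt rank one and one of Schmidt rank two (the opposite pairing being obtained by the symmetry $V\lra W$); and $\proj{0}\ox C+\proj{1}\ox C'$ with $C'=(\s_3\ox I)\,C$ has both branches of Schmidt rank two, since regrouping its three product terms by the operator on the last qubit shows it equals $I\ox\proj{0}\ox I+\s_3\ox\proj{1}\ox\s_1$ (now with the projectors on the middle qubit), which is manifestly unitary --- a gate controlled from the middle qubit --- and of Schmidt rank two. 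A higher-dimensional control system and direct sums of such gates cover the remaining cases of (ii) in which different $U_j$ have different Schmidt ranks, and genuine witnesses valid for every $n$ are equally available, e.g.\ $\proj{0}\ox I^{\ox(n-1)}+\proj{1}\ox\s_1^{\ox(n-1)}$. Thus the lemma is essentially immediate apart from this concrete bookkeeping in the existence step, which is where I expect the only real effort to lie.
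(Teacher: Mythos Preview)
Your proof is correct and follows the same logical skeleton as the paper: unitarity of the branches from $U^\dg U=I$, the Schmidt-rank bound from writing $U_j$ as a linear combination of the two product operators in a Schmidt decomposition of $U$, and explicit constructions for the existence claims. The only difference is cosmetic --- the paper covers all four $(\mathrm{SR}(V),\mathrm{SR}(W))$ cases with a single one-parameter family $U=\proj{0}\ox(\cos\a\, I^{\ox(n-1)}+i\sin\a\,\s_1^{\ox(n-1)})+\proj{1}\ox(\cos\b\, I^{\ox(n-1)}+i\sin\b\,\s_1^{\ox(n-1)})$ by choosing $(\a,\b)\in\{(0,\tfrac{\pi}{2}),(0,1),(1,0),(1,\tfrac{\pi}{4})\}$, whereas you give separate concrete gates (Toffoli, CNOT-based) for each case; both are equally valid.
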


\begin{proof}
(i) It follows from $U$ is unitary that $V$ and $W$ are both unitary. Because $U$ has SR-2, it directly follows that $V$ and $W$ both have SR at most two. To verify the last claim, we construct examples. Let
\begin{eqnarray}
\label{eq:multi-bi}
U=&&\proj{0}\otimes
\left[\cos\a (I_2\otimes I_2 \otimes...\otimes I_2)
+i\sin\a (\s_1\otimes\s_1\otimes...\otimes \s_1) \right]
\notag\\+&&\proj{1}\otimes
\left[\cos\b (I_2\otimes I_2 \otimes...\otimes I_2)
+i\sin\b (\s_1\otimes\s_1\otimes...\otimes \s_1) \right],
\end{eqnarray}
where $\a,\b\in[0,2\pi)$. By choosing $(\a,\b)=(0,\frac{\pi}{2}),(0,1),(1,0)$ and $(1,\frac{\pi}{4})$, one can show that $V$ and $W$ in Eq. \eqref{eq:multi-bi} has SR $(1,1)$, $(1,2),(2,1)$ and $(2,2)$ respectively, while $U$ still has SR-2.	

(ii) The assertion can be proven similarly to the proof of assertion (i).

This completes the proof.
\end{proof}

\section{Classification of genuine multiqubit unitary gates of SR-2}
\label{sec:mqubitclass}

In this section we focus on genuine multiqubit unitary gates of SR-2, and discuss the classification of such unitary gates carefully using the key notion SN clarified in Theorem \ref{thm:clssification}.
Recall that one can implement a unitary gate $U$ in experiments by applying a sequence of local unitary gates to some unitary gate $V$ which is locally equivalent to $U$. Therefore, in order to completely characterize the set ${\cal C}(n,2,k)$ for each $k$, it suffices to formulate the representative expressions of ${\cal C}(n,2,k)$, and then each unitary gate in ${\cal C}(n,2,k)$ is locally equivalent to some representative expression. 
We begin with the three-qubit system in Sec. \ref{subsec:3qubit=sr2}, and characterize the representative expressions of ${\cal C}(3,2,k)$ by formulating the parametric forms in Theorem \ref{le:sr2=singular} for every SN. Then we extend our results to $n$-qubit systems where $n\geq 4$ in Sec. \ref{subsec:nqubit=sr2}, and analogously formulate the parametric Schmidt decompositions of unitary gates in ${\cal C}(n,2,k)$ under SLOCC equivalence in Theorem \ref{le:multi=sr2} for every SN.

\subsection{Genuine three-qubit unitary gates of SR-2}
\label{subsec:3qubit=sr2}

First, we may regard the three-qubit system $ABC$ as a bipartite one with respect to one of the three bipartitions $A|BC,~B|AC$ and $C|AB$. The following lemma determines the representative expressions of those genuine three-qubit unitary gates of SR-2 satisfying the property that one of the $AB$, $AC$ and $BC$ spaces is spanned by two product unitary matrices.

\begin{lemma}
\label{le:sr2b}
Suppose $U$ is a genuine three-qubit unitary gate of Schmidt rank two acting on the system consisted of three parties $A,B,C$, and one of the $AB$, $AC$ and $BC$ spaces of $U$ is spanned by two product unitary matrices. Up to a permutation of the three parties, $U$ is LU equivalent to one of the following two forms. The first form is 
\begin{eqnarray}
\label{eq:u3qubitb-1}
U=&&\proj{0}\otimes I_2\otimes I_2+\proj{1}\otimes(\cos\a I_2+i\sin\a \s_3)\otimes (\cos\b I_2+i\sin\b \s_3),
\end{eqnarray}
and the second form is
\begin{eqnarray}
\label{eq:u3qubit-1}
U=&&\proj{0}\otimes
\left[\cos\a (I_2\otimes I_2)
+i\sin\a (\s_3\otimes\s_3)\right]
\notag\\+&&\proj{1}\otimes
\left[\cos\b (I_2\otimes I_2)
+i\sin\b (\s_3\otimes\s_3)\right],
\end{eqnarray}
where $\a$ and $\b$ are both in $[0,2\pi)$ such that $U$ is of Schmidt rank two.
\end{lemma}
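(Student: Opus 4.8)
The plan is to use the hypothesis to normalize $U$ by local unitaries until only the two operators on the first qubit are left undetermined, and then to read off the two forms from the unitarity constraint. Permuting the three parties, I may assume it is the $BC$ space of $U$ that is spanned by two product unitaries. Regarding $U$ as a bipartite gate on the cut $A|BC$ and rewriting the two‑dimensional $BC$ space in the basis supplied by these product unitaries, I get $U = X_1\ox P_1\ox Q_1 + X_2\ox P_2\ox Q_2$ with $P_1,P_2$ unitary on $B$, $Q_1,Q_2$ unitary on $C$, and $X_1,X_2$ acting on $A$. Since $U$ is genuine, none of these factors can be collinear within its own slot — e.g.\ $P_1\propto P_2$ would exhibit $U$ as a product across $B|AC$ — so $\{X_1,X_2\}$, $\{P_1,P_2\}$ and $\{Q_1,Q_2\}$ are each linearly independent pairs, and the $A|BC$ Schmidt rank is exactly two.

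\emph{Normalization.} Multiplying $U$ on the left by $I_A\ox P_1^\dagger\ox Q_1^\dagger$ turns the first product factor into $I\ox I$; conjugating $B$ and $C$ by local unitaries that diagonalize the (non‑scalar) unitaries $P_1^\dagger P_2$ and $Q_1^\dagger Q_2$, and absorbing a global phase into $X_2$, I bring $U$, up to LU equivalence, to
\[
U \;=\; X_1\ox I_2\ox I_2 \;+\; X_2\ox(\cos\mu\,I_2+i\sin\mu\,\s_3)\ox(\cos\nu\,I_2+i\sin\nu\,\s_3),
\]
where $\sin\mu\neq0$ and $\sin\nu\neq0$ because $P_1^\dagger P_2$ and $Q_1^\dagger Q_2$ are not scalar matrices. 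Write $M$ for the unitary $BC$‑factor of the second term.

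\emph{Unitarity and the case split.} Imposing $U^\dagger U=I$ and using $M^\dagger M=I$ gives $(X_1^\dagger X_1+X_2^\dagger X_2)\ox I + X_1^\dagger X_2\ox M + X_2^\dagger X_1\ox M^\dagger = I\ox I$. Expanding $M$ and $M^\dagger$ in the linearly independent family $\{I\ox I,\ I\ox\s_3,\ \s_3\ox I,\ \s_3\ox\s_3\}$ of diagonal $BC$‑operators and matching coefficients yields four operator equations on $A$; the $\s_3\ox\s_3$ one, together with $\sin\mu\sin\nu\neq0$, forces $X_1^\dagger X_2+X_2^\dagger X_1=0$, whence $X_1^\dagger X_1+X_2^\dagger X_2=I$. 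If $\cos\mu\neq0$ or $\cos\nu\neq0$, the $I\ox\s_3$ or $\s_3\ox I$ equation also gives $X_1^\dagger X_2-X_2^\dagger X_1=0$, hence $X_1^\dagger X_2=0$; combined with $X_1^\dagger X_1+X_2^\dagger X_2=I$ this forces the nonzero $X_1,X_2$ to be rank‑one operators $\ket{u_j}\bra{v_j}$ with $\{u_1,u_2\}$ and $\{v_1,v_2\}$ orthonormal bases of $\bbC^2$, and the bilateral local unitary on $A$ carrying these bases to the computational basis sends $X_1\mapsto\proj0$ and $X_2\mapsto\proj1$, turning $U$ into Eq.~\eqref{eq:u3qubitb-1}. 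In the complementary case $\cos\mu=\cos\nu=0$ we have $M=\pm\s_3\ox\s_3$; absorbing the sign, $U=X_1\ox I_2\ox I_2+X_2\ox\s_3\ox\s_3 = Y_+\ox\Pi_+ + Y_-\ox\Pi_-$ with $Y_\pm:=X_1\pm X_2$ and $\Pi_\pm$ the spectral projectors of $\s_3\ox\s_3$, so $U$ is unitary iff $Y_+$ and $Y_-$ are both unitary. Choosing a local unitary $L$ on $A$ that diagonalizes $Y_+Y_-^\dagger$ with eigenvalues $e^{2i\a},e^{2i\b}$ and setting $R:=Y_-^\dagger L^\dagger\,\diag(e^{-i\a},e^{-i\b})$, one computes $LY_-R=\diag(e^{-i\a},e^{-i\b})$ and $LY_+R=\diag(e^{i\a},e^{i\b})$, so the local unitary $L\ox I\ox I$ on the left and $R\ox I\ox I$ on the right brings $U$ to $\proj0\ox(\cos\a\,I\ox I+i\sin\a\,\s_3\ox\s_3)+\proj1\ox(\cos\b\,I\ox I+i\sin\b\,\s_3\ox\s_3)$, i.e.\ Eq.~\eqref{eq:u3qubit-1}. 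Since $U$ has Schmidt rank two and local unitaries preserve this, the parameters obtained automatically meet the stated constraint.

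\emph{Main obstacle.} The real content is the coefficient comparison in the unitarity identity: it is exactly this that exposes the lone family of degenerate values $\mu\equiv\nu\equiv\pi/2\pmod{\pi}$ escaping the rank‑one conclusion of the first case, and this family is what produces the second, genuinely distinct normal form. The only other delicate step is the simultaneous reduction of $Y_+$ and $Y_-$ by a single bilateral local unitary on the qubit $A$, which rests on the unitary diagonalizability of $2\times2$ unitaries; everything else is routine local‑unitary bookkeeping.
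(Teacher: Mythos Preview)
Your proof is correct but follows a genuinely different route from the paper's. The paper first invokes Lemma~\ref{le:cohenli2013} to assume $U$ is diagonal, so that the two product unitaries spanning the $BC$ space may be taken as $I_2\otimes I_2$ and $\diag(1,x)\otimes\diag(1,y)$ with $A_1,A_2$ diagonal as well; it then does a case analysis on the zero pattern of the diagonal entries of $A_1,A_2$, and in the non--projector cases appeals to Lemma~\ref{le:spanOFdiagonal} to force $x=y=-1$. You instead work directly with the unitarity relation $U^\dagger U=I$, expand the $BC$ factor in the Pauli basis $\{I\otimes I,\,I\otimes\sigma_3,\,\sigma_3\otimes I,\,\sigma_3\otimes\sigma_3\}$, and match coefficients to obtain operator equations on $A$. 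This bypasses both cited lemmas and is entirely self-contained; the dichotomy between the two normal forms emerges transparently as the vanishing or non-vanishing of $\cos\mu\cos\nu$. The paper's approach has the advantage of meshing with the diagonal framework used throughout (and extending more readily beyond qubits), while your Pauli-basis computation is cleaner in the qubit setting and makes explicit why the second form is the unique exception. Your simultaneous diagonalization of $Y_+$ and $Y_-$ via the unitary $Y_+Y_-^\dagger$ is a nice touch that replaces the paper's implicit use of the controlled form.
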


\begin{proof}
Up to a permutation of the three parties, we may assume that the $BC$ space of $U$ is spanned by two product unitary matrices. It follows from Lemma \ref{le:cohenli2013} that any unitary matrix of SR-2 is LU equivalent to a diagonal one. Then under LU equivalence we may assume that the two product unitary matrices spanning the $BC$ space are $I_2\otimes I_2$ and $\diag(1,x)\otimes\diag(1,y)$ with $x,y$ of modulus one. Thus, the three-qubit unitary gate $U$ can be decomposed as
\begin{eqnarray}
U=A_1\otimes I_2\otimes I_2
+A_2\otimes \diag(1,x)\otimes\diag(1,y).	
\end{eqnarray}
Because $U$ is controlled from the first system from Lemma \ref{le:cohenli2013}, we may also assume that $A_1$ and $A_2$ are both diagonal matrices. Let $A_1=\diag(a,b)$ and $A_2=\diag(c,d)$. Note that $a,b,c,d$ have at most two zeros. Applying the unitary equivalence to system $A$, we obtain four cases: (i) $b=c=0$, (ii) $c=d=0$, (iii) $d=0$ and (iv) $abcd\ne0$. In case (i), $U$ is LU equivalent to Eq. \eqref{eq:u3qubitb-1}. In case (ii), $U$ is a product unitary gate, so this case does not exist. In case (iii), since $U$ is controlled from system $A$, there must be a unitary which is a linear combination of $I_2\otimes I_2$ and $\diag(1,x)\otimes\diag(1,y)$ with two nonzero coefficients. Then Lemma~\ref{le:spanOFdiagonal} implies that $x=y=-1$. So $U$ is LU equivalent to Eq. \eqref{eq:u3qubit-1} with $\b=0$. In case (iv), the similar argument also applies, and we obtain that $U$ is LU equivalent to Eq. \eqref{eq:u3qubit-1}. On the other hand, evidently both Schmidt decompositions given by Eqs. \eqref{eq:u3qubitb-1} - \eqref{eq:u3qubit-1} need satisfy the hypothesis of this lemma that $U$ is of SR-2. This completes the proof.
\end{proof}

Furthermore, we consider the classification of all genuine three-qubit unitary gates of SR-2. Specifically, in Theorem \ref{le:sr2=singular} we derive the parametric forms of the Schmidt decomposition under SLOCC (LU) equivalence for the unitary gates in the set ${\cal C}(3,2,k)$ for each SN $k$. 
To characterize the set of unitary gates with SN $k=1$, i.e., ${\cal C}(3,2,1)$, we have to present the following lemma first.

\begin{lemma}
\label{le:3qbsr2c3}
Given a positive number $c\neq 1$, there exist three nonzero complex numbers $f,~g,~h$ with $f\neq g$ and $h\neq 1$ such that the following continuous equality
\begin{equation}
\label{eq:3qbsr2c3-1}
\abs{f+c}=\abs{g+c}=\abs{fh+c}=\abs{gh+c}=1
\end{equation}
holds, if and only if $f,~g,~h$ are given in one of the following two cases:

(i) $f=e^{i\a}-c$, $g=f^*$, and $h=\frac{c^2-1}{1+c^2-2c\cos\a}$. Here, the free parameter $\a$ satisfies that $\a\in(0,\pi)\cup(\pi,2\pi)$ and $c\cos\a\neq 1$.

(ii) $f=e^{i\a}-c$, $g=-e^{i(\t-\a)}-c$, where $\t=2\arctan\big(\frac{c\sin\frac{\a+\gamma}{2}-\sin\frac{\a-\gamma}{2}}{c\cos\frac{\a+\gamma}{2}-\cos\frac{\a-\gamma}{2}}\big)\in(-\pi,\pi)$, and $h=\frac{e^{i\gamma}-c}{e^{i\a}-c}$.
Here, due to the periodicity we may assume that $\a,\gamma\in[0,2\pi)$, and such two free parameters satisfy that $\a\notin\{\gamma,\frac{\pi+\t}{2},\frac{3\pi+\t}{2}\}$, and $c\cos\frac{\a+\gamma}{2}-\cos\frac{\a-\gamma}{2}\neq 0$.
\end{lemma}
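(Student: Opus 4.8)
The starting point is to treat the four modulus-one conditions in Eq.~\eqref{eq:3qbsr2c3-1} as conditions on points lying on the unit circle. Writing $|f+c|=1$ means $f+c = e^{i\a}$ for some angle $\a$, i.e. $f = e^{i\a}-c$; similarly $g+c = e^{i\beta}$, so $g = e^{i\beta}-c$, and likewise $fh+c=e^{i\gamma}$, $gh+c = e^{i\delta}$. The substance of the lemma is then to understand the single nonlinear constraint coupling these: since $h = (fh+c-c)/f = (e^{i\gamma}-c)/(e^{i\a}-c)$ and also $h = (e^{i\delta}-c)/(e^{i\beta}-c)$, we must have
\begin{equation}
\label{eq:plan-cross}
(e^{i\gamma}-c)(e^{i\beta}-c) = (e^{i\delta}-c)(e^{i\a}-c).
\end{equation}
So the whole problem reduces to: classify quadruples of unit-circle points $(e^{i\a},e^{i\beta},e^{i\gamma},e^{i\delta})$ satisfying Eq.~\eqref{eq:plan-cross}, modulo the degenerate solutions (which would force $f=g$, $h=1$, or one of $f,g,h$ zero, all excluded by hypothesis). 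I expect the two cases (i) and (ii) of the lemma to correspond exactly to the two ways Eq.~\eqref{eq:plan-cross} can be satisfied: either the pairing is ``diagonal'' ($\{\gamma,\beta\}=\{\delta,\a\}$ as sets, giving $\gamma=\a,\delta=\beta$ which is the excluded $h=1$, or $\gamma=\delta$, $\beta=\a$, forcing $g=f^*$-type relations — this is case (i)), or it is a genuinely new solution branch coming from the real and imaginary parts of Eq.~\eqref{eq:plan-cross} — this is case (ii).

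In more detail, I would expand Eq.~\eqref{eq:plan-cross} and separate it into its real and imaginary parts. Using $c\in\mathbb R$, the equation becomes $e^{i(\beta+\gamma)} - c(e^{i\beta}+e^{i\gamma}) = e^{i(\a+\delta)} - c(e^{i\a}+e^{i\delta})$. Comparing the ``quadratic'' phases gives $\beta+\gamma=\a+\delta$ (mod $2\pi$), and then the linear part gives $e^{i\beta}+e^{i\gamma} = e^{i\a}+e^{i\delta}$, i.e. the two pairs of points have the same sum. Two pairs of unit-circle points with the same sum and the same product are either equal as unordered pairs (leading, after discarding $h=1$, to $\{\a,\delta\}=\{\gamma,\beta\}$ with $\a=\gamma$ ruled out, hence $\a=\beta$... ) — here I need to be careful and track which identification yields a nondegenerate $(f,g,h)$; the surviving subcase should be $\delta=\a$ combined with $\gamma=\beta$ being impossible, so instead $g$ is determined by requiring $|g+c|=1$ together with the product constraint, which pins $g=f^*$ and then $h=(e^{i\gamma}-c)/(e^{i\a}-c)$ with $e^{i\gamma} = $ the conjugate configuration, yielding $h=\frac{c^2-1}{1+c^2-2c\cos\a}$ after simplification — that is case (i). The other possibility is that the pairs are genuinely different but still share sum and product; solving $e^{i\beta}+e^{i\gamma}=e^{i\a}+e^{i\delta}$ and $\beta+\gamma=\a+\delta$ for $\delta$ in terms of $\a,\gamma$ (with $\beta$ then forced) via the half-angle substitution $t=\tan(\delta/2)$-type manipulation produces exactly the arctangent formula for $\t$ stated in case (ii), after identifying $\t = \a+\delta-\gamma$ (so that $\beta = \t-\a$ as written, using $g = -e^{i(\t-\a)}-c$). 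The reverse direction — that the stated $(f,g,h)$ in each case actually satisfy all four modulus equations — is a direct substitution and simplification, which I would carry out but not belabor.

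The main obstacle, I expect, is the bookkeeping in Case (ii): deriving the closed form $\t=2\arctan\big(\frac{c\sin\frac{\a+\gamma}{2}-\sin\frac{\a-\gamma}{2}}{c\cos\frac{\a+\gamma}{2}-\cos\frac{\a-\gamma}{2}}\big)$ from the pair of equations ``same sum, same product'' requires a careful half-angle reduction, and one must verify that the branch of $\arctan$ (landing in $(-\pi,\pi)$ after the factor of $2$) is the correct one and that the excluded parameter values $\a\notin\{\gamma,\frac{\pi+\t}{2},\frac{3\pi+\t}{2}\}$ and $c\cos\frac{\a+\gamma}{2}-\cos\frac{\a-\gamma}{2}\neq 0$ are precisely the loci where either the denominator vanishes, or $f=g$, or $h=1$, or some quantity becomes zero. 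A secondary subtlety is making sure the case split is exhaustive and the two cases are genuinely distinct (not overlapping except on the excluded measure-zero set), which amounts to checking that the ``equal unordered pairs'' alternative and the ``distinct pairs, same sum and product'' alternative are the only options for two points-pairs on the circle with matching elementary symmetric functions — this is just the statement that a monic quadratic is determined by its coefficients, so it is clean once phrased correctly. I would organize the write-up as: (1) reduce to Eq.~\eqref{eq:plan-cross}; (2) split into real/imaginary parts to get ``same sum, same product''; (3) the degenerate/equal-pair branch $\Rightarrow$ case (i); (4) the distinct-pair branch $\Rightarrow$ case (ii) via half-angle; (5) converse by substitution; (6) reconcile the excluded-parameter conditions with nondegeneracy of $(f,g,h)$.
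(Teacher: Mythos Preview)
Your reduction to the cross-ratio identity $(e^{i\gamma}-c)(e^{i\beta}-c)=(e^{i\delta}-c)(e^{i\alpha}-c)$ is correct, but the step immediately after is not. From
\[
e^{i(\beta+\gamma)}-c(e^{i\beta}+e^{i\gamma})=e^{i(\alpha+\delta)}-c(e^{i\alpha}+e^{i\delta})
\]
you cannot ``compare the quadratic phases'' and the ``linear part'' separately: $c$ is a fixed real number, not an indeterminate, so this is one complex equation, not two. It does \emph{not} force both $\beta+\gamma=\alpha+\delta$ and $e^{i\beta}+e^{i\gamma}=e^{i\alpha}+e^{i\delta}$. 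Worse, even if those two conditions did hold, your dichotomy ``equal unordered pairs'' versus ``distinct pairs with the same sum and product'' collapses: as you yourself note, a monic quadratic is determined by its coefficients, so two unordered pairs with the same sum and product are \emph{always} equal. That forces $\{\alpha,\delta\}=\{\beta,\gamma\}$, and both matchings ($\alpha=\beta$ or $\alpha=\gamma$) give the excluded cases $f=g$ or $h=1$ --- never $g=f^*$. So neither case (i) nor case (ii) can arise from your scheme.

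The paper's argument is geometric and the split is different. One observes that $g$ must lie, together with $f$, on the intersection of the two circles $\{z:|z+c|=1\}$ and $\{z:|zh+c|=1\}$; equivalently, both $-c$ and $-ch$ lie on the perpendicular bisector of the segment $\overline{fh,\,gh}$. The case split is then on whether $h$ is real ($h_y=0$) or not. When $h$ is real the line of centers is the real axis, so the second intersection point is $g=\bar f$, and solving $|fh+c|=1$ for real $h\neq 1$ yields $h=\frac{c^2-1}{1+c^2-2c\cos\alpha}$ --- this is case (i). When $h_y\neq 0$, computing the slope of $\overline{fh,gh}$ in two ways and equating gives a single real relation that, after half-angle manipulation, determines $\beta$ (hence $g$) in terms of $\alpha,\gamma$ via exactly the $\arctan$ formula for $\theta$ --- this is case (ii). The dichotomy $h_y=0$ versus $h_y\neq 0$ is what you should organize the proof around, not a sum/product coincidence that does not follow from the hypotheses.
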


We put the proof of Lemma \ref{le:3qbsr2c3} in Appendix \ref{sec:3qbsr2c3}.

Now, we are able to describe the classification of genuine three-qubit unitary gates of SR-2 by the key factor SN in Theorem \ref{le:sr2=singular} below. In order to understand the main result Theorem \ref{le:sr2=singular} conveniently, we also illustrate the classification in Table \ref{tab:3qubit}.

\begin{table}[!h]
\caption{The classification of genuine three-qubit unitary gates of SR-2 under local equivalence}
\centering
\begin{tabular}{|c|p{10.5cm}|p{4cm}|}
\hline
singular number & parametric Schmidt decomposition & range of parameters  \\
\hline
$k=3$ & $I_2\otimes I_2\otimes I_2+(e^{i\ph}-1)\proj{0,0,0}$ & $\ph\in(0,2\pi)$  \\
\hline
\multirow{2}*{$k=2$} & $I_2\otimes I_2\otimes I_2
+\proj{1}\otimes \proj{1}\otimes \diag(e^{i\theta}-1,e^{i\phi}-1)$ & $\theta,\phi\in(0,2\pi)$  \\
\cline{2-3}
  & $\proj{0}\otimes I_2\otimes I_2+\proj{1}\otimes\diag(1,e^{i\g})\otimes\diag(1,e^{i\d})$ & $\g,\d\in[0,2\pi)$ \\
\hline
\multirow{2}*{$k=1$} & $\proj{0}\otimes \diag(e^{i\a}-c,e^{-i\a}-c)\otimes \diag(1,\frac{c^2-1}{1+c^2-2c\cos\a})+\diag(c,1)\ox I_2\ox I_2$ & $\a\in(0,\pi)\cup(\pi,2\pi)$, $c\in(0,1)\cup(1,+\infty)$, $c\cos\a\neq 1$  \\
\cline{2-3}
  & $\proj{0}\otimes \diag(e^{i\a}-c,-e^{i(\t-\a)}-c)\otimes \diag(1,\frac{e^{i\gamma}-c}{e^{i\a}-c})+\diag(c,1)\ox I_2\ox I_2$, $\t=2\arctan\big(\frac{c\sin\frac{\a+\gamma}{2}-\sin\frac{\a-\gamma}{2}}{c\cos\frac{\a+\gamma}{2}-\cos\frac{\a-\gamma}{2}}\big)\in(-\pi,\pi)$ & $c\in(0,1)\cup(1,+\infty)$, $\a,\g\in[0,2\pi)$, $\a\notin\{\gamma,\frac{\pi+\t}{2},\frac{3\pi+\t}{2}\}$, and $c\cos\frac{\a+\gamma}{2}\neq \cos\frac{\a-\gamma}{2}$  \\
\hline
$k=0$ & $\diag(a,b)\ox\diag(1,c)\ox\diag(1,d)+\diag(1-a,1-b)\ox \diag(1,\frac{1-bc}{1-b})\ox\diag(1,\frac{1-bd}{1-b})$ & $a,b,c,d\in\bbC\backslash\{0,1\}$ with $a\neq b$ are a solution of Eq. \eqref{eq:u3qubiti} \\
\hline
\end{tabular}
\label{tab:3qubit}
\end{table}

\begin{theorem}
\label{le:sr2=singular}
Suppose $U\in {\cal C}(3,2,k)$ is a genuine three-qubit unitary gate of Schmidt rank two with singular number $k$. Then $0\le k\le 3$. For each $k$, the representative expressions of ${\cal C}(3,2,k)$, i.e., the Schmidt decompositions under local unitary equivalence are parameterized as follows.

(i) For $k=3$, up to a permutation of systems and under local equivalence the Schmidt decomposition of $U$ is $I_2\otimes I_2\otimes I_2+(e^{i\ph}-1)\proj{0,0,0}$, where $\ph\in(0,2\pi)$.

(ii) For $k=2$, up to a permutation of systems and under local equivalence the Schmidt decomposition of $U$ is either
\begin{eqnarray}
\label{eq:n=2,U1}
I_2\otimes I_2\otimes I_2
+\proj{1}\otimes \proj{1}\otimes \diag(e^{i\theta}-1,e^{i\phi}-1),\quad \theta,\phi\in(0,2\pi),
\end{eqnarray}
or
\begin{eqnarray}
\label{eq:n=2,U2}
\proj{0}\otimes I_2\otimes I_2+\proj{1}\otimes\diag(1,e^{i\g})\otimes\diag(1,e^{i\d}), \quad \g,\d\in[0,2\pi).
\end{eqnarray}

(iii) For $k=1$, up to a permutation of systems and under local equivalence the Schmidt decomposition of $U$ is either
\begin{equation}
\label{eq:n=1,claim}	
\proj{0}\otimes \diag(e^{i\a}-c,e^{-i\a}-c)\otimes \diag(1,\frac{c^2-1}{1+c^2-2c\cos\a})+\diag(c,1)\ox I_2\ox I_2,
\end{equation}
where $\a\in(0,\pi)\cup(\pi,2\pi)$ and $c\cos\a\neq 1$ for some positive $c\neq 1$, or
\begin{equation}
\label{eq:n=1,claim-2}  
\proj{0}\otimes \diag(e^{i\a}-c,-e^{i(\t-\a)}-c)\otimes \diag(1,\frac{e^{i\gamma}-c}{e^{i\a}-c})+\diag(c,1)\ox I_2\ox I_2,
\end{equation}
where $\t=2\arctan\big(\frac{c\sin(\frac{\a+\gamma}{2})-\sin(\frac{\a-\gamma}{2})}{c\cos(\frac{\a+\gamma}{2})-\cos(\frac{\a-\gamma}{2})}\big)$, and for some given positive $c\neq 1$ the parameters $\a,\gamma$ satisfy that $\a\neq \gamma$, $\a\neq\frac{\pi+\t}{2}$, and $c\cos(\frac{\a+\gamma}{2})-\cos(\frac{\a-\gamma}{2})\neq 0$.

(iv) For $k=0$, up to a permutation of systems and under local equivalence the Schmidt decomposition of $U$ is 
\begin{equation}
\label{eq:sr23qubit}
\diag(a,b)\ox\diag(1,c)\ox\diag(1,d)+\diag(1-a,1-b)\ox \diag(1,\frac{1-bc}{1-b})\ox\diag(1,\frac{1-bd}{1-b}),
\end{equation}
where the four parameters $a,b,c,d\in\bbC\backslash\{0,1\}$ with $a\neq b$ constitute a solution of the following system of equations 
\begin{equation}
\label{eq:u3qubiti}
\left\{
\begin{aligned}
\abs{(1-a)(1-d)+d(1-b)}&=\abs{1-b}, \\
\abs{(1-a)(1-c)+c(1-b)}&=\abs{1-b}, \\
\abs{(1-a)(1-bc)(1-bd)+acd(1-b)^2}&=\abs{1-b}^2, \\
\abs{(1-bc)(1-bd)+bcd(1-b)}&=\abs{1-b}.
\end{aligned}
\right.
\end{equation}
\end{theorem}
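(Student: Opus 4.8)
The plan is to start from the fully controlled form of Lemma~\ref{le:cohenli2013} and reduce it, party by party, to a diagonal Schmidt decomposition $U=A_1\ox A_2\ox A_3 + B_1\ox B_2\ox B_3$ with all $A_j,B_j$ diagonal and $A_j,B_j$ linearly independent (this is legitimate since we only treat genuine gates, and Lemma~\ref{le:linindep}(ii) then guarantees the decomposition is unique up to permuting the three parties). Writing $A_j=\diag(\cdot,\cdot)$ and normalising by local unitary (indeed diagonal) equivalence, I would fix the convention $A_j=\diag(a_j,b_j)$, $B_j=\diag(1-a_j,1-b_j)$ after absorbing scalars, so the unitarity of $U$ becomes a finite list of modulus-one conditions on the $2^3=8$ diagonal entries. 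The singular number $k$ counts how many of the six local operators $A_1,B_1,\dots,A_3,B_3$ are singular, i.e.\ have a zero entry; by the SLOCC-invariance established in Theorem~\ref{thm:clssification} the four cases $k=0,1,2,3$ are genuinely disjoint and exhaust everything by Lemma~\ref{le:ubofk} (here $n=3$, so $k\in\{0,1,2\}\cup\{2,3\}$, hence $0\le k\le 3$), which is why the theorem splits into (i)--(iv).

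For the large-$k$ cases I would argue as follows. For $k=3$: if three of the six local operators are singular, genuineness forces (up to permutation and relabelling $0\leftrightarrow1$ on each qubit) that it is one operator per party that is singular and they sit in the ``same corner'', giving $U\propto I-\proj{0,0,0}+e^{i\ph}\proj{0,0,0}$; one checks a single $e^{i\ph}$ survives and $\ph\in(0,2\pi)$ for genuineness. For $k=2$: Lemma~\ref{le:ubofk} pins $k\in[n-1,n]$ in the ``$\ge3$'' regime but $k=2$ can also arise ``below'', so there are two subfamilies — either the two singular operators lie on the same party (then $U$ is controlled from a single qubit with a two-parameter diagonal, yielding \eqref{eq:n=2,U1}) or on two different parties, where Lemma~\ref{le:spanOFdiagonal} forces the remaining diagonal-unitary-span condition and collapses the third party's operators to $I_2$ and $\diag(1,e^{i\d})$, giving \eqref{eq:n=2,U2}. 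For $k=1$: exactly one local operator, say $B_1$, is singular; then $U=\proj{0}\ox A_2\ox A_3+\diag(c,1)\ox I_2\ox I_2$ after normalisation, and unitarity of $U$ plus invertibility of $A_2,A_3$ reduces precisely to the four-equation system \eqref{eq:3qbsr2c3-1} of Lemma~\ref{le:3qbsr2c3} (with $f,g$ the two ratios of diagonal entries of $A_2$ and $h$ the ratio for $A_3$); substituting the two solution families (i) and (ii) of that lemma produces \eqref{eq:n=1,claim} and \eqref{eq:n=1,claim-2} verbatim, and the parameter restrictions ($\a\ne\gamma$, $\a\ne\frac{\pi+\t}{2}$, etc.) are exactly those preventing $A_2$ or $A_3$ from becoming singular (which would raise $k$) or from being proportional to the identity (which would drop $U$ out of genuineness or change the Schmidt structure).

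For $k=0$ all six local operators are invertible; normalising so that $A_j=\diag(a,b),\diag(1,c),\diag(1,d)$ and using Lemma~\ref{le:cohenli2013}'s control structure on each of the three bipartitions, the second Schmidt term is forced to be $\diag(1-a,1-b)\ox\diag(1,\tfrac{1-bc}{1-b})\ox\diag(1,\tfrac{1-bd}{1-b})$ — the $B$-factors on parties $2,3$ are determined because the $1,1,1$ entry of $U$ must be $1$ (after global normalisation) and unitarity of each $2\times2$ block on the control structure is automatic. Imposing that the eight diagonal entries of $U$ have modulus one then yields eight scalar equations in $a,b,c,d$; pairing conjugate-symmetric ones and clearing denominators collapses them to the four equations in \eqref{eq:u3qubiti}, and conversely any solution with $a,b,c,d\in\bbC\setminus\{0,1\}$, $a\ne b$ gives a genuine SR-$2$ gate with $k=0$. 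Throughout, the fact that we may take all local operators diagonal simultaneously (not just in each bipartition separately) is the technical linchpin, and it follows by combining Lemma~\ref{le:cohenli2013} over the three bipartitions with the uniqueness from Lemma~\ref{le:linindep}.

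\textbf{Main obstacle.} The delicate part is the $k=1$ and $k=0$ cases: one must show that the modulus-one (unitarity) constraints on the diagonal entries are \emph{equivalent} to — not merely implied by — the displayed equation systems, including that every restriction in the ``range of parameters'' column is necessary to keep the singular number equal to the stated $k$ (e.g.\ showing $h\ne1$ and $f\ne g$ in Lemma~\ref{le:3qbsr2c3} correspond precisely to $A_3\not\propto I_2$ and $A_2$ having distinct eigenvalues). Verifying that the substitutions from Lemma~\ref{le:3qbsr2c3} land exactly on \eqref{eq:n=1,claim}--\eqref{eq:n=1,claim-2}, and that no solution of \eqref{eq:u3qubiti} is spuriously counted in a higher-$k$ class, is where the bookkeeping is heaviest; I would organise it by first proving the generic (all-invertible) normal form and then peeling off singular cases one operator at a time, invoking Lemma~\ref{le:spanOFdiagonal} each time a party's two local operators threaten to span only a one-dimensional family of diagonal unitaries.
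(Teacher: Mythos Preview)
Your overall strategy matches the paper's: reduce to diagonal $A_j,B_j$ via Lemma~\ref{le:cohenli2013}, invoke Lemma~\ref{le:ubofk} for the range of $k$, and case-split on which local operators are singular. The $k=3$ and $k=0$ sketches are essentially what the paper does. However, two points need correction.

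\textbf{The $k=2$ subfamilies are swapped.} You write that the two singular operators lying on the \emph{same} party yields \eqref{eq:n=2,U1} and on \emph{different} parties yields \eqref{eq:n=2,U2}; it is the other way around. If $A_1,B_1$ are both singular (same party), then up to LU we have $A_1=\proj{0}$, $B_1=\proj{1}$, so $A_2\ox A_3$ and $B_2\ox B_3$ are each unitary, giving exactly the form \eqref{eq:n=2,U2}. If instead $A_1,A_2$ are singular (different parties, same Schmidt term), then the vanishing of the $A$-term on the appropriate rows forces $B_1,B_2,B_3$ all unitary; normalising them to $I_2$ gives $U=I_2^{\ox 3}+\proj{\cdot}\ox\proj{\cdot}\ox\diag(\cdot,\cdot)$, which is \eqref{eq:n=2,U1}. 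No appeal to Lemma~\ref{le:spanOFdiagonal} is needed in either subcase; it plays no role in the paper's proof of part~(ii).

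\textbf{The exclusion $c\ne 1$ in the $k=1$ case is not addressed.} You invoke Lemma~\ref{le:3qbsr2c3}, but that lemma already assumes $c\ne 1$. The statement of the theorem also has $c\ne 1$, so somewhere you must prove that $c=1$ yields no genuine solution of \eqref{eq:3qbsr2c3-1} with $f\ne g$ and $h\ne 1$. The paper handles this by a separate trigonometric lemma (Lemma~\ref{le:n=3k=1xneq1}): for $c=1$, the equation $(e^{i\alpha}-1)(e^{i\delta}-1)=(e^{i\beta}-1)(e^{i\gamma}-1)$ forces $\alpha=\beta$ or $\alpha=\gamma$, contradicting $f\ne g$ or $h\ne 1$. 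Your ``parameter restrictions prevent $A_2,A_3$ from becoming singular or proportional to $I_2$'' does not cover this, because the issue at $c=1$ is non-existence of solutions, not degeneration of an existing solution to a higher $k$.
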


\begin{proof}
It directly follows from Lemma \ref{le:ubofk} that $k\leq 3$. 
Suppose $U=A_1\otimes A_2\otimes A_3+B_1\otimes B_2\otimes B_3$ is the Schmidt decomposition of $U$. Since $U$ is of SR-2, it follows from Lemma \ref{le:cohenli2013} that we may assume that $A_j$ and $B_j$ are all diagonal matrices under local equivalence. Next, we discuss the representative expressions of ${\cal C}(n,2,k)$ by the key factor SN $k$.


(i) For $k=3$, we may assume $B_1,~B_2$ are singular without loss of generality. It follows that $A_1,~A_2,~A_3$ all have to be unitary. Thus, we can further assume that $B_1,B_2,B_3$ are all singular. By applying proper local unitary gates, we may assume that $A_j=\diag(1,1)$ for every $j$ and $B_1\otimes B_2\otimes B_3=x \proj{0,0,0}$ with $\abs{x+1}=1$. It follows that $x+1=e^{i\ph}$ and the assertion holds.

(ii) For $k=2$, up to a permutation of systems there are only two different cases, i.e., either $A_1,~A_2$ are both singular or $A_1,~B_1$ are both singular. According to the assumption that $A_j$ and $B_j$ are all diagonal, we assume that either $A_1=A_2=\proj{0}$ or $A_1=\proj{0}$ and $B_1=\proj{1}$ under local equivalence.

In the case when $A_1=A_2=\proj{0}$, we obtain
\begin{eqnarray}
\label{eq:n=2,U1a}	
U=\proj{0}\otimes \proj{0}\otimes \diag(a_0,b_0)
+\diag(1,e^{i\a})\otimes\diag(1,e^{i\b})\otimes\diag(1,e^{i\gamma}),
\end{eqnarray}
where $\a,\b,\gamma\in [0,2\pi)$, and $a_0,b_0\in\mathbb{C}$. The last two matrices in the second term are unitary because $A_1$ is singular, and the first and third matrices in the second term are unitary because $A_2$ is singular. We may further apply diagonal phase gates on all three qubits, such that the global gate becomes
\begin{eqnarray}
\label{eq:n=2,U1b}	
U=\proj{0}\otimes \proj{0}\otimes \diag(a,b) +I_2\ox I_2\ox I_2,
\end{eqnarray}
where $a,b\in\mathbb{C}$ and $ab\ne 0$ since $A_3$ is not singular. Then, after swapping $\ket{0}$ and $\ket{1}$ on each of the first two qubits, and making use of the facts that $U$ is unitary, and that $ab\ne 0$, we obtain the form in Eq. \eqref{eq:n=2,U1}.

In the case when $A_1=\proj{0}$ and $B_1=\proj{1}$, it follows that $A_2,~A_3,~B_2,~B_3$ are all diagonal unitary. Then, after applying proper local unitary gates on last two qubits, we obtain the form in Eq. \eqref{eq:n=2,U2}. 

(iii) For $k=1$, up to a permutation of systems and under local equivalence we may assume that $A_1=\proj{0}$, $B_1=c\proj{0}+\proj{1}$ and $B_2=B_3=I_2$. It follows that
\begin{eqnarray}
\label{eq:n=1}	
U=\proj{0}\otimes \diag(f,g)\otimes \diag(1,h) +(c\proj{0}+\proj{1})\ox I_2\ox I_2,
\end{eqnarray}
where $f,g,h\in\mathbb{C}\backslash\{0\}$, $c>0$, $f\ne g$ and $h\ne 1$ because $U$ is not a product gate across any bipartition of the three-qubit system. In order to make $c>0$, there could be a phase for $\proj{0}$ in the first term, but such phase may be absorbed into $f$ and $g$. For $U$ is unitary, it requires that the $4\times 4$ matrix
\begin{equation}
\label{eq:n=1,b}
V=\diag(f,g)\ox \diag(1,h)+ c I_2\ox I_2=\diag(c+f,c+fh,c+g,c+gh)
\end{equation}
is unitary. It is equivalent to that the following continuous equality holds
\begin{equation}
\label{eq:n=1,conequ-1}
\abs{c+f}=\abs{c+fh}=\abs{c+g}=\abs{c+gh}=1.
\end{equation}
We may further assume that
\begin{eqnarray}
\label{eq:angledefs}	
c+f=e^{i\a}, \quad c+g=e^{i\b},\quad c+fh=e^{i\gamma}, \quad c+gh=e^{i\delta},
\end{eqnarray}
where $\a,\b,\gamma,\delta\in [0,2\pi)$. When $c=1$, it follows from $h=\frac{e^{i\gamma}-1}{e^{i\a}-1}=\frac{e^{i\delta}-1}{e^{i\beta}-1}$ that 
\begin{equation}
\label{eq:angles2-01}
e^{i(\b+\gamma)}-(e^{i\b}+e^{i\gamma})=e^{i(\a+\delta)}-(e^{i\a}+e^{i\delta}).
\end{equation}
From Lemma \ref{le:n=3k=1xneq1} in Appendix \ref{sec:supp} we conclude that Eq. \eqref{eq:angles2-01} holds only when $\a=\b$ or $\a=\gamma$. However, the solutions of Eq. \eqref{eq:angles2-01} contradict to the requirements $f\neq g$ and $h\neq 1$. It means there are no satisfied $f,g,h$ such that the continuous equality \eqref{eq:n=1,conequ-1} holds if $c=1$. For any given positive $c\neq 1$, all satisfied $f,g,h$ have been formulated in Lemma \ref{le:3qbsr2c3}. The two parametric Schmidt decompositions given by Eqs. \eqref{eq:n=1,claim} and \eqref{eq:n=1,claim-2} correspond to the two cases in Lemma \ref{le:3qbsr2c3} respectively.

(iv) For $k=0$, we may similarly assume the Schmidt decomposition of $U$ as 
\begin{eqnarray}
\label{eq:u3qubitf}
U=\diag(a,b)\ox\diag(1,c)\ox\diag(1,d)+\diag(v,f)\ox \diag(1,g)\ox\diag(1,h),
\end{eqnarray}
where $a,b,c,d,v,f,g,h\in\mathbb{C}\backslash\{0\}$.
Moreover, $U$ can also be decomposed as $U=\proj{0}\ox V+ \proj{1}\ox W$, where $V$ and $W$ are both diagonal unitary operators of SR-2 because otherwise $k\ge 1$. Then under LU equivalence we may assume
\begin{eqnarray}
\label{eq:u3qubite}
U=\diag(1,e^{i\a},e^{i\b},e^{i\gamma},1,1,1,e^{i\delta}),
\end{eqnarray}
where $\a,\b,\gamma,\delta\in[0,2\pi)$, and $\delta\ne 0$ because $W$ is of SR-2.

Note that Eqs. \eqref{eq:u3qubitf} and \eqref{eq:u3qubite} are compatible. This means that for any diagonal three-qubit unitary gate $U_0$, there are $2\times 2$ unitary matrices $V_1,~V_2,~V_3$ such that $U=(V_1\ox V_2 \ox V_3)U_0$ satisfies both Eqs. \eqref{eq:u3qubitf} and \eqref{eq:u3qubite}. This is because the form in Eq.~\eqref{eq:u3qubite} may be obtained from a general diagonal three-qubit unitary gate by multiplying it with a local unitary gate $\diag(1,e^{i\theta_1})\ox \diag(1,e^{i\theta_2})\ox \diag(1,e^{i\theta_3})$, where $\theta_j\in [0,2\pi)$, $j=1,2,3$. Such a local diagonal unitary operation does not change the general form in Eq.~\eqref{eq:u3qubitf} but only changes the values of $a,b,c,d,v,f,g,h$. 

By comparing Eq.~\eqref{eq:u3qubitf} with Eq.~\eqref{eq:u3qubite}, we obtain $a\diag(1,d,c,cd)+v\diag(1,h,g,gh)=\diag(1,e^{i\a},e^{i\b},e^{i\gamma})$, and $b\diag(1,d,c,cd)+f\diag(1,h,g,gh)=\diag(1,1,1,e^{i\delta})$, where $\delta\ne 0$. It follows that $v=1-a$, $f=1-b$, $g=\frac{1-bc}{1-b}$, $h=\frac{1-bd}{1-b}$, and
\begin{equation}
\label{eq:u3qubith}
\left\{
\begin{aligned}
ad+\frac{(1-a)(1-bd)}{1-b}&=e^{i\a}, \\
ac+\frac{(1-a)(1-bc)}{1-b}&=e^{i\b}, \\
acd+\frac{(1-a)(1-bc)(1-bd)}{(1-b)^2}&=e^{i\gamma}, \\
bcd+\frac{(1-bc)(1-bd)}{1-b}&=e^{i\delta},
\end{aligned}
\right.
\end{equation}
where $a\ne 1$ and $b\ne 1$ because $v$ and $f$ are nonzero, and $abcd\delta\ne 0$ should still hold. Moreover, since $U$ is genuine, from Eq. \eqref{eq:u3qubitf} we conclude that $a,b,c,d$ additionally satisfy $\frac{a}{b}\neq \frac{v}{f}$, $c\neq g$ and $d\neq h$. By direct calculation, these constraints are indeed $a\neq b$, $c\neq 1$ and $d\neq 1$. One can verify that $\delta$ cannot be zero under these constraints for $a,b,c,d$. 
By noting that $\abs{e^{i\a}}=\abs{e^{i\b}}=\abs{e^{i\gamma}}=\abs{e^{i\delta}}=1$, we obtain the system of equations as Eq. \eqref{eq:u3qubiti}.

This completes the proof.
\end{proof}


Here, we would like to give some remarks on the clssification of genuine three-qubit unitary gates of SR-2 by comparing the parametric Schmidt decompositions formulated in Theorem \ref{le:sr2=singular} with the two forms given by Eq.~\eqref{eq:u3qubitb-1} and Eq.~\eqref{eq:u3qubit-1} in Lemma \ref{le:sr2b}.
First, one can verify the expression of $U$ in Theorem \ref{le:sr2=singular} (i) is not locally equivalent to the form in Eq.~\eqref{eq:u3qubitb-1} or Eq.~\eqref{eq:u3qubit-1} by direct calculation. Second, on the one hand, the expression of $U$ given by Eq.~\eqref{eq:n=2,U1} in Theorem \ref{le:sr2=singular} (ii) is also not locally equivalent to the form in Eq.~\eqref{eq:u3qubitb-1} or Eq.~\eqref{eq:u3qubit-1}. This can be deduced from Lemma~\ref{le:spanOFdiagonal}, as there are three different diagonal entries in the expression Eq.~\eqref{eq:n=2,U1}. It implies that none of the $AB,~AC$, and $BC$ spaces of $U$ given by Eq.~\eqref{eq:n=2,U1} is spanned by two product unitary matrices. On the other hand, the expression of $U$ given by Eq. \eqref{eq:n=2,U2} is locally equivalent to the form in Eq.~\eqref{eq:u3qubitb-1} by observation. Third, some but not all expressions of $U$ in Theorem \ref{le:sr2=singular} (iii) are locally equivalent to the form in Eq.~\eqref{eq:u3qubit-1}. For example, $U=i\sqrt{1-c^2}\proj{0}\otimes\sigma_3\otimes\sigma_3 +(c\proj{0}+\proj{1})\ox I_2\ox I_2$ for $c\in(0,1)$ is a special form of Eq. \eqref{eq:n=1,claim}, and it is also a special form of Eq.~\eqref{eq:u3qubit-1} under local equivalence. However, for $c>1$, the unitary gate $U$ given by Eq. \eqref{eq:n=1,claim} is not locally equivalent to the form in Eq.~\eqref{eq:u3qubit-1}. Finally, the parametric Schmidt decomposition formulated in Theorem \ref{le:sr2=singular} (iv) is locally equivalent to the form in Eq.~\eqref{eq:u3qubit-1} if and only if the $BC$ space of $U$ is spanned by two product unitary matrices $I_2\ox I_2$ and $\sigma_3\ox\sigma_3$.

\subsection{Genuine multiqubit unitary gate of SR-2}
\label{subsec:nqubit=sr2}

Next, we extend the classification of genuine three-qubit unitary gates of SR-2 to multiqubit scenarios. We similarly utilize the key notion SN to present a complete classification of genuine multiqubit unitary gates of SR-2 by explicitly expressing the parametric Schmidt decompositions of unitary gates in ${\cal C}(n,2,k)$ for $n\geq 4$. The classification is specifically described in Theorem \ref{le:multi=sr2} below.
Note that in Theorem \ref{le:sr2=singular} (iv) for the case of SN $k=0$, the Schmidt decomposition is parameterized in an implicit form, where the parameters are contained in a system of equations. Nevertheless, as we shall see from Theorem \ref{le:multi=sr2}, the Schmidt decompositions of unitary gates in ${\cal C}(n,2,k)$ are parameterized in explicit expressions for every SN $k$ when $n\geq 4$. Thus, the three-qubit case seems to be the most difficult case in the classification of genuine multiqubit unitary gates of SR-2. Before presenting our main result Theorem \ref{le:multi=sr2}, we need the following lemma to analyze the case of SN $k=0$.

\begin{lemma}
\label{le:ngeq4,k=0}
Suppose $U=\proj{0}\ox G+\proj{1}\ox H$ is a genuine $n$-qubit unitary gate of Schmidt rank two with singular number $k=0$, where $n\geq 4$.

(i) Both $G$ and $H$ are $(n-1)$-qubit unitary gates of Schmidt rank two, and are not product unitary gates across any bipartition of the last $(n-1)$ qubits.

(ii) Suppose $G$ and $H$ are both diagonal under local equivalence, and $G$ has the Schmidt decomposition $G=\diag(a,b)\ox C_3\ox\cdots\ox C_n+\diag(c,d)\ox D_3\ox\cdots\ox D_n$, where $C_3\ox\cdots\ox C_n$ and $D_3\ox\cdots\ox D_n$ are fixed for $G$, and are linearly independent from each other. Then $H$ can also be expanded with $C_3\ox\cdots\ox C_n$ and $D_3\ox\cdots\ox D_n$, i.e., $H$ has the Schmidt decomposition $H=\diag(p,q)\ox C_3\ox\cdots\ox C_n+\diag(r,s)\ox D_3\ox\cdots \ox D_n$.
\end{lemma}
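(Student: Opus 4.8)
The plan is to deduce both assertions from the unique Schmidt decomposition of $U$ (Lemma~\ref{le:linindep}) together with the $SN = 0$ hypothesis and the bipartite-reduction Lemma~\ref{le:sr2} (ii). For part (i), first observe that since $U$ has Schmidt rank two and is written as $\proj{0}\ox G + \proj{1}\ox H$, Lemma~\ref{le:sr2} (ii) already gives that each of $G$ and $H$ is an $(n-1)$-qubit unitary of Schmidt rank one or two. I would rule out Schmidt rank one as follows: if, say, $G$ were a product unitary $G = G_2\ox\cdots\ox G_n$, then $U = \proj{0}\ox G_2\ox\cdots\ox G_n + \proj{1}\ox H$; expanding $H$ in its own Schmidt decomposition and collecting terms, one checks that the Schmidt decomposition of $U$ across the cut (first qubit)$\,|\,$(rest) would force $\proj{0}$ to appear as a local operator on the first qubit, which then propagates — via the uniqueness in Lemma~\ref{le:linindep} — to the conclusion that $U$ has a singular local operator (namely $\proj{0}$ or $\proj{1}$) in its genuine Schmidt decomposition, contradicting $SN=0$. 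A cleaner route: $SN=0$ means every one of the four local factors $A_1,\dots,A_n,B_1,\dots,B_n$ in $U = A_1\ox\cdots\ox A_n + B_1\ox\cdots\ox B_n$ is invertible; restricting/projecting onto the first qubit being $\ket 0$ shows $G$ inherits a genuine Schmidt decomposition with invertible factors, hence $G$ is genuinely $(n-1)$-partite of Schmidt rank two (rank one is impossible since $SR = 1$ on the first-qubit fibres together with $U$ genuine across the $\{1\}|\{2,\dots,n\}$ cut would make $U$ a product there). The same argument applies to $H$. One must also check $G,H$ are not product across any bipartition of $\{2,\dots,n\}$; this again follows because such a split of $G$ would, after re-tensoring with $\proj 0$, give a split of $U$ across the corresponding bipartition of $\{1,\dots,n\}$ that has $\proj 0$ as a singular factor — but $U$ genuine rules out the product structure, and the $SN=0$ hypothesis rules out the singular factor, so no contradiction-free possibility remains except $G$ genuine.

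For part (ii), the key point is that $U$, being diagonal of Schmidt rank two across the very first cut $\{1\}\,|\,\{2,\dots,n\}$ (indeed across every cut), has a well-defined two-dimensional ``$\{2,\dots,n\}$-space'' — the span of the operators appearing on qubits $2,\dots,n$ — and I claim $G$ and $H$ share this space. Concretely, view $U$ as a bipartite operator on (qubit $1$)$\,\ox\,$(qubits $2$ through $n$): its Schmidt rank is two, so its $\{2,\dots,n\}$-space is exactly $2$-dimensional, spanned by some linearly independent $M, N$, and we may write $U = X_0\ox M + X_1\ox N$ with $X_0, X_1$ operators on qubit $1$. Setting the first qubit to $\ket 0$ and to $\ket 1$ respectively shows $G = \bra 0 X_0\ket 0\, M + \bra 0 X_1\ket 0\, N$ and $H = \bra 1 X_0\ket 1\, M + \bra 1 X_1\ket 1\, N$ (using that $G,H$ are diagonal, $X_0,X_1$ may be taken diagonal); hence $G$ and $H$ both lie in $\lin\{M,N\}\ox(\text{identity on qubit }1\text{ stripped})$, i.e. their Schmidt decompositions on qubits $2,\dots,n$ use the \emph{same} pair of operators. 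Now by part (i), $G$ itself has Schmidt rank two, so its $\{3,\dots,n\}$-structure is rigid; since $G = \diag(a,b)\ox C_3\ox\cdots\ox C_n + \diag(c,d)\ox D_3\ox\cdots\ox D_n$ is \emph{its} (unique, by Lemma~\ref{le:linindep}, since $G$ is genuine) Schmidt decomposition, the operators $M$ and $N$ above on qubits $2,\dots,n$ must each be proportional to $C_3\ox\cdots\ox C_n$ and $D_3\ox\cdots\ox D_n$ respectively — because $\lin\{M,N\} = \lin\{\diag(a,b)\ox C_3\ox\cdots,\ \diag(c,d)\ox D_3\ox\cdots\}$ viewed now with qubit $2$ separated out, and matching genuine Schmidt decompositions. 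Absorbing the proportionality constants, $H = \diag(p,q)\ox C_3\ox\cdots\ox C_n + \diag(r,s)\ox D_3\ox\cdots\ox D_n$ for suitable scalars $p,q,r,s$, which is the claim.

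\textbf{Main obstacle.} The delicate point is the bookkeeping in part (ii): one needs to pass between two different ``cuts'' of the same operator — the cut $\{1\}|\{2,\dots,n\}$, which is used to extract the common two-dimensional $\{2,\dots,n\}$-space for $G$ and $H$, and the cut $\{2\}|\{3,\dots,n\}$, through which the rigidity of $G$'s Schmidt decomposition (with the fixed factors $C_3\ox\cdots\ox C_n$, $D_3\ox\cdots\ox D_n$) is phrased — and argue that the spanning pair $\{M, N\}$ on qubits $2,\dots,n$ is forced, up to scalars, to be the pair $\{\,\diag(a,b)\text{-part},\ \diag(c,d)\text{-part}\,\}$ split off from $G$. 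This requires invoking uniqueness (Lemma~\ref{le:linindep}) for the genuine $(n-1)$-qubit gate $G$ — legitimate only once part (i) is established, which is why the two parts must be proved in order — and carefully noting that $n \geq 4$ guarantees $\{3,\dots,n\}$ is non-empty so that ``$C_3\ox\cdots\ox C_n$'' is a genuine multipartite product rather than a vacuous object; the case $n = 3$ is genuinely different and is exactly why three qubits was handled separately in Theorem~\ref{le:sr2=singular}. Everything else is linear algebra on $2\times 2$ diagonal matrices plus the already-proven structural lemmas.
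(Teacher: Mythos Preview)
Your overall strategy is right and close to the paper's, but two steps in your sketch do not hold as written.

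\textbf{Part (i).} Your ``cleaner route'' is the correct framework: both $G$ and $H$ lie in $\lin\{A_2\ox\cdots\ox A_n,\,B_2\ox\cdots\ox B_n\}$, and since every pair $A_j,B_j$ is linearly independent (genuineness of $U$), this forces $G$ and $H$ to be genuine of Schmidt rank two. But your stated reason that Schmidt rank one is impossible --- ``SR $=1$ on the first-qubit fibres together with $U$ genuine across $\{1\}|\{2,\dots,n\}$ would make $U$ a product there'' --- is false: $U=\proj{0}\ox G+\proj{1}\ox H$ with $G$ product and $H$ not is still non-product across that cut. The actual argument (which the paper compresses into ``otherwise SN $k>0$'') is: if $G$ is product then, since the only product elements of $\lin\{A_2\ox\cdots,\,B_2\ox\cdots\}$ are scalar multiples of the two generators, $G\propto A_2\ox\cdots$ (say); substituting back into $U$ and invoking uniqueness of $U$'s Schmidt decomposition forces $B_1\propto\proj{1}$, a singular local factor, contradicting SN $=0$. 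Your ``re-tensoring with $\proj{0}$'' argument for genuineness of $G$ is likewise broken --- it produces only one summand of $U$, not $U$ itself --- but the span argument just described already gives genuineness (a product splitting of $G$ would make its $S_2$-space one-dimensional, whereas $\alpha A_2\ox\cdots+\beta B_2\ox\cdots$ with $\alpha\beta\neq0$ has two-dimensional $S_2$-space).

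\textbf{Part (ii).} Your claim that ``$M$ and $N$ must each be proportional to $C_3\ox\cdots\ox C_n$ and $D_3\ox\cdots\ox D_n$'' is a type mismatch: $M,N$ act on qubits $2,\dots,n$ while the $C,D$ products act on qubits $3,\dots,n$. What you presumably intend --- and what does work once part (i) is in hand --- is that uniqueness of $G$'s Schmidt decomposition (Lemma~\ref{le:linindep}, using $n-1\geq 3$) identifies $\{\alpha A_2\ox\cdots\ox A_n,\,\beta B_2\ox\cdots\ox B_n\}$ with $\{\diag(a,b)\ox C_3\ox\cdots,\,\diag(c,d)\ox D_3\ox\cdots\}$ up to scalars, and then $H\in\lin\{A_2\ox\cdots,\,B_2\ox\cdots\}$ gives the claim. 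The paper takes a more direct route that avoids uniqueness of $G$ altogether: it uses that $U$ has Schmidt rank two across the bipartition $\{1,2\}\,|\,\{3,\dots,n\}$, so the $\{3,\dots,n\}$-space of $U$ is two-dimensional; this space contains $C_3\ox\cdots,\,D_3\ox\cdots$ (from $G$) and $E_3\ox\cdots,\,F_3\ox\cdots$ (from any Schmidt decomposition of $H$), and since the first two are linearly independent they span it, whence $E_3\ox\cdots,\,F_3\ox\cdots\in\lin\{C_3\ox\cdots,\,D_3\ox\cdots\}$ and the conclusion follows by substitution. This single cut replaces your two-step cut-and-match argument.
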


We put the proof of Lemma \ref{le:ngeq4,k=0} in Appendix \ref{sec:ngeq4,k=0}.

Then we present the complete classification of genuine $n$-qubit unitary gates of SR-2 in Theorem \ref{le:multi=sr2} below. Analogously, in order to readily understand this main result we also illustrate the classification in Table \ref{tab:nqubit}.

\begin{table}[!h]
\caption{The classification of genuine $n$-qubit ($n\geq 4$) unitary gates of SR-2 under local equivalence}
\centering
\begin{tabular}{|c|c|c|}
\hline
singular number & parametric Schmidt decomposition & range of parameters  \\
\hline
$k=n$ & $I_2\otimes\cdots\otimes I_2+(e^{i\theta}-1)\proj{0}^{\otimes n}$ & $\t\in(0,2\p)$  \\
\hline
$k=n-1$ & $I_2\otimes\cdots\otimes I_2+\proj{0}^{\otimes n-1}\otimes \diag(e^{i\theta}-1,e^{i\phi}-1)$ & $\theta,\phi\in(0,2\pi)$ and $\theta\ne\phi$  \\
\hline
$k=2$ & $U=\proj{0}\otimes I_2\otimes \cdots \otimes I_2+\proj{1}\otimes\diag(1,e^{i\b_2})\otimes\cdots\otimes\diag(1,e^{i\b_n})$ & $\b_2,\cdots,\b_n\in(0,2\pi)$  \\
\hline
$k=1$ & $U=i\cos\a\proj{0}\ox \sigma_3^{\ox (n-1)} + \diag(\sin\a,1)\ox I_2^{\ox (n-1)}$ & $\a\in(0,\frac{\pi}{2})\cup(\frac{\pi}{2},\pi)$  \\
\hline
$k=0$ & $U=\diag(\cos\a,\cos\b)\ox I_2^{\ox (n-1)}+i\diag(\sin\a,\sin\b)\ox\sigma_3^{\ox (n-1)}$ & $\a,\b\neq \frac{k\pi}{2}$ for $k=0,1,2,3$  \\
\hline
\end{tabular}
\label{tab:nqubit}
\end{table}

\begin{theorem}
\label{le:multi=sr2}
Suppose $U\in {\cal C}(n,2,k)$ is a genuine $n$-qubit unitary gate of Schmidt rank two with singular number $k$, where $n\geq 4$ and $k\in[0,2]\cup [n-1,n]$. For each $k$, the representative expressions of ${\cal C}(n,2,k)$, i.e., the Schmidt decompositions under local unitary equivalence are parameterized as follows.

(i) For $k=n$, up to a permutation of systems and under local equivalence the Schmidt decomposition of $U$ is 
\begin{eqnarray}
\label{eq:multi,k=n}
I_2\otimes\cdots\otimes I_2+(e^{i\theta}-1)\proj{0}^{\otimes n},
\end{eqnarray}
where $\t\in(0,2\p)$.

(ii) For $k=n-1$, up to a permutation of systems and under local equivalence the Schmidt decomposition of $U$ is
\begin{eqnarray}
\label{eq:multi,k=n-1}
I_2\otimes\cdots\otimes I_2+\proj{0}^{\otimes n-1}\otimes \diag(e^{i\theta}-1,e^{i\phi}-1),
\end{eqnarray}
where $\theta,\phi\in(0,2\pi)$ and $\theta\ne\phi$.

(iii) For $k=2$, up to a permutation of systems and under local equivalence the Schmidt decomposition of $U$ is
\begin{eqnarray}
\label{eq:multi,k=2,U2}
U=\proj{0}\otimes I_2\otimes \cdots \otimes I_2+\proj{1}\otimes\diag(1,e^{i\b_2})\otimes
\cdots\otimes\diag(1,e^{i\b_n}),
\end{eqnarray}
where $\b_2,\cdots,\b_n\in(0,2\pi)$.

(iv) For $k=1$, up to a permutation of systems and under local equivalence the Schmidt decomposition of $U$ is
\begin{eqnarray}
\label{eq:multi,k=1,U0}
U=i\cos\a\proj{0}\ox \sigma_3^{\ox (n-1)} + \diag(\sin\a,1)\ox I_2^{\ox (n-1)},
\end{eqnarray}
where $\a\in(0,\frac{\pi}{2})\cup(\frac{\pi}{2},\pi)$. 

(v) For $k=0$, up to a permutation of systems and under local equivalence the Schmidt decomposition of $U$ is
\begin{eqnarray}
\label{eq:multi,k=0,U0}
U=\diag(\cos\a,\cos\b)\ox I_2^{\ox (n-1)}+i\diag(\sin\a,\sin\b)\ox\sigma_3^{\ox (n-1)}
\end{eqnarray}
where $\a,\b\neq \frac{k\pi}{2},~k=0,1,2,3$.
\end{theorem}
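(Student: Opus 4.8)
The plan is to fix the Schmidt decomposition and run a case analysis on $k\in\{0,1,2,n-1,n\}$, the admissible values by Lemma~\ref{le:ubofk}. By Lemma~\ref{le:cohenli2013} I may take $U$ diagonal, and since $U$ is genuine with $n\ge 3$ its Schmidt decomposition $U=A_1\ox\cdots\ox A_n+B_1\ox\cdots\ox B_n$ is unique by Lemma~\ref{le:linindep}, with all $A_j,B_j$ diagonal and each pair $A_j,B_j$ linearly independent. Throughout I use the scaling freedom $A_j\ox X=(\lambda A_j)\ox(\lambda^{-1}X)$ inside a term, together with $2\times 2$ diagonal gates and the swap of $\ket{0}$ and $\ket{1}$, to normalize the singular factors to $\proj{0}$ (or $\diag(b,1)$) and the surviving nonsingular factors to $I_2$ or $\sigma_3$.

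For $k=n$ and $k=n-1$ I first record two structural facts. If $A_j$ and $B_{j'}$ are singular with $j\neq j'$, then the diagonal entry of $U$ at the computational vector carrying a $\ket{1}$ (or a $\ket{0}$, according to which rank-one projectors they are) in positions $j$ and $j'$ vanishes, contradicting unitarity; and if a single slot carries two linearly independent singular operators, $U$ is a control from that qubit whose two blocks are proportional to unitaries, which forces every other local factor nonsingular and hence $\mathrm{SN}=2$, impossible when $k\ge n-1\ge 3$. Thus all $k$ singular operators sit in one term and in distinct slots; after normalizing them to $\proj{0}$, the modulus conditions $\abs{\prod_j(A_j)_{i_j i_j}}=1$ over computational vectors with a flip — which for $n\ge 3$ over-determine the local moduli — force each remaining $A_j$ to be a scalar times a unitary, hence normalizable to $I_2$; the leftover modulus conditions then give Eq.~\eqref{eq:multi,k=n} for $k=n$ and Eq.~\eqref{eq:multi,k=n-1} for $k=n-1$, with $\theta\neq\phi$ forced by genuineness.

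For $k=2$ the same dichotomy leaves three sub-cases: two singulars in one slot, which after conjugating by a local $I_2\ox M^{-1}$ gives exactly the genuine-control form~\eqref{eq:multi,k=2,U2}; two singulars in the $B$-term (or $A$-term) in distinct slots, which survives for $n=3$ as Eq.~\eqref{eq:n=2,U1} but must be ruled out for $n\ge 4$; and the mixed configuration, already excluded above. In the second configuration one reduces to $\bigox_{j\ge 3}\diag(\zeta_j,\eta_j)+I$ being unitary with at least two factors and $\zeta_j\neq\eta_j$, and four of the eigenvalue conditions yield $(e^{i\psi_1}-1)(e^{i\psi_4}-1)=(e^{i\psi_2}-1)(e^{i\psi_3}-1)$, which by the coincidence relation underlying the proof of Theorem~\ref{le:sr2=singular}(iii) forces $\zeta_j=\eta_j$ for some $j$, a contradiction. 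For $k=1$ I place the lone singular as $A_1=\proj{0}$, so $U=\proj{0}\ox M+B_1\ox N$; unitarity of the $\ket{1}$-block makes $N$ a scalar times a unitary, and conjugating by the invertible local $I_2\ox N^{-1}$ (allowed since $\mathrm{SLOCC}=\mathrm{LU}$ for unitaries) brings $U$ to $\proj{0}\ox W+\diag(b,1)\ox I_2^{\ox(n-1)}$ with $W=\bigox_{j=2}^n\diag(x_j,y_j)$, $b\neq 0$, $x_j\neq y_j$, and $W+bI$ unitary. The crux is a lemma that, when there are at least three tensor factors with $x_j\neq y_j$ (exactly where $n\ge 4$ enters), having every joint eigenvalue of $W$ on the circle $\abs{w+b}=1$ forces each $\diag(x_j,y_j)\propto\sigma_3$; as before one fixes all but two factors and feeds a relation $(e^{i\psi_1}-b)(e^{i\psi_4}-b)=(e^{i\psi_2}-b)(e^{i\psi_3}-b)$ into the coincidence lemma. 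A global-phase normalization then writes $b=\sin\a$ and the common eigenvalue of $W$ as $\pm i\cos\a$, giving Eq.~\eqref{eq:multi,k=1,U0}.

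The case $k=0$ is the main obstacle and rests on Lemma~\ref{le:ngeq4,k=0}: writing $U=\proj{0}\ox G+\proj{1}\ox H$ as a control from the first qubit, $G$ and $H$ are genuine $(n-1)$-qubit Schmidt-rank-two unitaries of singular number $0$ that share the same product factors $C=C_3\ox\cdots\ox C_n$ and $D=D_3\ox\cdots\ox D_n$. I induct on $n$, with base $n-1=3$ furnished by Theorem~\ref{le:sr2=singular}(iv). By the inductive hypothesis (and up to which qubit is taken as the control), $G$ is LU equivalent to a specialization of the $(n-1)$-qubit form~\eqref{eq:multi,k=0,U0}, so $C_j\propto I_2$ and $D_j\propto\sigma_3$ for $j\ge 3$; the shared-factor structure then pins the shape of $H$, and imposing unitarity of $U$ on all $2^n$ coordinates — which over-determines the surviving parameters once $n\ge 4$ — forces $H$ into the matching shape and collapses $U$ to Eq.~\eqref{eq:multi,k=0,U0} with $\a,\b$ avoiding the multiples of $\pi/2$. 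The genuinely delicate step is $n=4$, where $G$ and $H$ are a priori only available in the implicit form of Theorem~\ref{le:sr2=singular}(iv), and one must show that gluing two such pieces with common factors into a genuine Schmidt-rank-two singular-number-zero unitary does force the explicit $\sigma_3^{\ox(n-1)}$ shape; once this is done, the cases $k=n,n-1,2$ are routine given the zero-block and coincidence arguments, and the $k=1$ collapse lemma is the second most delicate point.
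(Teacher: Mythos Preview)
Your treatment of $k\in\{n,n-1,2\}$ is essentially the paper's, but the $k=1$ and $k=0$ arguments both have genuine gaps.

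For $k=1$ you invoke a ``coincidence lemma'' claiming that $(e^{i\psi_1}-b)(e^{i\psi_4}-b)=(e^{i\psi_2}-b)(e^{i\psi_3}-b)$ forces a coincidence among the $\psi_j$. That is true only for $b=1$ (Lemma~\ref{le:n=3k=1xneq1}); for $b\neq 1$ it is precisely Lemma~\ref{le:3qbsr2c3} that produces nontrivial solutions, which is why the three-qubit $k=1$ class in Theorem~\ref{le:sr2=singular}(iii) is nonempty. So fixing all but two factors and feeding the product relation into a coincidence lemma cannot close the argument. The paper instead uses a geometric step that genuinely needs three factors: writing $A_2\ox\cdots\ox A_n=g\,A'_2\ox\cdots\ox A'_n$ with $A'_j=\diag(1,a_j)$, the eight numbers $g,\,ga_4,\,ga_3,\,ga_3a_4,\,ga_2,\,ga_2a_4,\,ga_2a_3,\,ga_2a_3a_4$ all lie on the circle $|w+c|=1$; the last four are obtained from the first four by multiplication by $a_2$, hence the first four also lie on the circle of radius $1/|a_2|$ centered at $-c/a_2$. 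If the first four are not collinear they determine a unique circle, forcing $a_2=1$ (contradiction); if collinear, at most two are distinct, forcing $a_3=a_4=-1$. This circle-rigidity is the mechanism separating $n\ge4$ from $n=3$, and your proposal does not supply an equivalent.

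For $k=0$ your induction stalls exactly where you flag it: at $n=4$ the blocks $G,H$ are three-qubit gates available only in the implicit form of Theorem~\ref{le:sr2=singular}(iv), and you give no way to extract $C_j\propto I_2$, $D_j\propto\sigma_3$ from that. The paper does not induct. From Lemma~\ref{le:ngeq4,k=0} it obtains four diagonal unitaries $T_1,\dots,T_4$ in the two-dimensional span of $C_3\ox\cdots\ox C_n$ and $D_3\ox\cdots\ox D_n$, and then applies Lemma~\ref{le:spanOFdiagonal}: once three of the $T_i$ are pairwise non-proportional, each $T_j$ can have at most two distinct diagonal entries, which directly forces $C_j\propto I_2$, $D_j\propto\sigma_3$ for all $j\ge3$, uniformly in $n$ and with no appeal to the $(n-1)$-qubit theorem. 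The remaining work is a direct $U^\dagger U=I$ computation on the first two qubits, plus a separate disposal of the degenerate sub-case where the $T_i$ fall into two proportional pairs. This use of Lemma~\ref{le:spanOFdiagonal} is the missing idea.
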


\begin{proof}
Suppose $U=A_1\otimes ... \otimes A_n+B_1\otimes ... \otimes B_n$ is the Schmidt decomposition, where $A_i,~B_j,~\forall 1\leq i,j\leq n$ are all diagonal matrices under local equivalence. Theorem \ref{le:sr2=singular} analyze the case of $n=3$. Here, we focus on the case of $n\geq 4$. From Lemma \ref{le:ubofk} we determine that the SN $k$ is upper bounded by the number of parties $n$, and further $k$ is either $n$ or $n-1$ if $k\in[3,n]$.

(i) For $k=n$, we conclude that either $A_1,\cdots,A_n$ are all singular or $B_1,\cdots,B_n$ are all singular. Otherwise, suppose there is an $A_i$ and a $B_j$ are both singular. It follows that all $A_s$ with $s\neq j$ are all unitary, and all $B_t$ with $t\neq i$ are all unitary. It leads to a contradiction that the SN $k=2$. Thus, without loss of generality, we may assume $B_1,\cdots,B_n$ are all singular, and thus $U=I_2^{\otimes n}+c\proj{0}^{\otimes n}$ under local equivalence. It implies that $c=e^{i\theta}-1$ where $\theta\in(0,2\pi)$. This gives the form in Eq. \eqref{eq:multi,k=n}.

(ii) For $k=n-1$, from the proof of Lemma \ref{le:ubofk} we may assume $U=I_2^{\otimes n}+c\proj{0}^{\otimes (n-1)}\ox \diag(1,t_1)$ under local equivalence and up to a permutation of systems. It follows that $c=e^{i\theta}-1$, $ct_1=e^{i\phi}-1$, where $\theta,\phi\in(0,2\pi)$, and $\theta\ne\phi$ because $U$ is a genuine $n$-qubit unitary gate. This gives the form in Eq. \eqref{eq:multi,k=n-1}.

(iii) For $k=2$, up to a permutation of systems there are two different cases for the Schmidt decomposition of $U$. The first case is when $B_1,~B_2$ are both singular, and the second case is when $A_1,~B_1$ are both singular. In the first case, we may similarly assume $U$ as Eq. \eqref{eq:multi,k=2,U}. It follows from the proof of Lemma \ref{le:ubofk} that only $k\geq n-1$ is possible. It means $n\leq k+1=3$. Thus, this case is reduced to Eq. \eqref{eq:n=2,U1} in Theorem (ii).  
For the second case, we determine that $U$ is locally equivalent to Eq. \eqref{eq:multi,k=2,U2}, because $A_i,~B_i,~\forall 2\leq i\leq n$, are all $2\times 2$ diagonal unitary matrices, and all of them are not $I_2$ for $U$ is a genuine $n$-qubit unitary gate.

(iv) For $k=1$, up to a permutation of systems and under local equivalence we may assume 
\begin{eqnarray}
\label{eq:multi,k=1,U}
U=\proj{0}\ox A_2\ox\cdots\ox A_n + (c\proj{0}+\proj{1})\ox B_2\ox\cdots\ox B_n,
\end{eqnarray}
where $c\in\mathbb{R}\backslash\{0\}$, and $A_i,~B_i,~\forall 2\le i\le n$, are all non-singular. It follows that $B_2,\dots,B_n$ are all diagonal unitary. Hence, under local equivalence we further assume
\begin{eqnarray}
\label{eq:multi,k=1,U2}
U=\proj{0}\ox A_2\ox\cdots\ox A_n + (c\proj{0}+\proj{1})\ox I_2^{\ox (n-1)}.
\end{eqnarray}
It follows that
\begin{eqnarray}
\label{eq:multi,k=1,U2b}
V:=A_2\ox\cdots\ox A_n + c I_2^{\ox (n-1)}
\end{eqnarray}
is an $(n-1)$-qubit unitary gate.
It implies that
\begin{eqnarray}
\label{eq:multi,k=1,U3}
A_2^\dag A_2\ox\cdots\ox A_n^\dag A_n + c A_2\ox\cdots\ox A_n + c A_2^\dag\ox\cdots\ox A_n^\dag=(1-c^2) I_2^{\ox (n-1)}.
\end{eqnarray}
Suppose a diagonal entry of $A_2\ox\cdots\ox A_n$ is $x$. Then Eq.~\eqref{eq:multi,k=1,U3} implies $\vert x\vert^2+2c\textrm{Re}(x)=1-c^2$, or equivalently,
\begin{eqnarray}
\label{eq:multi,k=1,xc}
\vert x+c\vert=1.
\end{eqnarray}
The case of $n=3$ is discussed in Theorem~\ref{le:sr2=singular}(iii).
Here, we further discuss the case of $n\ge 4$. Obviously, Eq. \eqref{eq:multi,k=1,xc} has a class of solutions as $x=i\cos\a$ and $c=\sin\a$. This class of solutions corresponds to the Schmidt decomposition as Eq. \eqref{eq:multi,k=1,U0}, where $\a\in(0,\frac{\pi}{2})\cup(\frac{\pi}{2},\pi)$. We aim to show the claim that there is no other type of Schmidt decomposition. We may extract a constant factor from each $A_i$ such that $A_2\ox\cdots\ox A_n=g A'_2\ox\cdots\ox A'_n$, where $g\in\mathbb{C}\backslash\{0\}$, and $A'_i=\diag(1,a_i)$ with $a_i\ne 0,1$, for $2\le i\le n$. By observation the Schmidt decomposition of $U$ given by Eq. \eqref{eq:multi,k=1,U0} corresponds to $a_i=-1$, for $2\le i\le n$. We prove the above claim by contradiction, and assume the diagonal entries $a_i$ in each $A'_i$ are not identically $-1$. 

Without loss of generality, we may assume $a_3\ne -1$. Consider the eight diagonal entries of $g A'_2\ox\cdots\ox A'_n$ acting on the last $n-4$ qubits. Explicitly, they are
\begin{eqnarray}
\label{eq:multi,k=1,elements}
g(1,a_4,a_3,a_3 a_4, a_2, a_2 a_4, a_2 a_3, a_2 a_3 a_4).
\end{eqnarray}
Similar to the discussion in the proof of Lemma \ref{le:3qbsr2c3} in Appendix \ref{sec:3qbsr2c3}, we regard the complex numbers as points on the complex plane, and discuss the following two cases.

Case (a): If the four points $1,a_4,a_3,a_3 a_4$ are not on the same line, it follows from Eq.~\eqref{eq:multi,k=1,xc} that $g+c,g a_4+c,g a_3+c,g a_3 a_4+c$ all have modulus one, and from a geometric point of view the latter four points are on the unit circle centered at the origin $(-c,0)$, where this unit circle is denoted as Circle 1. We similarly determine that the four points $g a_2,g a_2 a_4, g a_2 a_3, g a_2 a_3 a_4$ are also on Circle 1. Since $a_2$ is nonzero, we equivalently derive that $g,g a_4,g a_3,g a_3 a_4$ are on the circle of radius $\frac{1}{\abs{a_2}}$ centered at $-\frac{c}{a_2}$, where this circle is denoted as Circle 2. Thus, the four points $g,g a_4,g a_3,g a_3 a_4$ are on Circle 1 and Circle 2 simultaneously. Since we have assumed that $1,a_4,a_3,a_3 a_4$ are not on the same line, the four points $g,g a_4,g a_3,g a_3 a_4$ can only determine one circle. It implies that Circle 1 must coincide with Circle 2, and in particular the two centers of the circles have to coincide. It means $a_2=1$. However, $a_2$ cannot be $1$, otherwise $U$ would be a bipartite product unitary gate. Thus, this case is excluded.

Case (b): Suppose the four points are on the same line. On the one hand, it follows from Eq.~\eqref{eq:multi,k=1,xc} that these four points are on the same circle. On the other hand, the intersection of a line and a circle contains at most two points. Thus, there are at most two different numbers among $1,a_4,a_3,a_3 a_4$. Since $U$ is not a bipartite product unitary gate across any bipartition, it implies that $a_3\ne 1$ and $a_4\ne 1$. Then we derive that $a_3 a_4=1$, and thus $a_3=a_4=-1$. This contradicts with the assumption that $a_3\ne -1$. Thus, this case is also excluded.

Therefore, for $n\ge 4$, if $k=1$, up to a permutation of systems and under local equivalence, the Schmidt decomposition of $U$ can only be parameterized as Eq.~\eqref{eq:multi,k=1,U0}.

(v) For $k=0$, we aim to show that there is no other type of parametric Schmidt decomposition except the form in Eq.~\eqref{eq:multi,k=0,U0}. Suppose $U=A_1\otimes\cdots\otimes A_n+B_1\otimes\cdots\otimes B_n$ is the Schmidt decomposition. We first consider a special case when every $A_i$ is proportional to $I_2$, and every $B_j$ is proportional to $\sigma_3$, for any $1\leq i,j\leq n$. In this case, under local equivalence the Schmidt decomposition is parameterized as
\begin{eqnarray}
\label{eq:multi,k=0,Usimple}
U=\cos\a I_2^{\ox n}+i\sin\a\sigma_3^{\ox n},
\end{eqnarray}
where $\a\in(0,\frac{\pi}{2})$. It is a special case of Eq.~\eqref{eq:multi,k=0,U0}. Next, up to a permuation of systems we may assume that $A_1$ and $B_1$ are not simultaneously proportional to $I_2$ and $\sigma_3$, respectively.

Then, we decompose $U$ as $U=\proj{0}\ox G+\proj{1}\ox H$, where $G$ and $H$ are both unitary matrices of SR-2, and are both in the $\textrm{span}\{A_2\otimes\cdots\otimes A_n,B_2\otimes\cdots\otimes B_n\}$. It follows from Lemma \ref{le:ngeq4,k=0} (i) that $G$ and $H$ are not bipartite product matrices across any bipartition. Further, it follows from Lemma \ref{le:ngeq4,k=0} (ii) that $G$ and $H$ respectively have the following Schmidt decompositions:
\begin{eqnarray}
\label{eq:multi,gh1-1}
G&=&\diag(a,b)\ox C_3 \ox \cdots \ox C_n + \diag(c,d)\ox D_3 \ox\cdots \ox D_n,\\
H&=&\diag(p,q)\ox C_3 \ox \cdots \ox C_n + \diag(r,s)\ox D_3 \ox\cdots \ox D_n,\label{eq:multi,gh2-1}
\end{eqnarray}
where $a,b,c,d,p,q,r,s\in\mathbb{C}$, and $C_i, D_i$ ($3\le i\le n$) are all $2\times 2$ diagonal matrices. 
Let 
\begin{equation}
\label{eq:thmnqubitti}
\begin{aligned}
T_1&:=a C_3 \ox \cdots \ox C_n + c D_3 \ox\cdots\ox D_n, \quad T_2:=b C_3 \ox \cdots\ox C_n + d D_3 \ox\cdots\ox D_n, \\
T_3&:=p C_3 \ox \cdots\ox C_n + r D_3 \ox\cdots\ox D_n, \quad T_4:=q C_3 \ox \cdots\ox C_n + s D_3 \ox\cdots\ox D_n. 
\end{aligned}
\end{equation}
It follows from Eqs. \eqref{eq:multi,gh1-1} and \eqref{eq:multi,gh2-1} that $T_1, T_2, T_3, T_4$ are all unitary.
Since the four diagonal unitary matrices $T_i$'s are all linear combinations of $C_3 \ox \cdots \ox C_n$ and $D_3 \ox \cdots \ox D_n$, there are at most two linearly independent matrices among $T_1,T_2,T_3,T_4$, and some of them may be equal up to a phase. Note that $G$ and $H$ are not product unitary gates across the bipartition of the second qubit versus the last $n-2$ qubits from Lemma \ref{le:ngeq4,k=0} (i). It implies that $T_1$ and $T_2$ are not proportional to each other, and $T_3$ and $T_4$ are not proportional to each other. Finally, we discuss the following two cases, in order to show that the parametric form of $U$ is strongly constrained.

Case (a): Suppose that three of the four $T_i$'s are not proportional to each other. 
It follows from Lemma~\ref{le:spanOFdiagonal} that every $T_j$ contains at most two different diagonal elements for $1\le j\le 4$. 
Under local equivalence we may assume that $C_i=I_2$, and $D_i=\sigma_3$, for $3\le i\le n$. It follows that the Schmidt decomposition of $U$ is 
\begin{eqnarray}
\label{eq:multi,k=0}
U=A_1\ox A_2\ox I_2\ox\cdots\ox I_2+B_1\ox B_2\ox\sigma_3\ox\cdots\ox\sigma_3,
\end{eqnarray}
where $A_1,A_2,B_1,B_2$ are diagonal $2\times 2$ matrices. From $U^\dag U=I$, we obtain that
\begin{eqnarray}
\label{eq:multi,diag}
A_1^\dag A_1\ox A_2^\dag A_2 + B_1^\dag B_1\ox B_2^\dag B_2 &=& I_4,\notag\\
A_1^\dag B_1 \ox A_2^\dag B_2 + B_1^\dag A_1 \ox B_2^\dag A_2 &=& 0.
\end{eqnarray}
Assume $A_1=\diag(g_1,g_2)$, $A_2=\diag(h_1,h_2)$, $B_1=\diag(u_1,u_2)$, $B_2=\diag(v_1,v_2)$. The two equalities in Eq. \eqref{eq:multi,diag} are respectively equivalent to
\begin{equation}
\label{eq:multi,diag2}
\begin{aligned}
&\vert g_1 h_1\vert^2+\vert u_1 v_1\vert^2=\vert g_1 h_2\vert^2+\vert u_1 v_2\vert^2=\vert g_2 h_1\vert^2+\vert u_2 v_1\vert^2=\vert g_2 h_2\vert^2+\vert u_2 v_2\vert^2=1,\\
&\textrm{Re}(g_1^\ast h_1^\ast u_1 v_1)=\textrm{Re}(g_1^\ast h_2^\ast u_1 v_2)=\textrm{Re}(g_2^\ast h_1^\ast u_2 v_1)=\textrm{Re}(g_2^\ast h_2^\ast u_2 v_2)=0.
\end{aligned}
\end{equation}
According to the first continuous equality in Eq. \eqref{eq:multi,diag2}, there exist $\a,\b,\gamma,\delta\in (0,\frac{\pi}{2})$ such that
\begin{eqnarray}
\label{eq:multi,diag3}
\vert g_1 h_1\vert=\cos\a, ~\vert u_1 v_1\vert=\sin\a,&&\vert g_1 h_2\vert=\cos\b, ~\vert u_1 v_2\vert=\sin\b,\notag\\
\vert g_2 h_1\vert=\cos\gamma,~ \vert u_2 v_1\vert=\sin\gamma, &&
\vert g_2 h_2\vert=\cos\delta,~ \vert u_2 v_2\vert=\sin\delta.
\end{eqnarray}
Moreover, $\a,\b,\gamma,\delta$ satisfy $\cos\a \cos\delta=\cos\b \cos\gamma$ and $\sin\a \sin\delta=\sin\b \sin\gamma$ from Eq. \eqref{eq:multi,diag3}. They are equivalent to $\cos(\a+\delta)=\cos(\b+\gamma)$ and $\cos(\a-\delta)=\cos(\b-\gamma)$ respectively, which implies $\a=\b,~\gamma=\delta$, or $\a=\gamma,~\b=\delta$. The former implies $h_2=\pm h_1$ and $v_2=\pm v_1$, and the latter implies $g_2=\pm g_1$ and $u_2=\pm u_1$. Note that we may always assume $h_1=v_1=1$, and thus the former implies that $A_2$ and $B_2$ are $I_2$ or $\sigma_3$, and they cannot simultaneously be $I_2$ or $\sigma_3$ because $U$ is not a bipartite product unitary gate. Similarly, the latter implies that $A_1$ and $B_1$ are $I_2$ or $\sigma_3$, and they cannot simultaneously be $I_2$ or $\sigma_3$ too. By noting the phase conditions implied in Eq. \eqref{eq:multi,diag2}, we obtain the following Schmidt decomposition of $U$:
\begin{eqnarray}
\label{eq:multi,k=0,U}
U=\diag(\cos\a,\cos\b)\ox I_2^{\ox (n-1)}+i\diag(\sin\a,\sin\b)\ox\sigma_3^{\ox (n-1)},
\end{eqnarray}
where $\a,\b\neq \frac{k\pi}{2}$ for $k=0,1,2,3$, because both $\diag(\cos\a,\cos\b)$ and $\diag(\sin\a,\sin\b)$ are not singular.

Case (b): Suppose that two pairs of the $T_i$'s are proportional. We obtain either $T_1\propto T_3$ and $T_2\propto T_4$, or $T_1\propto T_4$ and $T_2\propto T_3$. In the former case we further obtain that the two scale factors $k_1,~k_2$ such that $T_3=k_1 T_2$ and $T_4=k_2 T_2$ are not equal, because $G$ and $H$ are linearly independent. Then if we expand $U$ using $C_3 \ox \cdots \ox C_n$ and $D_3 \ox \cdots \ox D_n$ on the last $n-2$ qubits, the two operators on the first two qubits respectively associated with $C_3 \ox \cdots \ox C_n$ and $D_3 \ox \cdots \ox D_n$ are not product operators, which violates that $U$ is of SR-2. Thus, this case is impossible. In the latter case we may apply some local diagonal unitary operators on the first two qubits, and then assume $T_1=T_4$, and $T_3=e^{i\phi}T_2$. Then we obtain $\diag(a,b,p,q)=\diag(a,b,e^{i\phi}b,a)$ and $\diag(c,d,r,s)=\diag(c,d,e^{i\phi}d,c)$. Since $U$ is of SR-2, we similarly conclude that the operators on the first two qubits respectively associated with $C_3 \ox \cdots \ox C_n$ and $D_3 \ox \cdots \ox D_n$ are two product operators. It follows that $b=\pm a e^{-i\phi/2}$ and $d=\pm c e^{-i\phi/2}$. Furthermore, since $U$ is not a bipartite product unitary gate across any bipartition, we have $ad=-bc$.
Hence, by further applying some local diagonal unitary gates on the first two qubits, we obtain $U=a I_2\ox I_2\ox C_3 \ox \cdots \ox C_n+c\sigma_3\ox\sigma_3\ox D_3 \ox \cdots \ox D_n$. This violates the initial assumption that $A_1$ and $B_1$ in the Schmidt decomposition are not simultaneously proportional to $I_2$ and $\sigma_3$, respectively. Hence this case is also excluded.

This completes the proof.
\end{proof}


\section{Application: The characterization of three-qubit diagonal unitary gates}
\label{sec:3qubidiagU}

In this section we further study three-qubit diagonal unitary gates under LU equivalence. As the term suggests, three-qubit diagonal unitary gates are in the form of diagonal matrix. Since diagonal unitary gates physically are controlled gates controlled from every party of the quantum system \cite[Lemma 2]{cy13}, they are indispensable for quantum circuits. For example, the two-qubit CNOT gate is a common controlled gate, which is LU equivalent to a diagonal unitary matrix. As we know from Lemma \ref{le:cohenli2013}, every unitary gate of SR-2 is LU equivalent to a diagonal one. Thus, all three-qubit unitary gates of SR-2 are contained in the set of three-qubit diagonal unitary gates under local equivalence. In Sec. \ref{subsec:twoexps}, we discuss two specific examples of three-qubit diagonal unitary gates of SR-2. It is helpful to understand the essential difference between the bipartite scenario and multipartite scenarios, and the core role of factor SN in this paper. In Sec. \ref{subsec:3qdiagU}, we show that three-qubit diagonal unitary gates have SR at most three, and give a complete characterization of genuine three-qubit diagonal unitary gates.

\subsection{Two typical examples of three-qubit diagonal unitary gates}
\label{subsec:twoexps}

In this subsection we discuss two examples of three-qubit diagonal unitaries. The first example reveals an essential difference between the tripartite scenario and the bipartite scenario, and reflects the factor SN cannot be used to classify bipartite unitary gates. The second example is the so-called CCZ gate which is very useful in quantum computation, as it is LU equivalent to the well-known Toffoli gate.

\textbf{Example 1:} Suppose $D=\diag(-1,1,1,1,1,1,1,-1)$ is a three-qubit diagonal unitary gate of system $ABC$. One can verify that $D$ is of SR-2, and it can be decomposed as
\begin{equation}
\label{eq:ex1schdec}
D=
\diag(1,-i)
\otimes 
\diag(1,-i)
\otimes
\frac{iI_2-\s_3}{2}+
\diag(1,i)
\otimes 
\diag(1,i)
\otimes
\frac{-iI_2-\s_3}{2}.
\end{equation}
Obviously, there is no local singular matrix in the Schmidt decomposition as Eq. \eqref{eq:ex1schdec}, and thus the SN of $D$ is zero. According to Theorem \ref{le:sr2=singular} (iv), the Schmidt decomposition of $D$ must coincide with the the form in Eq. \eqref{eq:sr23qubit} under local equivalence and up to a permutation of systems. It means that Eq. \eqref{eq:ex1schdec} gives a special solution for the system of equations \eqref{eq:u3qubiti}, i.e., $a=\frac{1-i}{2}$, $b=\frac{1+i}{2}$, and $c=d=-i$. Moreover, one can verify that $D$ is also LU equivalent to $I_2\ox I_2\ox \diag(\cos\a,\cos\b)+\sigma_3\ox \sigma_3\ox \diag(i\sin\a,i\sin\b)$ where $\a=\frac{\pi}{4}$ and $\b=-\frac{\pi}{4}$. It follows that the Schmidt decomposition of $D$ also coincides with the parametric form in Eq. \eqref{eq:multi,k=0,U0} from Theorem \ref{le:multi=sr2} (v). Thus, we conclude that the form in Eq. \eqref{eq:sr23qubit} is reduced to that in Eq. \eqref{eq:multi,k=0,U0} for some solutions of the system of equations \eqref{eq:u3qubiti}. Nevertheless, for example, if the solution of the system of equations \eqref{eq:u3qubiti} satisfies $\abs{c}\neq 1$ or $\abs{d}\neq 1$, the form in Eq. \eqref{eq:sr23qubit} may not be reduced to that in Eq. \eqref{eq:multi,k=0,U0}.

Furthermore, it follows directly from the SR-2 condition that either of the $AB$, $AC$ and $BC$ spaces of unitary $D$ is spanned by exactly two product matrices. From the form of $D$ in Eq. \eqref{eq:ex1schdec} , we obtain that in such a linear space spanned by two product matrices, there are at most two product (unitary) matrices up to global coefficients. Hence $D$ cannot be written as $D=\sum^2_{j=1}A_j\otimes B_j\otimes C_j$ where one of the pairs $(A_1,A_2)$, $(B_1,B_2)$ and $(C_1,C_2)$ are orthogonal projectors. Therefore, the matrix $D$ is a typical example different from the scenario of bipartite unitary gates of SR-2, as the latter allows the expression $U=P\otimes V+(I-P)\otimes W=V \otimes Q+W \otimes (I-Q)$ for some projectors $P$ and $Q$ \cite{cy13}.

\textbf{Example 2:} The so-called CCZ gate is a three-qubit diagonal unitary gate in the matrix form as $U=\diag(1,1,1,1,1,1,1,-1)$, and has the Schmidt decomposition as
\begin{equation}
\label{eq:cczsd}
U=I_2\ox I_2\ox I_2 -2\proj{1}\ox\proj{1}\ox\proj{1}.
\end{equation}
It is obvious that the CCZ gate has SR-2 and its SN is three, which reaches the upper bound of SN. Hence the CCZ gate is LU equivalent to the form given in Theorem \ref{le:sr2=singular} (i). 
Moreover, it is LU equivalent to the Toffoli gate, also known as controlled-controlled-not gate, whose matrix form is 
\begin{equation}
\label{eq:toffolim}
\begin{aligned}
U_{\text{Toffoli}}&=\proj{000}+\proj{001}+\proj{010}+\proj{011}\\
&+\proj{100}+\proj{101}+\ketbra{110}{111}+\ketbra{111}{110}.
\end{aligned}
\end{equation}
Physically, the effect of Toffoli gate is to flip the third qubit, if and only
if the first two qubits are both in state $\ket{1}$ (and does nothing otherwise), see Fig. \ref{fig:toffoli}. Quantum Toffoli gate is a fundamental three-qubit unitary gate, and has been shown to be a crucial component of many quantum information processing schemes, such as fault tolerant quantum circuits \cite{toffolift2013}, distributed quantum computation \cite{toffolidqc2017}, and quantum error correction \cite{toffoliqecc2011}. Thus, as a unitary gate LU equivalent to the Toffoli gate, the CCZ gate has also aroused great interest, and has been realized in several experimental protocols \cite{Toffoli09,Toffoli12,toffoliopt2017}. 

\begin{figure}[ht]
\centerline{
    \Qcircuit @C=8em @R=4em {
    \lstick{A} & \ctrl{1} & \rstick{A} \qw \\
    \lstick{B} & \ctrl{1} & \rstick{B} \qw \\
    \lstick{C} & \targ & \rstick{C\oplus(AB)} \qw
    }
}
\caption{The Toffoli gate: flip the third qubit conditioned on the $\ket{11}$ state of system $AB$.}
\label{fig:toffoli}
\end{figure}
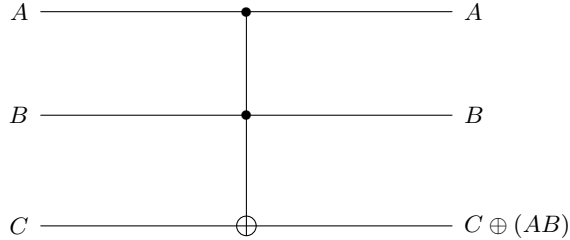


Similar to the discussion about the first example, we claim that the CCZ gate also cannot be written in the form $U=\sum^2_{j=1}A_j\otimes B_j\otimes C_j$ where one of the pairs $(A_1,A_2)$, $(B_1,B_2)$ and $(C_1,C_2)$ are orthogonal projectors. To sum up, such two examples both reflect the Schmidt decompositions for multipartite unitary gates are essentially different from that for bipartite gates.

\subsection{Characterization of three-qubit diagonal unitary gates}
\label{subsec:3qdiagU}

Every unitary gate of SR-2 is LU equivalent to a diagonal one. Conversely, whether every nonlocal genuine three-qubit diagonal unitary gate has only SR-2? If the answer is positive, the classification of genuine three-qubit unitary gates of SR-2 given by Theorem \ref{le:sr2=singular} provides a complete charcterization of genuine three-qubit diagonal unitary gates under local equivalence. Neverthless, as we shall see from the discussion below, the answer is actually negative, i.e., there exist three-qubit diagonal unitary gates whose SR is greater than two. Therefore, we further characterize three-qubit diagonal unitary gates of SR greater than two. Combined with Theorem \ref{le:sr2=singular}, we present a complete characterization of genuine three-qubit diagonal unitary gates under local equivalence.

By further study of the relationship between Eqs.~\eqref{eq:u3qubitf} and \eqref{eq:u3qubite} we confirm the existence of three-qubit diagonal unitary gates of SR greater than two. Specifically, we study whether Eqs.~\eqref{eq:u3qubitf} and \eqref{eq:u3qubite} are equivalent without considering the conditions $\delta\ne 0$ in Eq. \eqref{eq:u3qubite} and the SN $k=0$ which are only for Theorem \ref{le:sr2=singular} (iv), in order to study all possible SRs of $U$ given by Eq.~\eqref{eq:u3qubite}.

We begin with such an example: $U=\proj{0}\otimes(\cos\t I_2\otimes I_2+i\sin\t\s_3\otimes\s_3)+\proj{1}\otimes I_2\otimes \s_3$, where $\t\in(0,\frac{\pi}{2})$. It is in the form of Eq. \eqref{eq:u3qubite} under local equivalence. Via the isomorphism below
\begin{equation}
\label{eq:isomorphism}
\proj{j}_A\lra \ket{j}_A,~ (I_2)_B\lra \ket{0}_B,~ (\s_3)_B\lra \ket{1}_B,~ (I_2)_C\lra \ket{0}_C,~ (\s_3)_C\lra \ket{1}_C,
\end{equation}
we obtain a three-qubit pure state $\ket{0}(\cos\t\ket{0,0}+i\sin\t\ket{1,1})+\ket{1,0,1}$ which is isomorphic to the original unitary gate $U$. Since the SR of $U$ is the same as that of the isomorphic state, we may equivalently show the SR of this isomorphic state is greater than two. One can verify that this isomorphic state is SLOCC equivalent to the three-qubit $W$ state $\ket{\w}=\frac{1}{\sqrt3}(\ket{0,0,1}+\ket{0,1,0}+\ket{1,0,0})$. It is known that $\ket{\w}$ has SR-3 \cite{PhysRevA.81.014301}. Since the SR of a tensor is invariant under SLOCC equivalence \cite{ccd2010}, we conclude that the SR of the original unitary gate $U$ is three too. As a result, the original three-qubit diagonal unitary gate $U$ cannot be decomposed in the form as Eq. \eqref{eq:u3qubitf} even under local equivalence.

The above discussion provides an effective method to identify the SR of a three-qubit diagonal unitary gate. That is, by first mapping the three-qubit diagonal unitary gate into a pure three-qubit state via an isomorphism, then to identify the SR of the isomorphic state under SLOCC equivalence. We next use such a method to characterize the general form of three-qubit diagonal unitary gates of SR greater than two, which helps to classify the whole set of three-qubit diagonal unitary gates under local equivalence. It is well known from \cite{dvc2000} that any genuinely entangled three-qubit pure state is SLOCC equivalent to either the $GHZ$ state $\ket{\ghz}=\frac{1}{\sqrt2}(\ket{0,0,0}+\ket{1,1,1})$ or the $W$ state $\ket{\w}$.  By definition the genuinely entangled pure states are not bipartite product states across any bipartition, and they are corresponding to the genuine multipartite unitary gates via the above-mentioned isomorphism between the states and the unitary gates. Therefore, when only considering the genuine three-qubit diagonal unitary gates, there are only two SLOCC inequivalent classes. One class is related to $\ket{\ghz}$, and the other is related to $\ket{\w}$. 
Note that the term ``related'' here refers to an indirect relation via first mapping the unitary gate to some state (i.e., SLOCCa equivalence in Definition \ref{def:slocca}), rather than the direct relation by multiplying local invertible operators (i.e., SLOCC equivalence). As we know, the SR of $\ket{\ghz}$ is two and the SR of $\ket{\w}$ is three \cite{PhysRevA.81.014301}. Therefore, we shall study the class of three-qubit diagonal unitary gates that is related to $\ket{\w}$.

Similar to Eq. \eqref{eq:u3qubite}, under LU equivalence we may assume a three-qubit diagonal unitary gate as
\begin{eqnarray}
\label{eq:u3qubite0}
U=\diag(1,e^{i\a},e^{i\b},e^{i\gamma},1,1,1,e^{i\delta}),
\end{eqnarray}
where $\a,\b,\gamma,\delta\in[0,2\pi)$. The only difference from Eq.~\eqref{eq:u3qubite} is that there is no $\d\ne 0$, since this constraint is specific to the case of SR-2 and the SN $k=0$. Our task now becomes to find out when the three-qubit state isomorphic to $U$ in the form of Eq. \eqref{eq:u3qubite0}, i.e., the state 
\begin{eqnarray}
\label{eq:w0}
\ket{\psi_U}=(1,e^{i\a},e^{i\b},e^{i\gamma},1,1,1,e^{i\delta})
\end{eqnarray}
is SLOCC equivalent to $\ket{\w}$. Here, the isomorphism is given by 
\begin{equation}
\label{eq:isomorphism-2}
\proj{j,k,l}\lra \ket{j,k,l},\quad j,k,l=0,1.
\end{equation}
Since $\ket{\w}$ is genuinely entangled, the pure state $\ket{\psi_U}$ has to be genuinely entangled too. It follows that 
all of the following three matrices should have rank two.
\begin{eqnarray}
\label{eq:w1}
\bma
1 & e^{i\a} & e^{i\b} &   e^{i\g} \\
1 & 1 & 1 &   e^{i\d} \\
\ema,
\quad\quad
\bma
1 & e^{i\a} & 1 & 1  \\
e^{i\b} &   e^{i\g}& 1 &  e^{i\d} \\
\ema,
\quad\quad
\bma
1 & e^{i\b} & 1 & 1 \\
e^{i\a} &   e^{i\g} & 1 &   e^{i\d} \\
\ema.
\end{eqnarray}
Or equivalently, all of the following three conditions are met.
\begin{equation}
\label{eq:w1-1}
(e^{i\a},e^{i\b},e^{i\g})\neq (1,1,e^{i\d}),~(e^{i\b},e^{i\g},e^{i\d})\neq (1,e^{i\a},1),~(e^{i\a},e^{i\g},e^{i\d})\neq (1,e^{i\b},1).
\end{equation}
Based on such a precondition, we obtain the following result.

\begin{lemma}
\label{le:3qubitsr3}
Suppose that the four parameters $\a,\b,\g,\d\in[0,2\pi)$ satisfy the precondition given by Eq. \eqref{eq:w1-1}.
Then the three-qubit pure state $\ket{\psi_U}=(1,e^{i\a},e^{i\b},e^{i\gamma},1,1,1,e^{i\delta})$ is SLOCC equivalent to the three-qubit $W$ state $\ket{\w}=\frac{1}{\sqrt{3}}(\ket{001}+\ket{010}+\ket{100})$ if and only if one of the following two conditions holds:

(i) $\d\ne 0$, and $(e^{i\g}+e^{i\d}-e^{i\a}-e^{i\b})^2=4(e^{i\d}-1)(e^{i\g}-e^{i(\a+\b)})$;

(ii) $\d=0$, $\gamma=\pi$, and $e^{i\a}+e^{i\b}=0$ for $\a\in(0,\pi)\cup(\pi,2\pi)$.
\end{lemma}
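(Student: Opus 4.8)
The plan is to reduce the question to the vanishing of Cayley's $2\times 2\times 2$ hyperdeterminant. Under the precondition Eq.~\eqref{eq:w1-1} the three single-qubit flattenings of $\ket{\psi_U}$ (the three matrices displayed in Eq.~\eqref{eq:w1}) all have rank two, so $\ket{\psi_U}$ is genuinely entangled and, by the classification of \cite{dvc2000} recalled above, it is SLOCC equivalent to exactly one of $\ket{\ghz}$ and $\ket{\w}$. Since the hyperdeterminant transforms by the nonzero factor $(\det A\det B\det C)^2$ under a local invertible operation $A\ox B\ox C$, its vanishing is an SLOCC invariant; and one checks directly that it vanishes on $\ket{\w}$ but not on $\ket{\ghz}$. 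Hence it suffices to show that the hyperdeterminant of $\ket{\psi_U}$ vanishes if and only if condition (i) or (ii) holds.

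To compute it, I would view the coefficient tensor as a matrix pencil in the first-qubit index: let $M_0$ and $M_1$ be the $2\times 2$ matrices of coefficients with first qubit $0$ and $1$ respectively, so that $\det M_0 = e^{i\g}-e^{i(\a+\b)}$, $\det M_1 = e^{i\d}-1$, and the coefficient of $x$ in $\det(M_0+xM_1)$ is $B=e^{i\g}+e^{i\d}-e^{i\a}-e^{i\b}$. Cayley's hyperdeterminant equals the discriminant of this quadratic pencil, $B^2-4(\det M_0)(\det M_1)=(e^{i\g}+e^{i\d}-e^{i\a}-e^{i\b})^2-4(e^{i\d}-1)(e^{i\g}-e^{i(\a+\b)})$. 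When $\d\ne 0$, i.e.\ $\det M_1\ne 0$, setting this to zero is exactly the polynomial identity of case (i), and conversely.

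The remaining task is the degenerate case $\d=0$, where $\det M_1=0$ and the hyperdeterminant collapses to $(1+e^{i\g}-e^{i\a}-e^{i\b})^2$, whose vanishing is the single equation $1+e^{i\g}=e^{i\a}+e^{i\b}$. I would then argue as follows: writing $s$ for this common value, the pairs $\{1,e^{i\g}\}$ and $\{e^{i\a},e^{i\b}\}$ are both two-element subsets of the unit circle with sum $s$. If $s\ne 0$, conjugating and using $\bar z=1/z$ on the circle shows both pairs are the root set of the same monic quadratic $z^2-sz+s/\bar s=0$, so $\{e^{i\a},e^{i\b}\}=\{1,e^{i\g}\}$; but then either $e^{i\a}=1,\ e^{i\g}=e^{i\b}$ or $e^{i\a}=e^{i\g},\ e^{i\b}=1$, and each of these is excluded by Eq.~\eqref{eq:w1-1} with $\d=0$ (it turns the third, resp.\ second, inequality there into an equality). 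Hence $s=0$, forcing $1+e^{i\g}=0$ and $e^{i\a}+e^{i\b}=0$, i.e.\ $\g=\pi$ and $e^{i\b}=-e^{i\a}$; a further check shows $\a\in\{0,\pi\}$ is also ruled out by Eq.~\eqref{eq:w1-1}, leaving $\a\in(0,\pi)\cup(\pi,2\pi)$, which is exactly case (ii).

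I expect the degenerate case $\d=0$ to be the only delicate point: one must be sure that the ``trivial'' coincidence of the two point-pairs is genuinely forbidden by the genuine-entanglement hypothesis, so that only the diameter configuration $s=0$ survives, and one must track precisely which boundary values of $\a$ drop out. The case $\d\ne 0$ is then an immediate substitution once the hyperdeterminant has been written as the discriminant above.
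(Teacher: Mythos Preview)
Your argument is correct and arrives at precisely the same quadratic and discriminant as the paper, but through a different conceptual lens. The paper invokes the criterion from \cite{dvc2000} that a genuinely entangled three-qubit state lies in the $W$ class iff the range of its $BC$-marginal contains exactly one product vector; this amounts to asking when $\det(M_0+xM_1)=0$ has a unique solution in $x$, and hence to the vanishing of the discriminant. You reach the identical discriminant by recognising it as Cayley's hyperdeterminant and appealing to its SLOCC-covariance. So at the algebraic level the two routes coincide. Where your treatment genuinely diverges is in the degenerate case $\d=0$: the paper reduces $1+e^{i\g}=e^{i\a}+e^{i\b}$ by a page of sum-to-product trigonometric identities before extracting condition~(ii), whereas your observation that two unordered pairs of unit-modulus numbers with the same nonzero sum $s$ must coincide (since $z_1+z_2=s$ and $\bar z_1+\bar z_2=\bar s$ force $z_1 z_2=s/\bar s$, pinning down the monic quadratic) dispatches this case more cleanly and makes the role of the precondition \eqref{eq:w1-1} in eliminating the $s\ne 0$ branch more transparent.
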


We put the proof of Lemma \ref{le:3qubitsr3} in Appendix \ref{sec:3qubitsr3}. Note that if $\a,\b,\g,\d$ satisfy the condition in Lemma \ref{le:3qubitsr3} (ii), then they naturally satisfy the precondition given by Eq. \eqref{eq:w1-1}.

Based on Lemma \ref{le:3qubitsr3} we give a complete characterization of the genuine three-qubit diagonal unitary gates that have SR-3 as follows.

\begin{theorem}
\label{le:3qubit}
(i) Every three-qubit diagonal unitary gate is in the form of $\diag(1,e^{i\a},e^{i\b},e^{i\gamma},1,1,1,e^{i\d})$ under LU equivalence, where $\a,\b,\gamma,\delta\in[0,2\pi)$. Any such unitary gate has Schmidt rank at most three, and if it is of Schmidt rank three, it must be genuine.  

(ii) Assume that $U$ is a genuine three-qubit diagonal unitary gate in the form of $\diag(1,e^{i\a},e^{i\b},e^{i\gamma},1,1,1,e^{i\d})$ under LU equivalence. Then $U$ has Schmidt rank three if and only if the four parameters $\a,\b,\gamma,\d$ satisfy one of the two conditions in Lemma \ref{le:3qubitsr3} while making the precondition in Eq. \eqref{eq:w1-1} true.
\end{theorem}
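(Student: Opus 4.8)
The plan is to transfer both parts to the associated three-qubit pure state via the isomorphism~\eqref{eq:isomorphism-2} and then feed everything through Lemma~\ref{le:3qubitsr3} and the SLOCC classification of three-qubit states \cite{dvc2000}. For part (i) I would first record the local-unitary normal form. Starting from an arbitrary diagonal unitary $\diag(d_{000},\dots,d_{111})$ with all $|d_{jkl}|=1$, one absorbs a global phase to make the $000$-entry equal to $1$ and then multiplies by a local diagonal unitary $\diag(1,e^{i\theta_1})\ox\diag(1,e^{i\theta_2})\ox\diag(1,e^{i\theta_3})$, whose effect multiplies the $(jkl)$-entry by $e^{i(j\theta_1+k\theta_2+l\theta_3)}$; choosing $\theta_1,\theta_2,\theta_3$ successively to cancel the phases of the $100$-, $101$- and $110$-entries leaves exactly $\diag(1,e^{i\a},e^{i\b},e^{i\gamma},1,1,1,e^{i\d})$, the same normalization already used for Eq.~\eqref{eq:u3qubite0}. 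Next, applying the local dephasing channel to any operator Schmidt decomposition of a diagonal operator shows that its operator Schmidt rank is achieved by diagonal local operators and hence equals the tensor rank of the state $\ket{\psi_U}$ of Eq.~\eqref{eq:w0} (this is the identification already used in the discussion preceding Lemma~\ref{le:3qubitsr3}). By \cite{dvc2000} every three-qubit pure state is SLOCC equivalent to a fully product, a biseparable, the GHZ, or the W state, whose tensor ranks are $1$, $2$, $2$, and $3$; since tensor rank is an SLOCC invariant \cite{ccd2010}, $U$ has Schmidt rank at most three. If $U$ is not genuine, then it is a product operator across some bipartition, so $\ket{\psi_U}$ is a product (at worst biseparable) state across that cut, hence is not SLOCC equivalent to $\ket{\w}$ and has tensor rank at most two; contrapositively, Schmidt rank three forces $U$ genuine. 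This yields (i).

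For part (ii) I would take $U$ genuine in the normal form above. Genuineness is equivalent, through the isomorphism, to $\ket{\psi_U}$ being genuinely entangled, which — as observed just before Lemma~\ref{le:3qubitsr3} — is in turn equivalent to each of the three $2\times4$ flattenings displayed in Eq.~\eqref{eq:w1} having rank two, i.e.\ to the precondition~\eqref{eq:w1-1}. Under that precondition \cite{dvc2000} forces $\ket{\psi_U}$ to be SLOCC equivalent either to $\ket{\ghz}$, of tensor rank $2$, or to $\ket{\w}$, of tensor rank $3$ \cite{PhysRevA.81.014301}; together with SLOCC invariance of tensor rank and the equality of the operator Schmidt rank of $U$ with the tensor rank of $\ket{\psi_U}$, this gives that $U$ has Schmidt rank three if and only if $\ket{\psi_U}$ is SLOCC equivalent to $\ket{\w}$. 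Lemma~\ref{le:3qubitsr3} pins down exactly this: under~\eqref{eq:w1-1}, $\ket{\psi_U}$ is SLOCC equivalent to $\ket{\w}$ precisely when its condition (i) or (ii) holds. Combining, a genuine $U$ has Schmidt rank three iff $\a,\b,\gamma,\d$ satisfy Lemma~\ref{le:3qubitsr3}(i) or (ii) while also making~\eqref{eq:w1-1} true. The side condition~\eqref{eq:w1-1} comes for free alongside Lemma~\ref{le:3qubitsr3}(ii) by the remark after that lemma, but must be adjoined to condition (i), since (i) alone can also be met by a biseparable $\ket{\psi_U}$ — for example when $e^{i\a}=e^{i\b}=1$ and $e^{i\gamma}=e^{i\d}\ne1$, in which case the first flattening has rank one.

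I expect no real obstacle here: the substantive computation — determining when $\ket{\psi_U}$ falls into the W class — is already carried out in Lemma~\ref{le:3qubitsr3}, so what remains is careful bookkeeping between the operator and state pictures. The two points that want attention are (a) the identification of the operator Schmidt rank of a diagonal $U$ with the tensor rank of the $2\times2\times2$ array of its diagonal entries, for which the dephasing argument is the clean justification, and (b) the chain ``$U$ genuine $\Leftrightarrow$ $\ket{\psi_U}$ genuinely entangled $\Leftrightarrow$ \eqref{eq:w1-1}'', together with checking that each non-W SLOCC class has tensor rank at most two so that Schmidt rank exactly three isolates the W class; both are short, and the only mildly delicate step is the explicit accounting of when \eqref{eq:w1-1} must be conjoined with the conditions of Lemma~\ref{le:3qubitsr3}.
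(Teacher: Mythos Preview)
Your proposal is correct and follows essentially the same route as the paper: reduce to the normal form~\eqref{eq:u3qubite0} by local diagonal unitaries, pass to the associated state $\ket{\psi_U}$ via the isomorphism~\eqref{eq:isomorphism-2}, invoke the six-class SLOCC classification of \cite{dvc2000} to bound the Schmidt rank by three with equality only in the $W$ class, and finish with Lemma~\ref{le:3qubitsr3}. Your dephasing argument making explicit that the operator Schmidt rank of a diagonal $U$ equals the tensor rank of its diagonal array is a clean justification of a step the paper treats as understood from the surrounding discussion, and your closing example showing that condition~(i) of Lemma~\ref{le:3qubitsr3} does not by itself force~\eqref{eq:w1-1} is a nice sanity check the paper omits.
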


\begin{proof}
(i) Suppose $U$ is a three-qubit diagonal unitary gate. Then we may decompose it as $U=\proj{0}\ox V+\proj{1}\ox W$, where $V$ and $W$ are both two-qubit diagonal unitary gates. One can verify that $V$ and $W$ can be transformed into $\diag(1,e^{i\a},e^{i\b},e^{i\gamma})$ and $\diag(1,1,1,e^{i\d})$ respectively after applying a proper local unitary gate $\diag(e^{i\t_{11}},1)\ox\diag(e^{i\t_{21}},e^{i\t_{22}})\ox\diag(e^{i\t_{31}},e^{i\t_{32}})$. Thus, every three-qubit diagonal unitary gate is LU equivalent to the desired form. In advantage of the isomorphism given by \eqref{eq:isomorphism-2}, we may equivalently consider the SR of the three-qubit pure state $\ket{\psi_{U}}$ in the form of Eq. \eqref{eq:w0} under SLOCC equivalence. We also conclude from the isomorphism that $U$ is a genuine three-qubit unitary gate if and only if $\ket{\psi_U}$ is genuinely entangled.
It is well known from \cite{dvc2000} that every three-qubit pure state is SLOCC equivalent to one of the following six states $1)~\ket{0,0,0},~2)~\frac{1}{\sqrt2}(\ket{0,0,0}+\ket{0,1,1}),~3)~\frac{1}{\sqrt2}(\ket{0,0,0}+\ket{1,0,1}),~4)~\frac{1}{\sqrt2}(\ket{0,0,0}+\ket{1,1,0}),~5)~\ket{\ghz}$ and $6)~\ket{\w}$, where only $\ket{\ghz}$ and $\ket{\w}$ are genuinely entangled. It is also known that such six states have SR at most three, and the upper bound is only saturated by $\ket{\w}$ \cite{PhysRevA.81.014301}. Thus, such a unitary gate has SR-3 if and only if the pure state isomorphic to this gate is SLOCC equivalent to $\ket{\w}$. It implies that such a unitary gate must be genuine if it is of SR-3. 

(ii) It follows from the above result that $U$ has SR-3 if and only if $\ket{\psi_U}$ is SLOCC equivalent to $\ket{\w}$. Thus, assertion (ii) directly follows from Lemma \ref{le:3qubitsr3}.

This completes the proof.
\end{proof}

We obtain the following corollary to characterize genuine three-qubit diagonal unitary gates of SR-2. Note that it is easy to determine whether a three-qubit diagonal unitary gate is a bipartite product unitary gate across some bipartition by checking whether the three matrices in Eq. \eqref{eq:w1} all have rank two.

\begin{corollary}
\label{cr:3qubit}
Suppose $U$ is any given nonlocal genuine three-qubit diagonal unitary gate. Assume it is in the form of Eq.~\eqref{eq:u3qubite0} under LU equivalence. If the equivalent form does not satisfy Lemma~\ref{le:3qubitsr3}, then it must be of Schmidt rank two, and its isomorphic state $\ket{\psi_U}$ must be SLOCC equivalent to $\ket{\ghz}$.
\end{corollary}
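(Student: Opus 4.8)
The plan is to deduce the corollary from the structural dichotomy of Theorem~\ref{le:3qubit} together with the SLOCC classification of three-qubit pure states. First I would record the easy bounds on the Schmidt rank of $U$: since $U$ is nonlocal its Schmidt rank is at least two (equivalently, a genuine $U$ cannot be a tensor product $A_1\ox A_2\ox A_3$), while Theorem~\ref{le:3qubit}(i) gives Schmidt rank at most three. Hence the Schmidt rank of $U$ is two or three, and the whole task is to exclude the value three.

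Next I would pass to the isomorphic state. By the isomorphism \eqref{eq:isomorphism-2}, the gate $U$ in the form \eqref{eq:u3qubite0} corresponds to the pure state $\ket{\psi_U}$ of Eq.~\eqref{eq:w0}, with the same Schmidt rank, and as noted in the proof of Theorem~\ref{le:3qubit}(i) the gate $U$ is genuine if and only if $\ket{\psi_U}$ is genuinely entangled. Genuine entanglement of $\ket{\psi_U}$ is exactly the condition that the three $2\times 4$ coefficient matrices in Eq.~\eqref{eq:w1} all have rank two, i.e. the precondition Eq.~\eqref{eq:w1-1}. Thus the genuineness of $U$ already forces Eq.~\eqref{eq:w1-1} to hold.

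Now I would invoke Theorem~\ref{le:3qubit}(ii): with the precondition Eq.~\eqref{eq:w1-1} already satisfied, $U$ has Schmidt rank three precisely when the parameters $(\a,\b,\g,\d)$ satisfy one of the two conditions of Lemma~\ref{le:3qubitsr3}. Since by hypothesis the equivalent form of $U$ does not satisfy Lemma~\ref{le:3qubitsr3}, the Schmidt rank of $U$ is not three, and combining with the first step it equals two. Finally, $\ket{\psi_U}$ is a genuinely entangled three-qubit pure state of Schmidt rank two; by the classification of \cite{dvc2000} it is SLOCC equivalent to $\ket{\ghz}$ or to $\ket{\w}$, and since SLOCC equivalence preserves tensor rank \cite{ccd2010} whereas $\ket{\w}$ has Schmidt rank three \cite{PhysRevA.81.014301}, $\ket{\psi_U}$ must be SLOCC equivalent to $\ket{\ghz}$, which is the second conclusion.

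The corollary is essentially a repackaging of Theorem~\ref{le:3qubit} and the three-qubit state classification, so there is no real obstacle; the only point needing explicit care is the equivalence between $U$ being genuine and the precondition Eq.~\eqref{eq:w1-1}, which is what turns ``failing Lemma~\ref{le:3qubitsr3}'' into the exact negation of the Schmidt-rank-three criterion of Theorem~\ref{le:3qubit}(ii) rather than only a sufficient condition for that criterion to fail.
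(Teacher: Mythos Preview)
Your argument is correct and is exactly the reasoning the paper intends: the corollary is stated without a separate proof and is meant as an immediate consequence of Theorem~\ref{le:3qubit} together with the three-qubit SLOCC classification of \cite{dvc2000}. Your explicit identification of the genuineness of $U$ with the precondition Eq.~\eqref{eq:w1-1} is precisely the point the paper flags in the sentence preceding the corollary, so nothing further is needed.
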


Combining Theorem~\ref{le:3qubit} and Corollary~\ref{cr:3qubit}, we give a complete characterization of genuine three-qubit diagonal unitary gates under LU equivalence. This also means we have provided the SLOCCa equivalence classes of three-qubit diagonal unitary gates by Definition \ref{def:slocca}. In particular, we can determine all the diagonal matrix forms of genuine three-qubit unitary gates of SR-2, and then use the key factor SN to classify them as Theorem \ref{le:sr2=singular}.

\section{Conclusions}
\label{sec:con}

In this paper we mainly investigated the classification of multipartite (excluding bipartite) unitary gates of Schmidt rank two (SR-2) under local equivalence. We focused on genuine multipartite unitary gates, i.e. those multipartite unitaries which are not product operators across any bipartition.
First, we proposed a key notion named as \emph{singular number} (SN) to classify the genuine multipartite unitary gates of SR-2 based on an essential observation that the Schmidt decomposition of such a unitary gate is unique. 
Then we determined all possible numbers for SN $k$. Specifically, for tripartite unitary gates, the SN $k$ can only be $0,1,2,3$, and for $n$-partite unitary gates with $n\geq 4$, the SN $k$ can only be $0,1,2,n-1,n$. 
Second, we discussed the classification of genuine multiqubit unitary gates of SR-2 using the key factor SN in detail. We divided the discussion into two parts, i.e., the part of three-qubit unitary gates and the part of $n$-qubit unitary gates with $n\geq 4$, as the ranges of the SN $k$ for such two parts are different. In each part, we formulated the parametric Schmidt decompositions of the unitary gates under LU equivalence for every SN respectively. In other words, up to a permutation of systems and under local equivalence, every genuine multiqubit unitary gate of SR-2 with SN $k$ is in the corresponding parametric form for some proper parameters. Finally, we extended our study to three-qubit diagonal unitary gates due to the close relation between diagonal unitary gates and SR-2 unitary gates. As we know, every SR-2 unitary gate is LU equivalent to a diagonal unitary matrix. Hence, we started with discussing two typical examples of SR-2, which helps us better understand the essential difference between the bipartite scenario and multipartite scenarios, and the core role of SN in the classification we have proposed.
Then we addressed the characterization of three-qubit diagonal unitary gates of SR greater than two. We have shown that the SR of a three-qubit diagonal unitary gate is at most three, and characterized the parametric form of the SR-3 diagonal unitary gates. This result completed the characterization of all genuine three-qubit diagonal unitary gates. 

Our results clearly show all equivalence classes of genuine multiqubit unitary gates of SR-2 under LU equivalence, and provide the parametric Schmidt decompositions for every SN. All these parametric forms are explicit except the case of three-qubit unitary gates with SN $k=0$, i.e., Theorem \ref{le:sr2=singular} (iv). So an interesting open problem is whether the parametric form in Theorem \ref{le:sr2=singular} (iv) can be further simplified. This is helpful for understanding the relation between Theorem \ref{le:sr2=singular} (iv) and Theorem \ref{le:multi=sr2} (v). Moreover, we believe such an essential characterization would be beneficial to introduce more controlled operations into quantum computing and quantum information processing tasks, as SR-2 unitary gates are physically regarded as controlled gates controlled from each party. Thus, it is very interesting to further explore the connections between the results in this paper and other aspects of quantum information science.

\begin{acknowledgments}
Authors were supported by the NNSF of China (Grant Nos. 11871089, 11974096), and the Fundamental Research Funds for the Central Universities (Grant No. ZG216S2005).

\end{acknowledgments}

\appendix

\section{Proof of Lemma \ref{le:ubofk}}
\label{sec:supp}

Here, we present the proof of Lemma \ref{le:ubofk}. For this purpose, we first need the following useful lemma. 

\begin{lemma}
\label{le:n=3k=1xneq1}
Suppose $\a,\b,\gamma,\delta\in(0,2\pi)$. Then the equation $(e^{i\a}-1)(e^{i\delta}-1)=(e^{i\b}-1)(e^{i\gamma}-1)$ holds if and only if either $(\a,\delta)=(\b,\gamma)$ or $(\a,\delta)=(\gamma,\b)$.
\end{lemma}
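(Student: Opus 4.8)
The plan is to pass each factor $e^{i\theta}-1$ to polar form and then split the equation into a modulus part and an argument part, each of which is transparent on the range $(0,2\pi)$.

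First I would use the identity $e^{i\theta}-1=2i\sin(\theta/2)\,e^{i\theta/2}$. For $\theta\in(0,2\pi)$ one has $\theta/2\in(0,\pi)$, so $\sin(\theta/2)>0$; in particular each factor is nonzero, with modulus $2\sin(\theta/2)$. Substituting into $(e^{i\a}-1)(e^{i\d}-1)=(e^{i\b}-1)(e^{i\g}-1)$ and cancelling the common constant $(2i)^2=-4$, the equation becomes
\[
\sin(\a/2)\sin(\d/2)\,e^{i(\a+\d)/2}=\sin(\b/2)\sin(\g/2)\,e^{i(\b+\g)/2}.
\]
Taking absolute values, and using that the two prefactors are strictly positive, gives $\sin(\a/2)\sin(\d/2)=\sin(\b/2)\sin(\g/2)$; dividing out these equal positive numbers leaves $e^{i(\a+\d)/2}=e^{i(\b+\g)/2}$, hence $(\a+\d)-(\b+\g)\in 4\pi\mathbb{Z}$. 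Since $\a+\d$ and $\b+\g$ both lie in $(0,4\pi)$, the only possibility is $\a+\d=\b+\g=:S$.

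Next I would feed $\a+\d=\b+\g=S$ into the modulus equation through the product-to-sum identity $2\sin x\sin y=\cos(x-y)-\cos(x+y)$, which turns $\sin(\a/2)\sin(\d/2)=\sin(\b/2)\sin(\g/2)$ into $\cos\big(\tfrac{\a-\d}{2}\big)=\cos\big(\tfrac{\b-\g}{2}\big)$ once the common terms $\cos(S/2)$ cancel. Here $\tfrac{\a-\d}{2}$ and $\tfrac{\b-\g}{2}$ lie in $(-\pi,\pi)$, and on that interval $\cos u=\cos v$ forces $u=\pm v$; hence $\a-\d=\b-\g$ or $\a-\d=\g-\b$. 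Combining each alternative with $\a+\d=\b+\g$: the first gives $\a=\b$, $\d=\g$, and the second gives $\a=\g$, $\d=\b$. The reverse implication is immediate since multiplication is commutative, so this establishes the stated equivalence.

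I do not anticipate any real obstacle; the only care needed is bookkeeping of the ranges, namely that $\theta/2\in(0,\pi)$ keeps the half-angle sines positive, that $\a+\d,\b+\g\in(0,4\pi)$ pins down the multiple of $4\pi$ to zero, and that $\tfrac{\a-\d}{2},\tfrac{\b-\g}{2}\in(-\pi,\pi)$ makes $\cos$ two-to-one there. A purely algebraic alternative — expand $(e^{i\a}-1)(e^{i\d}-1)=e^{i(\a+\d)}-e^{i\a}-e^{i\d}+1$ and match real and imaginary parts — also works but is messier, so I would use the polar-form computation above.
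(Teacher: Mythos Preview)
Your proof is correct and follows essentially the same approach as the paper: both reduce the equation to $\sin(\a/2)\sin(\d/2)\,e^{i(\a+\d)/2}=\sin(\b/2)\sin(\g/2)\,e^{i(\b+\g)/2}$ and then separate the sum and difference constraints. Your route via the polar form $e^{i\theta}-1=2i\sin(\theta/2)e^{i\theta/2}$ is slightly cleaner than the paper's, which expands into real and imaginary parts and briefly entertains a spurious second case $\b+\g=\a+\d+2\pi$ before discarding it; your positivity observation for $\sin(\theta/2)$ on $(0,\pi)$ eliminates that branch up front.
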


\begin{proof}
The real and imaginary parts of the above equation give the following two independent constraints.
\begin{equation}
\label{eq:angles2-02}
\left\{
\begin{aligned}
\cos(\b+\gamma)-\cos\b-\cos\gamma &=\cos(\a+\delta)-\cos\a-\cos\delta, \\
\sin(\b+\gamma)-\sin\b-\sin\gamma &=\sin(\a+\delta)-\sin\a-\sin\delta. \\
\end{aligned}
\right.
\end{equation}
By applying trigonometric formulas, Eq. \eqref{eq:angles2-02} can be simplified to
\begin{equation}
\label{eq:angles2-03}
\left\{
\begin{aligned}
\sin\frac{\b}{2}\sin\frac{\gamma}{2}\cos\frac{\b+\gamma}{2} &=\sin\frac{\a}{2}\sin\frac{\delta}{2}\cos\frac{\a+\delta}{2}, \\
\sin\frac{\b}{2}\sin\frac{\gamma}{2}\sin\frac{\b+\gamma}{2} &=\sin\frac{\a}{2}\sin\frac{\delta}{2}\sin\frac{\a+\delta}{2}. \\
\end{aligned}
\right.
\end{equation}
This system of equations is equivalent to
\begin{equation}
\label{eq:angles2-04}
\text{either}\quad
\left\{
\begin{aligned}
&\b+\gamma = \a+\delta,\\
&\sin\frac{\b}{2}\sin\frac{\gamma}{2} = \sin\frac{\a}{2}\sin\frac{\delta}{2},\\
\end{aligned}
\right.\quad
\text{or}\quad
\left\{
\begin{aligned}
&\b+\gamma = \a+\delta+2\pi,\\
&\sin\frac{\b}{2}\sin\frac{\gamma}{2} = -\sin\frac{\a}{2}\sin\frac{\delta}{2}.\\
\end{aligned}
\right.
\end{equation}
From the second equality of the former system of equations in Eq. \eqref{eq:angles2-04}, we obtain that
\begin{eqnarray}
\label{eq:h5}
&& \sin\frac{\b}{2} \sin\frac{\gamma}{2} = \sin\frac{\a}{2} \sin\frac{\delta}{2}=\sin\frac{\a}{2} \sin\frac{\b+\gamma-\a}{2}\notag\\
\Longrightarrow && \cos\frac{\b-\gamma}{2}-\cos\frac{\b+\gamma}{2}= \cos(\a-\frac{\b+\gamma}{2})-\cos\frac{\b+\gamma}{2}\notag\\
\Longrightarrow && \cos\frac{\b-\gamma}{2}= \cos(\a-\frac{\b+\gamma}{2})\notag\\
\Longrightarrow && \cos(\b-\frac{\b+\gamma}{2})= \cos(\a-\frac{\b+\gamma}{2})\notag\\
\Longrightarrow && \a=\b \quad\mbox{or}\quad \a=\gamma.
\end{eqnarray}
Furthermore, from the first equality of the former system of equations in Eq. \eqref{eq:angles2-04}, we obtain that $\gamma=\delta$ if $\a=\b$, and $\b=\delta$ if $\a=\gamma$. For the latter system of equations in Eq. \eqref{eq:angles2-04}, we similarly conclude from the second equality that $\a=\b$ or $\a=\gamma$. Then, from the first equality we obtain that $\gamma=\delta+2\pi$ if $\a=\b$, and $\b=\delta+2\pi$ if $\a=\gamma$. Since we have supposed $\a,\b,\gamma,\delta\in(0,2\pi)$ due to the periodicity, the latter in Eq. \eqref{eq:angles2-04} is not applicable here.  

This completes the proof.
\end{proof}

Note that Lemma \ref{le:n=3k=1xneq1} is also used to exclude the parameter $c\neq 1$ in Theorem \ref{le:sr2=singular} (iii). Now, we are able to present the proof of Lemma \ref{le:ubofk} as follows.

\textbf{Proof of Lemma \ref{le:ubofk}.}
Let $U=A_1\ox \cdots\ox A_n+ B_1\ox \cdots\ox B_n$ be the Schmidt decomposition.
We prove it by contradiction. Assume that there exists a genuine $n$-partite SR-2 unitary gate $U$ whose SN $k$ is $n+1$. Up to a permutation of systems, we may assume $A_1$ and $B_1$ are both singular. Then the fact that $U$ is unitary and $A_1$ is singular implies that $B_2,\cdots,B_n$ are all unitary. Simiarly, we conclude that $A_2,\cdots,A_n$ are all unitary because $B_1$ is singular. It follows that the number of local singular matrices is two only, and thus we obtain a contradiction. Next, we show the last claim that for $n\geq 5$, $k\in[3,n-2]$ is impossible.

First, if there is some $A_i$ and some $B_j$ that are both singular, then $A_s$ with $s\ne i$ are all unitary, and $B_l$ with $l\ne j$ are all unitary. It follows that $k=2$ which contradicts with $k\in[3,n]$. Thus, we may assume $B_1,\cdots,B_k$ are all singular without loss of generality. Since $U$ is of SR-2, it follows from Lemma \ref{le:cohenli2013} that $A_1,...,B_n$ are all diagonal matrices under LU equivalence. For simplicity, we take the $n$-qubit system as an example to illustrate our proof. One can similarly show the case of general multipartite systems by adding more diagonal entries into diagonal $B_{k+1},\cdots,B_n$.
When $U$ acts on the $n$-qubit system, up to a permutation of systems and under local equivalence we may further assume it as 
\begin{eqnarray}
\label{eq:multi,k=2,U}
I_2\otimes...\otimes I_2+x\proj{0}^{\otimes k}\otimes \diag(1,t_1)\otimes \cdots \otimes \diag(1,t_{n-k}),
\end{eqnarray}
where $t_1,...,t_{n-k}\in\mathbb{C}\backslash\{0,1\}$, and $x\in\mathbb{C}\backslash\{0\}$. The requirement that $t_j\ne 1, \forall j,$ follows from that $U$ is a genuine $n$-qubit unitary gate. 
From Eq.~\eqref{eq:multi,k=2,U}, it suffices to show that $k=n-2(\ge 3)$ is impossible. We prove it by contradiction. Assume $k=n-2(\ge 3)$. Then Eq.~\eqref{eq:multi,k=2,U} accurately is 
\begin{equation}
\label{eq:multi,k=2,U,reduced}
U=I_2\otimes...\otimes I_2+x\proj{0}^{\otimes n-2}\otimes \diag(1,t_1)\otimes \diag(1,t_2). 
\end{equation}
It implies that $W=\diag(1,1,1,1)+x\diag(1,t_2,t_1,t_1 t_2)$ is unitary. We may assume $W=\diag(e^{i\a},e^{i\b},e^{i\g},e^{i\d})$, where $\a,\b,\g,\d\in(0,2\pi)$ because $t_j\ne 0$, for $j=1,2$. It also requires that $\a\neq \b$ and $\a\neq \gamma$, since both $t_1,t_2$ are not equal to $1$. Then $x=e^{i\a}-1\neq 0$, and
\begin{equation}
\label{eq:multi,knot3}
\left\{
\begin{aligned}
xt_2&=e^{i\b}-1, \\
xt_1&=e^{i\g}-1, \\
xt_1 t_2&=e^{i\d}-1.
\end{aligned}
\right.
\end{equation}
It follows that $x(e^{i\d}-1)=(e^{i\b}-1)(e^{i\g}-1)$, i.e. $(e^{i\a}-1)(e^{i\d}-1)=(e^{i\b}-1)(e^{i\g}-1)$. Then, from Lemma \ref{le:n=3k=1xneq1} in Appendix \ref{sec:supp}, we conclude that the above equality holds if and only if either $(\a,\delta)=(\b,\gamma)$ or $(\a,\delta)=(\gamma,\beta)$. However, this contradicts with the restriction that $\a\neq \b$ and $\a\neq \gamma$. Hence no unitary $W$ can exist. It means that $k=n-2(\geq 3)$ is impossible. It implies that $k\in[3,n-2]$ is impossible, where $n\geq 5$.

This completes the proof.
\qed

\section{Proof of Lemma \ref{le:3qbsr2c3}}
\label{sec:3qbsr2c3}

\textbf{Proof of Lemma \ref{le:3qbsr2c3}.}
Due to Eq. \eqref{eq:3qbsr2c3-1} we may assume $f+c=e^{i\a}$, $g+c=e^{i\b}$, and $fh+c=e^{i\gamma}$ with $\a\neq \b+2k\pi$ and $\a\neq \gamma+2k\pi$ for integer $k$, and thus we obtain $h=\frac{e^{i\gamma}-c}{e^{i\a}-c}\neq 1$. Denote by $f_x,~f_y$ the real and imaginary parts of the complex number $f$ respectively, and similarly for the other two complex numbers $g,~h$. That is, $f=f_x+if_y$, $g=g_x+ig_y$, $h=h_x+ih_y$. It follows that $f_x=\cos\a-c$ and $f_y=\sin\a$. From $h=\frac{e^{i\gamma}-c}{e^{i\a}-c}$ we specifically calculate $h_x=\frac{c^2+\cos(\a-\gamma)-c(\cos\a+\cos\gamma)}{1+c^2-2c\cos\a}$ and $h_y=\frac{c(\sin\a-\sin\gamma)-\sin(\a-\gamma)}{1+c^2-2c\cos\a}$. Then it remains to determine the complex number $g$, or equivalently the phase $\b$, by the two parameters $\a,~\gamma$.
To figure out all the parametric expressions of $g$, we regard complex numbers as points on the complex plane where the X-axis represents the real part and the Y-axis represents the imaginary part. To better describe our explanation, we mark $\overline{fh,gh}$ as the line segment with two endpoints $fh$ and $gh$ on the complex plane. By direct calculation we obtain the coordinates of points $fh$ and $gh$ as 
\begin{equation}
\label{eq:3qubitcoord-2}
\begin{aligned}
\mathrm{Re}(fh)&=f_xh_x-f_yh_y,\quad \mathrm{Im}(fh)=f_xh_y+f_yh_x,\\
\mathrm{Re}(gh)&=g_xh_x-g_yh_y,\quad \mathrm{Im}(gh)=g_xh_y+g_yh_x.
\end{aligned}
\end{equation} 
Since the condition $\abs{f+c}=\abs{g+c}=1$ from Eq. \eqref{eq:3qbsr2c3-1} is equivalent to $\abs{fh+ch}=\abs{gh+ch}=\abs{h}$, from a geometric point of view it implies that the point $(-ch_x,-ch_y)$ is in the perpendicular bisector of $\overline{fh,gh}$. Recall that $\abs{fh+c}=\abs{gh+c}$ from Eq. \eqref{eq:3qbsr2c3-1}, so we similarly determine that the point $(-c,0)$ is also in the perpendicular bisector of $\overline{fh,gh}$.
Thus, the slope of the perpendicular bisector of $\overline{fh,gh}$ is $\frac{h_y}{h_x-1}$. If $h_y=0$, then $\overline{fh,gh}$ is perpendicular to the X-axis. If $h_x=1$, then $\overline{fh,gh}$ is parallel to the X-axis. Note that $h\neq 1$. We shall consider such two cases: Case (i) $h_y=0$ and Case (ii) $h_y\neq 0$. It follows from the expression of $h_y$ derived above that $h_y=0$ if and only if $c(\sin\a-\sin\gamma)-\sin(\a-\gamma)=0$. One can verify that $c(\sin\a-\sin\gamma)-\sin(\a-\gamma)=2\sin\frac{\a-\gamma}{2}\big(c\cos\frac{\a+\gamma}{2}-\cos\frac{\a-\gamma}{2}\big)$. Due to $\a\neq \gamma$ we further conclude that $h_y=0$ if and only if $c\cos\frac{\a+\gamma}{2}-\cos\frac{\a-\gamma}{2}=0$.

Case (i). If $h_y=0$, it means $\gamma$ is dependent on $\a$ for $c\cos\frac{\a+\gamma}{2}-\cos\frac{\a-\gamma}{2}=0$, and thus there is only one free parameter $\a$ in this case. Since $fh=f_xh_x+i(f_yh_x)$ and $\abs{fh+c}=1$, we obtain 
\begin{equation}
\label{eq:3qubitcb-1}
(c+f_xh_x)^2+(f_yh_x)^2=1.
\end{equation}
Substituting $f_x=-c+\cos\a$ and $f_y=\sin\a$ into Eq. \eqref{eq:3qubitcb-1} it follows that $(1+c^2-2c\cos\a)h_x^2+2c(-c+\cos\a)h_x+c^2-1=0$. The two roots for $h_x$ are $1$ and $\frac{c^2-1}{1+c^2-2c\cos\a}$. Since $h\neq 1$, we conclude that $h_x=\frac{c^2-1}{1+c^2-2c\cos\a}$ with $c\cos\a\neq 1$.
Furthermore, we obtain that $\mathrm{Re}(fh)=\mathrm{Re}(gh)$ and $\mathrm{Im}(fh)+\mathrm{Im}(gh)=0$, since the point $(-c,0)$ is in the perpendicular bisector of $\overline{fh,gh}$. It follows from Eq. \eqref{eq:3qubitcoord-2} that $g_x=f_x$ and $g_y=-f_y$. To satisfy the constraint $f\neq g$, we have to restrict $f_y=\sin\a\neq 0$. Thus, we have formulated the analytic expressions of $f,~g,~h$ with a parameter $\a\in(0,\pi)\cup(\pi,2\pi)$ satisfying that $c\cos\a\neq 1$ for a given positive number $c\neq 1$.

Case (ii). If $h_y\neq 0$, it is equivalent to $c\cos\frac{\a+\gamma}{2}-\cos\frac{\a-\gamma}{2}\neq 0$. It follows that the slope of $\overline{fh,gh}$ is $\frac{1-h_x}{h_y}$. By Eq. \eqref{eq:3qubitcoord-2} we obtain the following equation
\begin{equation}
\label{eq:3qubitcc-1}
\frac{(g_y-f_y)h_x+(g_x-f_x)h_y}{(g_x-f_x)h_x+(f_y-g_y)h_y}=\frac{1-h_x}{h_y}.
\end{equation}
After simplification, the above equation is equivalent to 
\begin{equation}
\label{eq:3qubitcc-2}
(g_x-f_x)(h_x^2+h_y^2)=(g_x-f_x)h_x-(g_y-f_y)h_y.
\end{equation}
If $g_x-f_x=0$, it follows directly from Eq. \eqref{eq:3qubitcc-2} that $g_y-f_y=0$, which implies that $f=g$. However, it contradicts with the condition $f\neq g$. So we conclude $g_x-f_x\neq 0$, and thus $h_x^2+h_y^2=h_x-\frac{g_y-f_y}{g_x-f_x}h_y$. One can verify that $h_x^2+h_y^2=\abs{h}^2=\frac{1+c^2-2c\cos\gamma}{1+c^2-2c\cos\a}$. Hence, the equality $h_x^2+h_y^2=h_x-\frac{g_y-f_y}{g_x-f_x}h_y$ is equivalent to 
\begin{equation}
\label{eq:3qubitcc-5}
1=\cos(\a-\gamma)-c(\cos\a-\cos\gamma)-\frac{\sin\b-\sin\a}{\cos\b-\cos\a}\big(c(\sin\a-\sin\gamma)-\sin(\a-\gamma)\big).
\end{equation}
Since $h_y\neq 0$, we obtain that
\begin{equation}
\label{eq:3qubitcc-6}
\begin{aligned}
\frac{\sin\b-\sin\a}{\cos\b-\cos\a}&=\frac{-1+\cos(\a-\gamma)-c(\cos\a-\cos\gamma)}{c(\sin\a-\sin\gamma)-\sin(\a-\gamma)}\\
&=\frac{c\sin\frac{\a+\gamma}{2}-\sin\frac{\a-\gamma}{2}}{c\cos\frac{\a+\gamma}{2}-\cos\frac{\a-\gamma}{2}}, \quad \text{for $\sin\frac{\a-\gamma}{2}\neq 0$.}
\end{aligned}
\end{equation}
Furthermore, one can verify $\frac{\sin\b-\sin\a}{\cos\b-\cos\a}=-\cot\frac{\a+\b}{2}$ when $\a\neq \b+2k\pi$. Let 
\begin{equation}
\label{eq:3qubitcc-7}
F_c(\a,\gamma):=\frac{c\sin\frac{\a+\gamma}{2}-\sin\frac{\a-\gamma}{2}}{c\cos\frac{\a+\gamma}{2}-\cos\frac{\a-\gamma}{2}}.
\end{equation}
Then we conclude that $\b=(2k+1)\pi+2\arctan(F_c(\a,\gamma))-\a$, where $k$ is an integer and $\arctan(F_c(\a,\gamma))\in(-\frac{\pi}{2},\frac{\pi}{2})$. Since $\a\neq \b+2k\pi$, we obtain the constraint that $\a\neq\frac{\pi}{2}+\arctan(F_c(\a,\gamma))+k\pi$ for integer $k$. Thus, we have formulated the analytic expressions of $f,~g,~h$ with the two parameters $\a,~\gamma$ satisfying all the constraints in this case. Specifically, due to the periodicity we may assume $\a,\g\in[0,2\pi)$, and such two parameters satisfy $\a\neq \gamma$, $\a\neq\frac{\pi}{2}+\arctan(F_c(\a,\gamma))$, $\a\neq\frac{3\pi}{2}+\arctan(F_c(\a,\gamma))$, and $c\cos\frac{\a+\gamma}{2}-\cos\frac{\a-\gamma}{2}\neq 0$.

This completes the proof.
\qed

\section{Proof of Lemma \ref{le:ngeq4,k=0}}
\label{sec:ngeq4,k=0}

\textbf{Proof of Lemma \ref{le:ngeq4,k=0}.} Suppose the genuine $n$-qubit unitary gate $U=\proj{0}\ox G+\proj{1}\ox H$ has the Schmidt decomposition as $U=A_1\otimes\cdots\otimes A_n+B_1\otimes\cdots\otimes B_n$. Then we obtain that both $G$ and $H$ are in the $\textrm{span}\{A_2\otimes\cdots\otimes A_n,B_2\otimes\cdots\otimes B_n\}$.

(i) Since $U$ is a unitary gate of SR-2, it follows that $G$ and $H$ are both unitary, and have SR at most two. Furthermore, it follows from the SN $k=0$ that both $G$ and $H$ have SR-2, otherwise the SN $k>0$. 
Next, we prove the assertion that $G$ and $H$ are both genuine $(n-1)$-qubit unitary gates. Assume $G$ is a bipartite product matrix across some bipartition, i.e., $G=G_1\ox G_2$, where $G_1$ acts on a true subset $S_1$ of the $n-1$ qubits, and $G_2$ acts on the subset $S_2$ consisting of the remaining qubits. Since $G$ has SR-2, without loss of generality we may assume $G_1$ has SR-2, and $G_2$ is a product unitary gate on the $\abs{S_2}$ qubits. On the one hand, we conclude that $G_2$ is equal to either $\bigotimes_{j\in S_2} A_j$ or $\bigotimes_{j\in S_2} B_j$ up to a constant factor, as $G\in\textrm{span}\{A_2\otimes\cdots\otimes A_n,B_2\otimes\cdots\otimes B_n\}$. On the other hand, since $G_1$ is of SR-2, the Schmidt decomposition of $G_1$ must be $G_1=x_1\bigotimes_{j\in S_1} A_j+x_2\bigotimes_{j\in S_1} B_j$ for $x_1x_2\neq 0$. Recall that $U$ is a genuine $n$-qubit unitary gate, which means $A_i$ and $B_i$ are linearly independent from each other for any $1\leq i\leq n$. Thus, we derive a contradiction that $G$ is not in the $\textrm{span}\{A_2\otimes\cdots\otimes A_n,B_2\otimes\cdots\otimes B_n\}$. Therefore, $G$ is not a bipartite product matrix across any bipartition. Similarly, we obtain the same assertion for $H$.

(ii) It follows from assertion (i) that $G$ and $H$ are both diagonal unitary gates under local equivalence. Then we may assume
\begin{eqnarray}
\label{eq:multi,gh1}
G&=&\diag(a,b)\ox C_3 \ox \cdots \ox C_n + \diag(c,d)\ox D_3 \ox\cdots \ox D_n,\\
H&=&\diag(p,q)\ox E_3 \ox \cdots \ox E_n + \diag(r,s)\ox F_3 \ox\cdots \ox F_n,\label{eq:multi,gh2}
\end{eqnarray}
where $a,b,c,d,p,q,r,s\in\mathbb{C}$, and $C_i, D_i, E_i, F_i$ ($3\le i\le n$) are all $2\times 2$ diagonal matrices. 
By applying Lemma~\ref{le:linindep} (ii) to the $(n-1)$-qubit unitary gate $G$, there must be no other linear combination of $C_3 \ox \cdots \ox C_n$ and $D_3 \ox \cdots \ox D_n$ to expand $G$ except the form as Eq. \eqref{eq:multi,gh1}. It means that the Schmidt decomposition of $G$ is unique. Since $U$ has SR-2 across the bipartition of the first two qubits versus the other qubits, it implies that the four operators $C_3 \ox \cdots \ox C_n$, $D_3 \ox \cdots \ox D_n$, $E_3 \ox \cdots \ox E_n$ and $F_3 \ox \cdots \ox F_n$ are all in the two-dimensional operator space $\lin\{A_3\ox\cdots\ox A_n,B_3\ox\cdots\ox B_n\}$. Furthermore, since $C_3 \ox \cdots \ox C_n$ and $D_3 \ox \cdots \ox D_n$ are linearly independent from each other, they span the two-dimensional space $\lin\{A_3\ox\cdots\ox A_n,B_3\ox\cdots\ox B_n\}$. Similarly, since $E_3 \ox \cdots \ox E_n$ and $F_3 \ox \cdots \ox F_n$ are also linearly independent from each other, they span the same two-dimensional space. Specifically, that is
\begin{equation}
\label{eq:ngeq4c5space-1}
\begin{aligned}
\lin\{A_3\ox\cdots\ox A_n, B_3\ox\cdots\ox B_n\}&=\lin\{C_3 \ox \cdots \ox C_n, D_3 \ox \cdots \ox D_n\}\\
&=\lin\{E_3 \ox \cdots \ox E_n, F_3 \ox \cdots \ox F_n\}.
\end{aligned}
\end{equation}
In other words, both $E_3 \ox \cdots \ox E_n$ and $F_3 \ox \cdots \ox F_n$ are in the $\lin\{C_3 \ox \cdots \ox C_n,D_3 \ox \cdots \ox D_n\}$. It follows that $H=\diag(p',q')\ox C_3 \ox \cdots \ox C_n + \diag(r',s')\ox D_3 \ox\cdots \ox D_n$, where $p',q',r',s'\in\mathbb{C}$. 

This completes the proof.
\qed

\section{Proof of Lemma \ref{le:3qubitsr3}}
\label{sec:3qubitsr3}

\textbf{Proof of Lemma \ref{le:3qubitsr3}.}
Under the precondition in Eq. \eqref{eq:w1-1}, it follows from the paragraph above \cite[Eq. (18)]{dvc2000} that a genuinely entangled three-qubit pure state is SLOCC equivalent to $\ket{\w}$ if and only if the range of its bipartite marginal of system $BC$ has exactly one product vector. We discuss the following two cases.

Case (i). Suppose that $(1,1,1,e^{i\d})$ is not a product vector, i.e., $\d\neq 0$. Then the condition above can be stated as: there is exactly one solution $x\in\bbC$ such that $(1,e^{i\a},e^{i\b},e^{i\gamma})+x(1,1,1,e^{i\d})$ is a product vector. That is, the equation $(x+1)(e^{i\d}x+e^{i\g})=(x+e^{i\a})(x+e^{i\b})$
has exactly one solution $x$. Equivalently, the equation
\begin{eqnarray}
\label{eq:w3}
(e^{i\d}-1)x^2+(e^{i\g}+e^{i\d}-e^{i\a}-e^{i\b})x+e^{i\g}-e^{i(\a+\b)}=0
\end{eqnarray}
has exactly one solution.
Since we have supposed $\d\neq 0$ in this case, Eq.~\eqref{eq:w3} has exactly one solution $x$ if and only if the discriminant is equal to zero, i.e., $(e^{i\g}+e^{i\d}-e^{i\a}-e^{i\b})^2=4(e^{i\d}-1)(e^{i\g}-e^{i(\a+\b)})$. This case gives the condition (i) of this lemma.

Case (ii). Suppose that $(1,1,1,e^{i\d})$ is a product vector, i.e., $\d=0$. Then $(1,e^{i\a},e^{i\b},e^{i\gamma})$ must not be a product vector, otherwise the SR of $\ket{\psi_U}$ is at most two. It follows that $e^{i\gamma}\neq e^{i(\a+\b)}$. Then the initial necessary and sufficient condition can be similarly stated as: there is only one solution $y=0$ such that $y(1,e^{i\a},e^{i\b},e^{i\gamma})+(1,1,1,1)$ is a product vector. That is, the equation $(y+1)(e^{i\gamma}y+1)=(e^{i\a}y+1)(e^{i\b}y+1)$ has exactly one solution $y=0$. Equivalently, the equation 
\begin{equation}
\label{eq:w3-2}
(e^{i\gamma}-e^{i(\a+\b)})y^2+(1+e^{i\gamma}-e^{i\a}-e^{i\b})y=0
\end{equation}
has exactly one solution $y=0$. Since we have derived that $e^{i\gamma}\neq e^{i(\a+\b)}$ in this case, Eq. \eqref{eq:w3-2} has exactly one solution $y=0$ if and only if $1+e^{i\gamma}-e^{i\a}-e^{i\b}=0$. It follows that $e^{i\gamma}=e^{i\a}+e^{i\b}-1$ which implies that $\abs{e^{i\a}+e^{i\b}-1}=1$, i.e., $(\cos\a+\cos\b -1)^2+(\sin\a+\sin\b)^2=1$. Then one can deduce from the sum-to-product identity: $\cos\a+\cos\b=2\cos\frac{\a+\b}{2}\cos\frac{\a-\b}{2}$ that
\begin{equation}
\label{eq:w3-c2-1}
\begin{aligned}
&\qquad~ (\cos\a+\cos\b -1)^2+(\sin\a+\sin\b)^2=1 \\
&\Longleftrightarrow (\cos\a+\cos\b)^2-2(\cos\a+\cos\b)+(\sin\a+\sin\b)^2=0 \\
&\Longleftrightarrow 1+\cos(\a-\b)-(\cos\a+\cos\b)=0 \\
&\Longleftrightarrow 2\cos^2\frac{\a-\b}{2}-2\cos\frac{\a+\b}{2}\cos\frac{\a-\b}{2}=0 \\
&\Longleftrightarrow \cos\frac{\a-\b}{2}\big(\cos\frac{\a-\b}{2}-\cos\frac{\a+\b}{2}\big)=0.
\end{aligned}
\end{equation}
Moreover, we have to exclude the possibility that $e^{i\gamma}=e^{i(\a+\b)}$. From $e^{i\gamma}=e^{i\a}+e^{i\b}-1$ we may equivalently consider when the two equalities: $\cos\a+\cos\b-1=\cos(\a+\b)$ and $\sin\a+\sin\b=\sin(\a+\b)$ hold simultaneously. One can deduce from another sum-to-product identity: $\sin\a+\sin\b=2\sin\frac{\a+\b}{2}\cos\frac{\a-\b}{2}$ that
\begin{equation}
\label{eq:w3-c2-2}
\begin{aligned}
&\qquad~ \cos\a+\cos\b-1=\cos(\a+\b) \\
&\Longleftrightarrow 2\cos\frac{\a+\b}{2}\cos\frac{\a-\b}{2}=2\cos^2\frac{\a+\b}{2} \\
&\Longleftrightarrow \cos\frac{\a+\b}{2}\big(\cos\frac{\a-\b}{2}-\cos\frac{\a+\b}{2}\big)=0, \\
&\qquad~ \sin\a+\sin\b=\sin(\a+\b) \\
&\Longleftrightarrow 2\sin\frac{\a+\b}{2}\cos\frac{\a-\b}{2}=2\sin\frac{\a+\b}{2}\cos\frac{\a+\b}{2} \\
&\Longleftrightarrow \sin\frac{\a+\b}{2}\big(\cos\frac{\a-\b}{2}-\cos\frac{\a+\b}{2}\big)=0.
\end{aligned}
\end{equation}
Since $\sin\frac{\a+\b}{2}$ and $\cos\frac{\a+\b}{2}$ cannot be zero simultaneously, it follows that $e^{i\gamma}=e^{i(\a+\b)}$ if and only if $\cos\frac{\a-\b}{2}-\cos\frac{\a+\b}{2}=0$. Hence, we conclude that $\cos\frac{\a-\b}{2}-\cos\frac{\a+\b}{2}\neq 0$, i.e., $\a,\b\in(0,2\pi)$, and thus $\cos\frac{\a-\b}{2}=0$ from Eq. \eqref{eq:w3-c2-1}, i.e, $\a-\b=\pm\pi$. It is equivalent to $e^{i\a}+e^{i\b}=0$, and thus $e^{i\gamma}=-1$ from $e^{i\gamma}=e^{i\a}+e^{i\b}-1$. Since $e^{i\gamma}\neq e^{i(\a+\b)}$, it follows that $-e^{i2\a}\neq -1$, and thus $e^{i\a}\neq \pm 1$. This case gives the condition (ii) of this lemma.

This completes the proof.
\qed

\bibliographystyle{unsrt}

\bibliography{multiUNITARYsr2}

\end{document}